\newcommand{\nc}{\newcommand}
\nc{\rnc}{\renewcommand} \nc{\nev}{\newenvironment}
\rnc{\subsection}{\secdef\ssa\ssb}
\nc{\ssa}[2][default]{\par\vspace{1ex}\refstepcounter{subsection}\noindent\textbf{\thesubsection.
#1. }} \nc{\ssb}[1]{\par\vspace{2ex}\noindent\textbf{#1. }}
\rnc{\subsubsection}{\secdef\sssa\sssb}
\nc{\sssa}[2][default]{\par\vspace{1ex}\refstepcounter{subsubsection}\noindent\textit{\thesubsubsection.
#1. }} \nc{\sssb}[1]{\par\vspace{1ex}\noindent\textit{#1. }}
\rnc{\@seccntformat}[1]{{\normalfont\bfseries{\csname
the#1\endcsname}\hspace{1pt}.\hspace{0.4em}}}
\rnc{\section}{\@startsection
        {section}%
        {1}%
        {0mm}%
        {-\baselineskip}%
        {0.5\baselineskip}%
        {\normalfont\normalsize\bfseries\centering}%
}
\renewcommand{\@makecaption}[2]{\begin{center}#1. #2\end{center}}
\newtheorem{theo}{Theorem}[section]
\newtheorem{lem}[theo]{Lemma}
\newtheorem{cor}[theo]{Corollary}
\newtheorem{prop}[theo]{Proposition}
\newtheorem{conj}[theo]{Conjecture}
\theoremstyle{definition}
\newtheorem{defn}[theo]{Definition}
\newtheorem{rem}[theo]{Remark}
\newtheorem{exa}[theo]{Example}
\rnc{\proof}[1][{}]{\smallskip\noindent\textit{Proof #1: }}
\nc{\proofend}{\hfill$\Box$\vspace{\topsep}\par}
\rnc{\labelenumi}{(\arabic{enumi})} \rnc{\labelitemi}{\text{--}}
\rnc{\phi}{\varphi} \rnc{\epsilon}{\varepsilon}
\nc{\bigmid}{\;\big|\;} \nc{\Bigmid}{\;\Big|\;}
\rnc{\max}{\textup{max}} \rnc{\min}{\textup{min}}
\nc{\tw}{\textup{tw}}
\newlength{\probwidth}
\nc{\prob}[3][9]{
\begin{center}
  \normalfont\fbox{
   \begin{tabular}[t]{
     rp{#1cm}}\textit{Instance:}&#2. \\
     \textit{Problem:}&#3
   \end{tabular}}
\end{center}}
\nc{\pprob}[4][9]{
\begin{center}
   \normalfont\fbox{
    \begin{tabular}[t]{
     rp{#1cm}}\textit{Instance:}&#2. \\
     \textit{Parameter:}&#3. \\
     \textit{Problem:}&#4
   \end{tabular}}
\end{center}}
\nc{\nprob}[4][11]{
\begin{center}
  \normalfont\fbox{

\addtolength{\probwidth}{#1cm}\parbox{\probwidth}{\textsc{#2}\\\hspace*{1.5em}
     \begin{tabular}[t]{
      rp{#1cm}}\textit{Instance:}&#3. \\
      \textit{Problem:}&#4
     \end{tabular}}}
\end{center}}
\nc{\npprob}[5][11]{
\begin{center}
  \normalfont\fbox{

\addtolength{\probwidth}{#1cm}\parbox{\probwidth}{\textsc{#2}\\\hspace*{1.5em}
    \begin{tabular}[t]{
     rp{#1cm}}\textit{Instance:}&#3. \\
     \textit{Parameter:}&#4. \\
     \textit{Problem:}&#5
    \end{tabular}}}
\end{center}}
\nc{\nppxrob}[5][9]{ \normalfont\fbox{

\addtolength{\probwidth}{#1cm}\parbox{\probwidth}{\textsc{#2}\\\hspace*{1.5em}
   \begin{tabular}[t]{
    rp{#1cm}}\textit{Instance:}&#3. \\
    \textit{Parameter:}&#4. \\
    \textit{Problem:}&#5
   \end{tabular}}}}
\nc{\nppprob}[5][4]{
\begin{center}
  \normalfont\fbox{

\addtolength{\probwidth}{#1cm}\parbox{\probwidth}{\textsc{#2}\\\hspace*{1.5em}
    \begin{tabular}[t]{
     rp{#1cm}}\textit{Instance:}&#3. \\
     \textit{Parameter:}&#4. \\
     \textit{Problem:}&#5
    \end{tabular}}}
\end{center}}
\nc{\noptprob}[6][9]{
\begin{center}
  \normalfont\fbox{

\addtolength{\probwidth}{#1cm}\parbox{\probwidth}{\textsc{#2}\\\hspace*{1.5em}
    \begin{tabular}[t]{
     rp{#1cm}}\textit{Instance:}&#3. \\
     \textit{Solution:}&#4. \\
     \textit{Cost:}&#5. \\
     \textit{Goal:}&#6.
    \end{tabular}}}
\end{center}}
\nc{\FOR}{\textbf{for}}
\nc{\FORALL}{\textbf{for all}}
\nc{\TO}{\textbf{to}}
\nc{\DO}{\textbf{do}}
\nc{\OD}{\textbf{od}}
\nc{\IF}{\textbf{if}}
\nc{\FI}{\textbf{fi}}
\nc{\THEN}{\textbf{then}}
\nc{\ELSE}{\textbf{else}}
\nc{\WHILE}{\textbf{while}}
\nc{\REPEAT}{\textbf{repeat}}
\nc{\UNTIL}{\textbf{until}}
\nc{\OR}{\textbf{or}}
\nc{\AND}{\textbf{and}}
\nc{\PRINT}{\textbf{print}}
\nc{\im}[1]{\item\hspace{#1cm}}
\nc{\fpcl}[1]{\left[#1\right]_{\text{\upshape fp}}}
\nc{\pr}{\le^{\text{\normalfont fp}}_m} \nc{\FPT}{\mathsf{FPT}}
\nc{\EPT}{\textup{EPT}} \nc{\SUBEPT}{\textup{SUBEPT}}
\nc{\fpt}{\textup{fpt}} \nc{\fptT}{\textup{fpt-T}}
\nc{\W}[1]{\text{$\textup{W}[#1]$}}
\nc{\M}[1]{\text{$\textup{M}[#1]$}}
\nc{\MS}[2]{\text{$\textup{M}^{#1}[#2]$}}
\nc{\MINI}[1]{\mbox{\small \rm MINI[$#1$]}}
\nc{\WP}{\textup{W[P]}} \nc{\AWP}{\textup{AW[P]}}
\rnc{\S}[1]{\text{$\textup{S}[#1]$}} \nc{\SP}{\textup{S[P]}}
\nc{\MP}{\textup{M[P]}}
\nc{\PTIME}{\mathsf{PTIME}}
\nc{\PH}{\mathsf{PH}}
\rnc{\P}{\mathsf{P}}
\nc{\APTIME}{\mathsf{APTIME}}
\nc{\PSPACE}{\mathsf{PSPACE}}
\nc{\NP}{\mathsf{NP}}
\nc{\DTIME}{\mathsf{DTIME}}
\nc{\se}{\subseteq} \nc{\re}{\rightarrow}
\nc{\LOEFF}[1]{{o}^{\rm eff}(#1)}
\nc{\PNPTC}{\mbox{$\textup{P}[{\textsc{tc}}]\ne
\textup{NP}[{\textsc{tc}}]$}}
\nc{\str}[1]{\ensuremath{\mathcal #1}}
\nc{\cls}[1]{\ensuremath{\mathbf #1}}
\nc{\algo}[1]{{\mathbb #1}}
\nc{\ceil}[1]{\left\lceil#1\right\rceil}
\nc{\floor}[1]{\left\lfloor#1\right\rfloor}
\nc{\bende}{\eqno$\Box$} \nc{\benda}{\tag*{$\Box$}}
\nc{\pa}{\kappa}
\nc{\co}{\mathsf{co}\text{-}}
\rnc{\L}{\mathsf{L}}
\nc{\NL}{\mathsf{NL}}
\rnc{\angle}[1]{\left\langle #1\right\rangle}
\nc{\dotcup}{\;\dot\cup\;}
\nc{\para}{\mathsf{para}\text{-}}
\newcommand{\MRDP}{\textup{MRDP}}
\newcommand{\FO}{\mathsf{FO}}
\newcommand{\ONE}{\mathit{ONE}}
\newcommand{\BIT}{\mathit{BIT}}
\newcommand{\inter}[1]{\ensuremath{\mathit #1}}
\newcommand{\C}{\mathsf C}
\newcommand{\dlogtime}{\textup{dlogtime}}
\newcommand{\E}{\mathsf{E}}
\newcommand{\NE}{\mathsf{NE}}
\newcommand{\LINH}{\mathsf{LINH}}
\newcommand{\AC}{\mathsf{AC}}
\newcommand{\paraAC}{\mathsf{para}\text{-}\mathsf{AC}^0}
\nc{\NLINSP}{\mathsf{NLINSPACE}}
\nc{\LINSP}{\mathsf{LINSPACE}}
\newcommand{\XAC}[1]{\mathsf{XAC}^0_{#1}}
\newcommand{\CAC}[1]{\mathsf{AC}^0_{#1}}
\newcommand{\pMC}{\ensuremath{p\text{-}\textsc{MC}}}
\newcommand{\phalt}{\ensuremath{p\text{-}\textsc{Halt}}}
\newcommand{\pdhalt}{\ensuremath{p\text{-}\textsc{DHalt}}}
\newcommand{\ptruth}{\ensuremath{p\text{-}\Delta_0\text{-}\textsc{Truth}}}
\newcommand{\pspec}{\ensuremath{p\text{-}\textsc{Spec}}}
\newcommand{\num}{\mathit{num}}
\newcommand{\un}{\mathit{un}}
\newcommand{\bin}{\mathit{bin}}
\newcommand{\arit}{\textup{ar}}
\newcommand{\rel}{\textup{r}}
\newcommand{\N}{\mathbb{N}}
\newcommand{\pMCFOA}{\pMC(L^\rel_\arit)}
\renewcommand{\le}{\leqslant}
\renewcommand{\ge}{\geqslant}
\title{A parameterized halting problem, $\Delta_0$ truth\\ and the \MRDP\ theorem%
\footnote{A partial conference version appeared as~\cite{chemueyok18}.}
}
\author{Yijia Chen\\
\small Department of Computer Science \\[-0.5ex]
\small Shanghai Jiao Tong University \\[-0.5ex]
\texttt{\small yijia.chen@cs.sjtu.edu.cn}
\and Moritz M\"uller \\
\small  Faculty of Computer Science and Mathematics  \\[-0.5ex]
\small University of Passau \\[-0.5ex]
\texttt{\small moritz.mueller@uni-passau.de}
\and Keita Yokoyama\\
\small Mathematical Institute \\[-0.5ex]
\small Tohoku University \\[-0.5ex]
\texttt{\small keita.yokoyama.c2@tohoku.ac.jp}}
\date{}
\begin{document}

\maketitle

\begin{abstract} We study the parameterized complexity of the problem to decide whether a given natural number $n$ satisfies a given $\Delta_0$-formula $\varphi(x)$; the parameter is the size of $\varphi$.
This parameterization focusses attention on instances where $n$ is large compared to the size of $\varphi$. We show unconditionally that this problem does not belong to the parameterized analogue of~$\AC^0$. From this we derive that certain  natural upper bounds on the complexity of our parameterized problem imply certain separations of classical complexity classes. This connection is obtained via an analysis of a parameterized halting problem. Some of these upper bounds follow assuming that $I\Delta_0$ proves the MRDP theorem in a certain weak sense.  
\end{abstract}

\section{Introduction}

\subsection{The parameterized halting problem}
The complexity of the following parameterized halting problem is still wide
open.
\npprob{\phalt}{$n\in \mathbb N$ in unary and a nondeterministic Turing
machine~$\mathbb M$} {$|\mathbb M|$, the size of $\mathbb M$}{Does $\mathbb
M$ accept the empty input in at most $n$ steps?}
The importance of \phalt\ is derived from its close connections to central
problems in proof complexity and descriptive complexity
theory~\cite{cheflu12,nasremvia05}. Among others, there is a logic for
$\PTIME$ if $\phalt$ can be decided by an algorithm in time $n^{f(|\mathbb
M|)}$ for some function $f: \mathbb N\to \mathbb N$. Sofar, however, such
algorithms have been ruled out only under a certain very strong
\emph{non-standard} complexity-theoretic hypothesis and only for computable
$f$~\cite{cheflu10,cheflu12}.

Thus, lower bounds on  $\phalt$ are poorly understood and of fundamental
interest. A seemingly modest and natural starting point is
\begin{conj}\label{conj:phaltAC}
$\phalt\notin \paraAC$.
\end{conj}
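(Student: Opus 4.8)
\smallskip
\noindent The natural strategy towards Conjecture~\ref{conj:phaltAC} is to transport a lower bound to $\phalt$ from a problem that we can prove, unconditionally, to lie outside $\paraAC$; the obvious candidate is the parameterized $\Delta_0$-truth problem $\ptruth$, for which such a bound is established unconditionally later in this paper. So the plan is to produce an fpt many-one reduction $\ptruth\pr\phalt$ whose underlying map is computable by dlogtime-uniform constant-depth circuits, so that it preserves membership in $\paraAC$: from an instance $(n,\varphi(x))$ of $\ptruth$ one builds a nondeterministic Turing machine $\mathbb M_\varphi$ of size $O(|\varphi|)$ which, run on the empty input, is intended to first write $n$ on its tape, then nondeterministically guess witnesses for the bounded quantifiers of $\varphi$ and verify the quantifier-free matrix, all within a number of steps bounded by some polynomial $p$ in $n$; one then has $\N\models\varphi(n)$ if and only if $\mathbb M_\varphi$ accepts the empty input in at most $p(n)$ steps, while the output instance $\big(1^{p(n)},\mathbb M_\varphi\big)$ carries parameter $O(|\varphi|)$. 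Feeding this into the lower bound for $\ptruth$ would complete the proof.

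The catch --- and the reason this is still only a conjecture --- is that this map is \emph{not} an fpt-reduction, for two intertwined reasons. First, $\phalt$ receives its step budget in unary, so the output instance has size $\Omega(p(n))$; in the regime that makes $\ptruth$ interesting, namely with $n$ large relative to $|\varphi|$ and hence presented in binary, this is an exponential blow-up, which an fpt-reduction must not incur. Second --- and this obstruction survives even if $\ptruth$ is fed its number in unary --- the machine $\mathbb M_\varphi$ has empty input and size bounded solely in terms of $|\varphi|$, so it cannot write $n$ on its tape at all: the number $n$ reaches $\phalt$ only as a step budget, which $\mathbb M_\varphi$ has no way to read off. Both obstructions are facets of one mismatch: the instances of $\ptruth$ that witness its lower bound involve numbers $n$ that dwarf $|\varphi|$, whereas $\phalt$ only ever sees step budgets polynomial in the instance size.

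Two lines of attack then present themselves, and the hard part will be to push either through. One is to restrict the source problem to inputs $(n,\varphi)$ in which $n$ has a short program --- say $n=2^m$, or $n$ a value of some fixed, rapidly growing recursive function --- so that $\mathbb M_\varphi$ can regenerate $n$ from a hard-coded seed of size $O(\log\log n)$; this succeeds only if the lower bound for $\ptruth$ already holds on such a sparse set of numbers, which is not apparent from the argument for it. The other is to attack $\phalt$ head-on: membership in $\paraAC$ amounts to a $\dlogtime$-uniform $\AC^0$ predicate together with a computable precomputation on $|\mathbb M|$, and one would argue that no such device can exist, since $\AC^0$ is too weak to simulate the input machine for $n$ steps, while a computable precomputation on $|\mathbb M|$ cannot tabulate the halting times of machines of size $|\mathbb M|$ --- these outgrow every recursive bound --- so that the circuits would be left to decide bounded nondeterministic halting ``obliviously''. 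Turning this heuristic into a proof is an $\AC^0$ lower bound genuinely entangled with the uniformity condition and with Busy-Beaver-type uncomputability --- precisely the kind of barrier that underlies all the presently open problems about the complexity of $\phalt$ --- and it is here that we expect the main difficulty to lie.
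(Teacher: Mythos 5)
There is nothing to compare your attempt against: the statement is Conjecture~\ref{conj:phaltAC}, and the paper offers no proof of it --- only consequences and conditional evidence --- so you are right not to claim one. Your writeup is an accurate assessment of the state of affairs, and in fact the paper makes your intuition precise in a way worth noting: by Theorem~\ref{thm:phaltequal}, $\phalt\notin\paraAC$ already implies $\NE\not\subseteq\LINH$, and combining Theorem~\ref{thm:truthAC} with the reducibility machinery of Sections~3 and~7 shows that your proposed transfer step --- a $\paraAC$-preserving reduction from $\ptruth$ to $\phalt$ (or even to $\phalt_=$) --- would by itself yield that separation. So the ``transport the $\ptruth$ lower bound'' plan is not merely technically blocked; carrying it out is at least as hard as a major open separation, which is exactly the paper's explanation for why the conjecture has resisted proof. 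The paper's positive evidence is of a different kind: Theorem~\ref{thm:immu} derives $\phalt\notin\paraAC$ from the existence of an $\AC^0$-bi-immune set in $\NP$.

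Two corrections to your obstruction analysis. First, in $\ptruth$ the number $n$ is given in \emph{unary}, so the exponential blow-up you worry about in your first obstruction does not arise; the real barrier is the one you state second, and it can be sharpened: for a fixed machine the set of step budgets accepted by $\phalt$ is upward closed (the problem is \emph{slicewise monotone}), whereas $\{n\mid\N\models\varphi(n)\}$ need not be, and indeed the paper shows $\phalt$ is precisely the hardest slicewise monotone almost tally problem in $\para\NP$, while reducing the non-monotone $\ptruth$ even to $\phalt_=$ already implies $\NE\not\subseteq\LINH$. Second, your ``head-on'' attack must contend with the fact, noted in the introduction and in Theorem~\ref{thm:xac}, that $\phalt$ lies in the nonuniform/slicewise relaxations of $\paraAC$ (the depth-$2$ families $F_{n,k}$, i.e.\ $\phalt\in\XAC{2}$): there is no circuit-size or depth obstruction at all, so any proof of the conjecture must exploit $\dlogtime$ uniformity together with the computability of the precomputation --- which is exactly how the paper proves the weaker unconditional bound $\ptruth\notin\paraAC$, via undefinability of truth, and why that argument does not carry over to $\phalt$.
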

Here, $\paraAC$ is the analogue of (uniform) $\AC^0$ in the parameterized
world. One easily sees that $\phalt$ is in a \emph{nonuniform} version of
$\paraAC$: for fixed $k\in\N$, let $\mathbb M_{k,0}, \ldots, \mathbb
M_{k,\ell_k-1}$ list all nondeterministic Turing machines of size $k$ and let
$n_{k,i}$ be the minimal $n$ such that $\mathbb M_{k,i}$ accepts the empty
input in $n$ steps; if there is no such $n$, let $n_{k,i}:=\infty$. Then, on
instances $(1^n,\mathbb M)$ with parameter $|\mathbb M|=k$, $\phalt$ is
decided by the following family of simple Boolean functions:
\begin{align*}
F_{n,k}(& x_0\ldots x_{n-1}, y_0\ldots y_{k-1})
 = \bigvee_{\substack{\text{$i<\ell_k$ such} \\
     \text{that $n_{k,i}\le n$}}}
\big(x_0\ldots x_{n-1}= 1^n \wedge y_0\ldots y_{k-1}= \mathbb M_{k,i}\big).
\end{align*}
Observe that $F_{n,k}$ can be understood as a circuit of depth $2$ and size
$O(k\cdot \ell_k\cdot n)$.

Conjecture~\ref{conj:phaltAC} is highly plausible and might appear to be
within reach because $\AC^0$ is well-understood and,
in particular, \cite{cflowerac} establishes (unconditional) $\paraAC$ lower
bounds for many well-studied parameterized problems. It deserves some genuine
interest because its failure implies that $\AC^0$, or equivalently,
\emph{$(+, \times)$-invariant $\FO$} is captured by some logic. However, we
failed to prove the conjecture after years of attempts and only now
understand why: it implies that nondeterministic exponential time $\NE$ is
distinct from the linear time hierarchy $\LINH$. This connection can be
further tightened by considering the following variant of $\phalt$:
\npprob{$\phalt_=$}{$n\in \mathbb N$ in unary and a nondeterministic Turing
machine $\mathbb M$}{$|\mathbb M|$}{Does $\mathbb M$  accept the empty input
in \emph{exactly} $n$ steps.}

While the classical problems underlying $\phalt_=$ and $\phalt$ are easily
seen to be equivalent, we shall see that their parameterized versions behave
quite differently. In fact, $\phalt_=$ appears to be
harder than $\phalt$, e.g., a simple Boolean function family like $F_{n,k}$
for $\phalt$ is not known to exist for $\phalt_=$. We refer to
Section~\ref{sec:concl} for some discussion.

We show:

\begin{theo}\label{thm:phaltequal}\
\begin{enumerate}\itemsep=0pt
\item[(i)] $\phalt_=\in \paraAC$ if and only if $\NE\subseteq \LINH$.

\item[(ii)] $\phalt_=\in \paraAC$ implies $\phalt\in \paraAC$.
\end{enumerate}
\end{theo}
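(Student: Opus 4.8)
\medskip\noindent The plan is to dispose of (ii) and of the ``if'' direction of (i) -- which are soft -- and then concentrate on the ``only if'' direction, which carries the content. For (ii): a nondeterministic machine $\mathbb M$ accepts the empty input within $n$ steps if and only if it does so in \emph{exactly} $n'$ steps for some $n'\le n$. So from a $\paraAC$ family $(D_{s,k})$ for $\phalt_=$ of depth $O(1)$ and size $h(k)\cdot s^{O(1)}$ one decides $\phalt$ on an instance $(1^n,\mathbb M)$ with $|\mathbb M|=k$ by the disjunction over $n'\in\{0,\dots,n\}$ of copies of $D_{n'+O(k),k}$ fed the length-$n'$ prefix of $1^n$ together with $\mathbb M$; this adds $1$ to the depth and a factor $n+1$ to the size, and the uniformity is inherited. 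Hence $\phalt\in\paraAC$.

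For the ``if'' direction of (i), assume $\NE\subseteq\LINH$. Re-encoding the step count in binary, the set
\[
A:=\bigl\{\,\langle\mathbb M,\bin(n)\rangle : \mathbb M\text{ accepts }\emptyset\text{ in exactly }n\text{ steps}\,\bigr\}
\]
lies in $\NE$: such an instance has length $\ell=\Theta(|\mathbb M|+\log n)$, so $n\le 2^{\ell}$ and a length-exactly-$n$ accepting computation of $\mathbb M$ has $2^{O(\ell)}$ bits and can be guessed and locally verified in time $2^{O(\ell)}$. By hypothesis $A$ is decided by an alternating machine running in time $O(\ell)$ with $O(1)$ alternations; unfolding its computation between successive alternations into unbounded fan-in $\bigvee$/$\bigwedge$-gates gives a dlogtime-uniform circuit family of depth $O(1)$ and size $2^{O(\ell)}=2^{O(|\mathbb M|)}\cdot n^{O(1)}$ for $A$. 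Since in a $\phalt_=$-instance $(1^n,\mathbb M)$ the number $n$ is determined by the instance length and the parameter $k=|\mathbb M|$, in the circuit serving all instances of a given length the bits of $\bin(n)$ become constants, the $\bin(\mathbb M)$-inputs are rewired to the bits of $\mathbb M$, and the $1^n$-part goes unread; this yields a $\paraAC$ family for $\phalt_=$, so $\phalt_=\in\paraAC$.

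For the ``only if'' direction of (i), assume $\phalt_=\in\paraAC$, say via circuits $(D_{s,k})$ of depth $\le d$ and size $\le h(k)\cdot s^{O(1)}$ whose direct-connection language is decidable in time $h(k)+O(\log s)$. Let $L\in\NE$ be arbitrary, decided by a nondeterministic machine $\mathbb N$ in time $2^{c|w|}$. The decisive step is to reduce ``$w\in L$'' to $\phalt_=$ by a \emph{constant-size} machine, so that the relevant value of $h$ is a constant independent of $|w|$. Build one fixed machine $\mathbb M^*$, depending only on $\mathbb N$, that on empty input: (1)~nondeterministically writes a string $u$ on its tape, the writing being padded so that all computations that have produced $u$ after this phase have done so in the same number of steps $G(u)$, this $G(u)$ being explicit, injective in $u$, polynomial-time computable, and $\le 2^{O(|u|)}$; (2)~deterministically simulates $\mathbb N$ on $u$ in lock-step with a $2^{c|u|}$-counter, each tick padded to a fixed length so that this phase lasts exactly $2^{c|u|}\cdot O(|u|)$ steps however $\mathbb N$ branches; and (3)~accepts iff the simulation ended accepting. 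Choosing $G$ so that the total branch lengths $N(u)$ (phase~(1), plus phase~(2), plus a constant) are pairwise distinct -- automatic for moderate $G$, as phase~(2) already contributes at least $2^{c|u|}$ steps -- one gets
\[
w\in L\iff \mathbb M^*\text{ accepts }\emptyset\text{ in exactly }N_w\text{ steps},
\]
with $N_w:=N(w)=2^{\Theta(|w|)}$ and $\bin(N_w)$ computable from $w$ in polynomial time. Feed $(1^{N_w},\mathbb M^*)$ to $D_{N_w+O(1),\,|\mathbb M^*|}$: as $|\mathbb M^*|=O(1)$, this circuit has size $2^{O(|w|)}$, depth $\le d$, and a direct-connection language decidable in time $O(|w|)$. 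An alternating machine decides ``$w\in L$'' by evaluating it top-down -- existentially guessing a satisfied child at each $\bigvee$-gate, branching universally at each $\bigwedge$-gate -- descending the $d$ levels and spending $O(|w|)$ time at each level to query the uniformity machine and, at the leaves, to read off the predetermined bits of $(1^{N_w},\mathbb M^*)$. This is an $\LINH$ algorithm for $L$, so $L\in\LINH$; as $L\in\NE$ was arbitrary, $\NE\subseteq\LINH$.

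The crux is the construction of $\mathbb M^*$: the $\NE$-instance $w$ must be encoded into the \emph{number of steps} of a finite machine -- encoding it into the machine would force the parameter, hence $h$, to grow with $|w|$ and destroy the $\LINH$ bound -- and this is precisely where ``in \emph{exactly} $n$ steps'' is indispensable, since with ``in at most $n$ steps'' the step count of $\mathbb M^*$ could only witness that \emph{some} short string lies in $L$. The remaining ingredients -- that padding makes the length of every computation branch of $\mathbb M^*$ depend on $u$ alone, that $A\in\NE$, and the routine two-way translation between alternating linear-time machines and dlogtime-uniform $\AC^0$ circuits -- are standard, but must be set up so that the encodings match the definition of $\paraAC$.
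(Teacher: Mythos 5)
Your proposal is correct and follows essentially the same route as the paper: the soft equivalence for (ii), the binary re-encoding plus hardwiring of $n$ for the ``if'' direction, and, for the ``only if'' direction, the key construction of a single fixed machine $\mathbb M^*$ that guesses the $\NE$-instance and pads so that the instance is recoverable from the \emph{exact} number of steps, after which the constant-parameter slice of the $\paraAC$ family yields a $\LINH$ decision procedure. The only difference is presentational: the paper channels the $\AC^0$/$\LINH$ translations through the cited results of Allender--Gore and Barrington--Immerman--Straubing (via $\un(\cdot)$, first-order sentences and interpretations), whereas you redo these simulations directly; your remaining loose ends (injectivity of the padded step-count across lengths, the alternating evaluation of the uniformity machine within linear time) are exactly the standard details those citations cover.
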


\subsection{The \MRDP\ theorem}
Thus, to settle Conjecture \ref{conj:phaltAC} one might try to first separate
$\NE$ from $\LINH$. We tie this question to the provability of the
Matiyasevich-Robinson-Davis-Putnam (\MRDP) theorem~\cite{mrdp} in bounded
arithmetic. This theorem states that $\Sigma_1$-definable sets are
\emph{Diophantine} and it is a long standing open problem whether it is
provable in $I\Delta_0$, i.e., Peano arithmetic with induction restricted to
$\Delta_0$-formulas.

Wilkie observed~\cite{wil80} that a positive answer would imply the collapse
of $\LINH$ to $\mathsf{NLIN}$ (nondeterministic linear time), and therefore
$\NP= \co\NP$ and $\NE\not\subseteq \LINH$. We derive the latter consequence
from an apparently much weaker provability assumption:

\begin{theo}\label{thm:mrdp}
If $I\Delta_0$ proves \MRDP\ for small numbers, then $\NE\not\subseteq
\LINH$.
\end{theo}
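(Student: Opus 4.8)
The plan is to combine Theorem~\ref{thm:phaltequal}(i) with our unconditional lower bound $\ptruth\notin\paraAC$. By Theorem~\ref{thm:phaltequal}(i) the conclusion $\NE\not\subseteq\LINH$ is equivalent to $\phalt_=\notin\paraAC$, so it suffices to derive the latter from the provability hypothesis. Since $\paraAC$ is closed under parameterized $\AC^0$ (fpt\nobreakdash-$\AC^0$) many\nobreakdash-one reductions, for this it is enough to exhibit, assuming $I\Delta_0$ proves $\MRDP$ for small numbers, a reduction $\ptruth\pr\phalt_=$: composing it with a hypothetical $\paraAC$\nobreakdash-algorithm for $\phalt_=$ would place $\ptruth$ in $\paraAC$, a contradiction.

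First I would extract the arithmetical content. From the hypothesis, every $\Delta_0$\nobreakdash-formula $\varphi(x)$ comes with a Diophantine (purely existential) formula $\exists\bar y\,\big(p_\varphi(x,\bar y)=q_\varphi(x,\bar y)\big)$ that $I\Delta_0$ proves equivalent to $\varphi(x)$ on the relevant ``small'' range of $x$. Applying Parikh's theorem to the $\forall\exists\Delta_0$\nobreakdash-consequence $\forall x\,\big(\varphi(x)\to\exists\bar y\,(p_\varphi(x,\bar y)=q_\varphi(x,\bar y))\big)$ of $I\Delta_0$ yields a term $t_\varphi$ bounding the witness, so that over $\N$ and on that range
\[
  \varphi(x)\ \Longleftrightarrow\ \exists\bar y\le t_\varphi(x)\,\big(p_\varphi(x,\bar y)=q_\varphi(x,\bar y)\big).
\]
Two features matter: a witness $\bar y$ for $\varphi(n)$ has magnitude polynomial in $n$ (the degrees of $t_\varphi,p_\varphi,q_\varphi$ depend on $\varphi$, but only as a multiplicative constant in the bit\nobreakdash-length), and the identity $p_\varphi(n,\bar y)=q_\varphi(n,\bar y)$ can then be checked in time $c_\varphi\cdot|n|^{O(1)}$ with the exponent fixed once and for all (schoolbook multiplication). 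This is exactly where ``$\MRDP$ for small numbers'' is used: plain $\MRDP$ over $\N$ only asserts that a witness exists, with no bound, so the verification could not be kept within a fixed polynomial budget.

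Next I would build the reduction. Fix a $\Delta_0$\nobreakdash-formula $\varphi$ of size $k$ and hard\nobreakdash-wire $p_\varphi,q_\varphi,t_\varphi$ into a nondeterministic machine $\mathbb M_\varphi$ whose size is bounded in terms of $k$ alone --- legitimate because a parameterized $\AC^0$ reduction may depend arbitrarily, even non\nobreakdash-constructively, on the parameter. On the empty input $\mathbb M_\varphi$ guesses a number $m$ and a candidate $\bar y$, checks $p_\varphi(m,\bar y)=q_\varphi(m,\bar y)$, and then counts so that every accepting run has length exactly $g_\varphi(m)$ for a fixed strictly increasing, $\AC^0$\nobreakdash-computable function $g_\varphi$ (say $g_\varphi(m)=c_\varphi m^{2}$) dominating the work just described. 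By the ``$\Leftarrow$'' half of the displayed equivalence, $\mathbb M_\varphi$ accepts the empty input in exactly $g_\varphi(n)$ steps iff $\N\models\varphi(n)$, so $(\varphi,n)\mapsto(1^{g_\varphi(n)},\mathbb M_\varphi)$ is the desired fpt\nobreakdash-$\AC^0$ reduction $\ptruth\pr\phalt_=$. The finitely many $n$ below the threshold at which the bit\nobreakdash-length and padding estimates take hold are handled by hard\nobreakdash-wiring, again harmless for a parameterized reduction.

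The main obstacle is making the first step precise: ``$\MRDP$ for small numbers'' must be formulated so that the equivalence is genuinely $I\Delta_0$\nobreakdash-provable, so that Parikh's theorem applies and delivers the polynomial witness bound, and so that the range of $x$ on which it holds covers enough instances of $\ptruth$ that the reduction preserves $\paraAC$\nobreakdash-hardness. The construction in the third step is mostly bookkeeping, but the point that the verification --- and hence the step count $g_\varphi(n)$ of the $\phalt_=$\nobreakdash-instance --- stays polynomial in $n$ with the exponent uniform is precisely what makes a reduction to $\phalt_=$ possible, rather than to some exponential\nobreakdash-time halting variant; making the accepting run length hit $g_\varphi(n)$ exactly is the delicate part of that construction.
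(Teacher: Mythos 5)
Your overall route is the same as the paper's: from the provability hypothesis you use Parikh's theorem to get bounded Diophantine witnesses for $\Delta_0$ truth, and you then combine the unconditional lower bound $\ptruth\notin\paraAC$ (Theorem~\ref{thm:truthAC}) with the $\phalt_=$ analysis (Theorem~\ref{thm:phaltequal}(i)). The only difference is organizational: the paper concludes $\ptruth\in\para\NP$ and invokes the general fact that every almost tally problem in $\para\NP$ reduces to $\phalt_=$ (Lemma~\ref{lem:phaltalmtally}, Corollary~\ref{cor:almtallyNE}), whereas you inline that reduction for $\ptruth$ itself; the machine $\mathbb M_\varphi$ you describe is essentially the one from the proof of Lemma~\ref{lem:phaltalmtally}.

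Two points need correcting, though neither breaks the argument. First, under Definition~\ref{def:smallnumber} the theory does \emph{not} prove $\forall x\,(\varphi(x)\to\exists\bar y\;p=q)$; it proves the guarded implication $2^{x}{\le}z\wedge\varphi(x)\to\exists\bar y\;p(x,\bar y){=}q(x,\bar y)$. Parikh's theorem therefore yields a bounding term $r(x,z)$ in \emph{both} variables, and instantiating $z:=2^n$ over $\N$ gives witnesses of magnitude up to $r(n,2^n)$, i.e.\ exponential in $n$ --- bit-length $O(|r|\cdot|\bar y|\cdot n)$ --- not ``magnitude polynomial in $n$'' as you assert. Your construction survives this correction: guessing $O_\varphi(n)$ bits and checking the term equality still takes time $f(|\varphi|)\cdot n^{O(1)}$ with an absolute exponent, which is all that the padding function $g_\varphi$ and the map to a unary instance of $\phalt_=$ require. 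Second, the dependence of the reduction on the parameter cannot be ``non-constructive'': $\paraAC$ is a uniform class, and closure under the paper's eventually definable reductions (Lemma~\ref{lem:ACclosed}) needs the map $\varphi\mapsto(p,q,r)$, equivalently $\varphi\mapsto\mathbb M_\varphi$, to be computable --- this is exactly why the paper's stronger statement assumes the theory is computably enumerable, which holds for $I\Delta_0$, so the slip is harmless here but the remark as stated is not.
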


Roughly, that \emph{$I\Delta_0$ proves \MRDP\ for small numbers} means that
the equivalence of any $\Delta_0$-formula $\varphi(\bar x)$ to some
Diophantine formula is proved in $I\Delta_0$ for all $\bar x$ of logarithmic
order. Model-theoretically, the equivalence holds in any $I\Delta_0$-model
for all $\bar x$ from the initial segment of numbers $x$ such that $2^x$
exists, while proof-theoretically, we allow an
$I\Delta_0$-proof to use exponentiation, but only once.

Such limited use of exponentiation has been studied in bounded
arithmetic~\cite{kra}. An unlimited use of exponentiation is sufficient:
Gaifman and Dimitracopoulos~\cite{gaifdim82} showed that $I\Delta_0+\forall
x\exists y\ 2^x{=}y$ does prove \MRDP. Kaye~\cite{kaye} proved \MRDP\ using
only induction for bounded existential formulas plus an axiom stating the
totality of a suitable function of exponential growth. It is asked
in~\cite[p.188]{gaifdim82} whether $I\Delta_0$ plus the totality of $x^{\log
x}$, or of $x^{\log\log x}$ etc. proves \MRDP, and it is easy to see that if
the answer to the first (or any) of these questions is positive, then
$I\Delta_0$ proves \MRDP\ for small numbers. Intuitively, this provability is
much weaker than $I\Delta_0$-provability.

\subsection{$\Delta_0$ truth}
Theorem~\ref{thm:mrdp} follows from our main result concerning the complexity
to decide the truth of $\Delta_0$-sentences:
\npprob[10.5]{\ptruth}{$n\in \mathbb N$ in unary and a $\Delta_0$-formula
$\varphi(x)$}{$|\varphi|$, the size of $\varphi$}{$\N\models\varphi(n)$\ ?}
Intuitively, taking $|\varphi|$ as the parameter means shifting attention to
inputs where $n$ is much larger than $|\varphi|$. This is a natural focus.
Classical work of Paris and Dimitracopolous~\cite{paris} took $n$ to be
nonstandard and related the complexity of truth definitions for
$\Delta_0$-formulas to the complexity-theoretic hypotheses that $\LINH$ or
$\PH$ does not collapse.

Wilkie proved a weak version of the former hypothesis by showing that
$\ptruth$ restricted to quantifier-free formula inputs can be decided in
space $f(k)+O(\log n)$ where $k:= |\varphi|$ is the parameter and $f:\N\to
\N$ a computable function~\cite[Proof of Lemma 3.1]{wil80}.
The straightforward algorithm decides $\ptruth$ in space $f(k)\cdot \log n$.
Can $\ptruth$  be decided in space $f(k)+ O(\log n)$? Maybe with
nondeterminism? Can it be decided in time $f(k)\cdot n^{O(1)}$? Maybe with
nondeterminism, i.e., is $\ptruth\in
\para\NP$?

At present all these questions are wide open. Our main result
(Theorem~\ref{thm:truthupper}) shows that such \emph{upper bounds} on the
parameterized complexity of $\ptruth$ imply \emph{lower bounds} in classical
complexity theory. Notably,

\begin{theo}\label{thm:truthNP}
If $\ptruth\in \para\NP$, then $\NE\not\subseteq \LINH$.
\end{theo}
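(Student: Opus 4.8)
The plan is to derive $\NE\not\subseteq\LINH$ from the hypothesis $\ptruth\in\para\NP$ by pushing the parameterized upper bound down to a \emph{uniform} classical upper bound on every $\Delta_0$-definable set of numbers, and then colliding this with the nondeterministic time hierarchy theorem. For a $\Delta_0$-formula $\varphi(x)$ let $S_\varphi:=\{1^n\mid\N\models\varphi(n)\}$ be its \emph{unary slice} and $B_\varphi:=\{\bin(n)\mid n\ge 1,\ \N\models\varphi(n)\}$ its \emph{binary slice}. The first ingredient is the classical correspondence between $\Delta_0$-definability and the linear time hierarchy: going back to Wrathall's characterization of $\LINH$ as the class of rudimentary sets, together with the standard translation between rudimentary string relations and $\Delta_0$-definable sets of numbers under binary notation, every set in $\LINH$ is of the form $B_\varphi$ for some $\Delta_0$-formula $\varphi$. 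Getting this statement exactly right is the first thing to pin down, the only delicate point being the coding conventions: multiplication is a basic symbol and ``bounded'' means bounded by a polynomial term, so that length-bounded string quantifiers over strings of length $O(\log n)$ match number quantifiers bounded by $\mathrm{poly}(n)$. This is classical but easy to state imprecisely.

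The second, and really the only computational, step converts $\ptruth\in\para\NP$ into a bound on the $B_\varphi$ with a \emph{constant that does not depend on $\varphi$}. By definition of $\para\NP$ there are a computable $f$ and a polynomial of some fixed degree $c$ so that ``$\N\models\varphi(n)$?'' is decided by a nondeterministic algorithm in at most $f(|\varphi|)\cdot(n+|\varphi|)^{c}$ steps on input $(1^n,\varphi)$. Because $n$ is presented in \emph{unary}, for each fixed $\varphi$ this says $S_\varphi\in\mathsf{NTIME}(O(n^{c}))$, where the hidden constant absorbs $f(|\varphi|)$ and $|\varphi|$ but the exponent $c$ is the same for all $\varphi$. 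Given $\bin(n)$ of length $\ell$ one recovers $n<2^{\ell}$, runs this algorithm on the input $1^{n}$ (which need not be written out: reading position $i$ just tests $i\le n$), and finishes in nondeterministic time $\mathrm{poly}(\ell)\cdot O\!\big((2^{\ell})^{c}\big)\le 2^{(c+1)\ell}$ for all large $\ell$ (small $\ell$ being handled by a finite table). Hence $B_\varphi\in\mathsf{NTIME}(2^{(c+1)\ell})$, again with the exponent $c+1$ uniform over all $\Delta_0$-formulas. Combining this with the first step: if $\ptruth\in\para\NP$ then there is a single constant $c$ with $\LINH\subseteq\mathsf{NTIME}(2^{(c+1)\ell})$.

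It now suffices to note that the nondeterministic time hierarchy theorem (Cook; Seiferas--Fischer--Meyer) yields $\mathsf{NTIME}(2^{(c+2)\ell})\not\subseteq\mathsf{NTIME}(2^{(c+1)\ell})$, while trivially $\mathsf{NTIME}(2^{(c+2)\ell})\subseteq\NE$, so $\NE\not\subseteq\mathsf{NTIME}(2^{(c+1)\ell})$; were $\NE\subseteq\LINH$ we would get $\NE\subseteq\LINH\subseteq\mathsf{NTIME}(2^{(c+1)\ell})$, a contradiction. Therefore $\ptruth\in\para\NP$ implies $\NE\not\subseteq\LINH$. I expect the main obstacle to be the first step: correctly invoking the $\LINH$-versus-$\Delta_0$ correspondence and, in particular, the inclusion $\LINH\subseteq\{B_\varphi\mid\varphi\text{ a }\Delta_0\text{-formula}\}$ with the present conventions (the converse inclusion, which is not used here, is the more routine one). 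The remaining potential pitfall is purely a matter of bookkeeping in the second step, namely making sure that the polynomial degree handed to us by $\para\NP$-membership is genuinely independent of $\varphi$ --- the whole argument breaks down if that exponent is allowed to grow with $|\varphi|$, and it is precisely the unary presentation of $n$ that prevents this.
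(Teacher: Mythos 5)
Your argument is correct, but it takes a genuinely different route from the paper's. The paper derives the statement from two pieces of machinery built there: the unconditional lower bound $\ptruth\notin\paraAC$ (Theorem~\ref{thm:truthAC}, proved by a diagonalization in a nonstandard elementary extension of $\N$ via undefinability of truth, together with the \cite{BIS} characterization of $\AC^0$), and the equivalence that $\NE\subseteq\LINH$ holds iff every almost tally problem in $\para\NP$ lies in $\paraAC$ (Corollary~\ref{cor:almtallyNE}, obtained from the analysis of $\phalt_=$ and eventually definable reductions); since $\ptruth$ is almost tally, $\ptruth\in\para\NP$ together with $\NE\subseteq\LINH$ would place it in $\paraAC$, a contradiction. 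You instead argue directly: every $Q\in\LINH$ is $\Delta_0$-definable under binary notation (Wrathall/Bennett --- exactly the fact the paper invokes as \cite[Ch.V, Lemma~2.13]{hp} in the proof of Theorem~\ref{thm:xac}); the hypothesis $\ptruth\in\para\NP$ supplies a single exponent $c$, independent of the formula precisely because $n$ is given in unary and $\para\NP$ means time $f(\kappa(x))\cdot|x|^{O(1)}$ with one fixed exponent, so that every such $Q$ lands in $\mathsf{NTIME}\big(2^{(c+1)\ell}\big)$; and the Seiferas--Fischer--Meyer hierarchy theorem then rules out $\NE\subseteq\LINH\subseteq\mathsf{NTIME}\big(2^{(c+1)\ell}\big)$. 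Both of the points you flagged as delicate are fine. What each approach buys: yours is shorter and uses only classical ingredients (the diagonalization is delegated to the nondeterministic time hierarchy theorem rather than to Tarski undefinability), and it adapts with obvious changes to the other items of Theorem~\ref{thm:truthupper}; the paper's route additionally yields the unconditional separation $\ptruth\notin\paraAC$, the exact characterization of $\phalt_=$ (Theorem~\ref{thm:phaltequal}), and the two-sided equivalence of Corollary~\ref{cor:almtallyNE}, which apply uniformly to all almost tally problems (e.g.\ $\pspec$). One cosmetic repair: as stated, ``every set in $\LINH$ is of the form $B_\varphi$'' is not literally true, since $\bin(n)$ never has leading zeros and misses the empty string; use the paper's convention instead, i.e.\ $x\in Q\iff\N\models\varphi(\num(x))$ where $\num(x)$ is the number with binary expansion $1x$. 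Nothing downstream changes: deciding $x\in Q$ is still a single truth query about a number below $2^{|x|+1}$.
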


Theorem~\ref{thm:truthNP} follows from our analysis of $\phalt_=$ and the
following unconditional lower bound:
\begin{theo}\label{thm:truthAC}
$\ptruth\not\in\paraAC$.%
\end{theo}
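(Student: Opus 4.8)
The plan is to prove $\ptruth\notin\paraAC$ by diagonalisation, using the fact that the acceptance predicate of a \emph{uniform constant-depth} circuit family is itself expressible by a $\Delta_0$-formula.

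Suppose for contradiction that $\ptruth\in\paraAC$, witnessed by a constant depth $d$, a constant exponent $c$, a computable size bound $h$, and a (para-)dlogtime machine $M$ for the direct connection language of the circuit family; thus the circuit $\C$ assigned to an instance $(1^n,\varphi)$ has depth $d$, size at most $h(|\varphi|)\cdot(n+|\varphi|)^c$, accepts iff $\N\models\varphi(n)$, and has its gate types and wiring decided by $M$ in time $h(|\varphi|)+O(\log n)$. The first, and main, step is to show that for any fixed $\Delta_0$-formula $\varphi$ there is a $\Delta_0$-formula $\Theta(x)$ with $\N\models\Theta(n)$ iff $\C$ accepts $(1^n,\varphi)$, whose construction depends only on $d,c,M$ and on the number $h(|\varphi|)$. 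Evaluating a depth-$d$ circuit unfolds into $d$ nested \emph{bounded} quantifiers: ``gate $g$ outputs $1$'' is a bounded $\exists$/$\forall$ (alternating with the levels) over the indices $g'$ wired into $g$; the index length is bounded by the running time of $M$, i.e.\ by $h(|\varphi|)+O(\log n)$, so the quantifier bound is a genuine polynomial term in $n$ with the constant $h(|\varphi|)$ hard-wired. The wiring/type predicate ``$M$ accepts $(g,g',\dots)$'' is a linear-time predicate in the index length $O_{|\varphi|}(\log n)$, hence is defined by a fixed $\Delta_0$-formula by the arithmetisation underlying Wrathall's theorem that $\Delta_0$-truth equals the linear-time hierarchy; at the input layer one reads off the $1$'s of $1^n$ and the bits of the hard-wired G\"odel number of $\varphi$, which is $\Delta_0$. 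Carrying out this bookkeeping inside bounded arithmetic — sequence coding, pairing, layer counters, negation flags, and, critically, pinning down that the ambient uniformity of $\paraAC$ really makes the connection language \emph{linear}-time (not merely polynomial-time) in the index length, so that it is $\Delta_0$ at all — is where I expect the work to lie.

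Given this, I would invoke Kleene's recursion theorem (rather than the plain diagonal lemma, precisely because the size bound involves the non-elementary $h$): the map sending a $\Delta_0$-formula $\psi$ to the $\Delta_0$-formula $\lnot\Theta_\psi(\angle{x,\psi})$, where $\Theta_\psi$ is the reflecting formula built from $d,c,M$ and the constant $h(|\psi|)$ and $\psi$ is identified with its G\"odel number, is computable, so it has a fixed point: a $\Delta_0$-formula $\delta(x)$ with $\N\models\delta(n)\iff\N\models\lnot\Theta_\delta(\angle{n,\delta})\iff\C$ \emph{rejects} $(1^n,\delta)$, for every $n\in\N$. But the assumed correctness of the family gives $\N\models\delta(n)\iff\C$ \emph{accepts} $(1^n,\delta)$. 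Taking $n=0$ yields a contradiction, so no such circuit family exists and $\ptruth\notin\paraAC$. Finally, it is worth noting why this argument does not settle Conjecture~\ref{conj:phaltAC}: there the diagonal object would be a Turing machine forced to evaluate, within $n$ steps, a circuit of size $h(k)\cdot n^{c}$ — impossible for $c\ge2$ — whereas a $\Delta_0$-formula can name the circuit's output with no step budget at all.
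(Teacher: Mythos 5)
The arithmetization half of your plan (for fixed $\varphi$, the acceptance of the slice circuit on $(1^n,\varphi)$ is $\Delta_0$-definable in $n$) is fine in substance, but it is not where the content lies: under the assumed correctness of the family, ``the circuit accepts $(1^n,\varphi)$'' is equivalent to $\N\models\varphi(n)$, so a formula $\Theta_\varphi$ exists trivially ($\varphi$ itself works). The genuine gap is the fixed-point step. Kleene's recursion theorem gives fixed points for indices of partial computable functions; it does \emph{not} give, for an arbitrary computable map $\psi\mapsto F(\psi)$ on $\Delta_0$-formulas, a formula $\delta$ with $\N\models\forall x\,(\delta(x)\leftrightarrow F(\delta)(x))$ --- syntactic negation $\psi\mapsto\neg\psi$ is a counterexample. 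Your map is, modulo the correctness you are trying to refute, exactly of that kind: $\Theta_\psi$ is equivalent to $\psi$, so $\psi\mapsto\neg\Theta_\psi$ is semantically the negation map and provably has \emph{no} semantic fixed point. Hence the existence of $\delta$ would have to be derived from the syntactic shape of the map, independently of correctness, and the only standard mechanism for that (the diagonal lemma) requires the dependence on $\psi$ to be mediated by substituting the numeral of $\num(\psi)$ into one fixed formula $\alpha(x,y)$. Your $\Theta_\psi$ is not of this shape precisely because you hard-wire the number $h(|\psi|)$ into its quantifier bounds: $h$ is an arbitrary computable function, so these bounds cannot be majorized by an $L_\arit$-term in $\num(\psi)$ and $n$, and the construction cannot be made uniform in a second free variable. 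As structured, the argument assumes the conclusion of a fixed-point principle that fails for exactly the kind of map you build. (A minor further point: $n=0$ is not a legitimate choice, since your arithmetized evaluation formula can only be faithful for $n$ large relative to the hard-wired constants and string structures need $n\ge 2$; choosing a large $n$ would repair that part.)

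This uniformity problem is precisely what the paper's proof is organized around. There one uses the characterization of $\paraAC$ by \emph{eventual definability} (Corollary~\ref{cor:evtfo}): a single sentence $\textit{sat}$, hence a single formula $\textit{true}(x)$, works for \emph{all} inputs $\varphi$ once $n>h(\num(\varphi))$, so that $n\models\varphi\iff\N\models\textit{true}^{<n}(\num(\varphi))$. The remaining threshold $n>h(\num(\varphi))$, which blocks a naive standard-model diagonalization, is absorbed by passing to a proper elementary extension $M$ with a nonstandard element $a$ exceeding every standard value $h(\num(\varphi))$; then the genuine Tarski/G\"odel diagonalization, with a substitution function made available below $a$ via Lemma~\ref{lem:Mf}, yields the contradiction. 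To salvage your route you would need either this single-truth-formula argument, or a fully worked-out self-reference mechanism (for instance a recursion-theorem construction of a program whose output formula describes that program's own computation inside bounded quantifiers, with all bounds verified to be $\Delta_0$-legal); neither is supplied by the appeal to the recursion theorem as written.
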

The proof is based on diagonalization or, more specifically, the
undefinability of truth. Furthermore, it relies on the classical
result~\cite{BIS} of descriptive complexity theory that, roughly speaking,
equates $\AC^0$ and first-order logic with built-in arithmetic.

\subsection{$\AC^0$-bi-immunity}
Could Conjecture~\ref{conj:phaltAC} be false? We give further evidence for
its truth by establishing a connection to the existence of $\AC^0$-bi-immune
sets in $\NP$. Recall, a problem $Q$ is \emph{$\AC^0$-bi-immune} if neither
$Q$ nor its complement contain an infinite subset in $\AC^0$.

\begin{theo}\label{thm:immu}
If $\NP$ contains an $\AC^0$-bi-immune problem, then $\phalt\not\in\paraAC$.
\end{theo}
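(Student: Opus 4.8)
The plan is to prove the contrapositive in a conditional form: assuming $\phalt\in\paraAC$, construct for any given problem $Q\in\NP$ an infinite subset of $Q$ or of its complement that lies in $\AC^0$, thereby showing no problem in $\NP$ is $\AC^0$-bi-immune. Fix $Q\in\NP$ and a nondeterministic Turing machine $\mathbb M_Q$ deciding $Q$ in time $n^c$ for some constant $c$. The idea is to relate membership in $Q$ to a single fixed-parameter instance of $\phalt$: for an input $x$ of length $m$, build a nondeterministic machine $\mathbb M_x$ that first writes $x$ on its tape and then simulates $\mathbb M_Q$ on $x$; crucially, $\mathbb M_x$ should be obtainable from $x$ by an $\AC^0$-computable map, and its \emph{size} $|\mathbb M_x|$ should depend only on $m$ (say $|\mathbb M_x| = O(\log m)$ if we hard-wire $x$ cleverly, or we keep $x$ on the input/advice and let $\mathbb M_x$ be of constant size reading a fixed portion) — this is the delicate bookkeeping point I return to below. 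Then $x\in Q$ iff $\mathbb M_x$ accepts the empty input within $N(m):= $ (some fixed polynomial in $m$) steps, i.e. iff $(1^{N(m)},\mathbb M_x)\in\phalt$.

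If $\phalt\in\paraAC$, there is an $\AC^0$-circuit family $(C_{s})_{s}$ (indexed by the parameter value $s=|\mathbb M_x|$) together with a precomputed function $h$ of the parameter such that on instances with parameter $s$ the problem is decided in $\AC^0$ after $h(s)$ preprocessing. Because along our family of instances the parameter $|\mathbb M_x|$ is a slowly growing function of $m$ (logarithmic, or even constant on length-restricted subfamilies), we cannot directly conclude $Q\in\AC^0$ — indeed that would be too strong. Instead I would restrict attention to an infinite set of input lengths $m_1<m_2<\cdots$ chosen so that $|\mathbb M_{x}|$ is \emph{constant}, equal to some fixed $s^*$, for all $x$ whose length lies in a fixed congruence class, or more simply: observe that for each fixed value $v$ of the parameter, the set of instances of $\phalt$ with that parameter value that the hypothesized algorithm accepts is an $\AC^0$ set (the preprocessing $h(v)$ is now a constant). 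Pulling this back through the $\AC^0$-reduction $x\mapsto (1^{N(m)},\mathbb M_x)$, we get: for each fixed parameter value $v$, the set $\{x : |\mathbb M_x|=v \text{ and } x\in Q\}$ is in $\AC^0$. If this set is infinite for some $v$, we are done: it is an infinite $\AC^0$ subset of $Q$. If it is finite for every $v$, then since the $v$'s range over $\N$ and every $x$ gets some value $v$, we have $Q=\bigcup_v\{x:|\mathbb M_x|=v, x\in Q\}$ is a countable union of finite sets — fine, but that does not immediately give an infinite $\AC^0$ subset of the complement, so one must instead argue symmetrically using $\co\NP$-style reasoning or observe that then $Q$ itself is "thin" and extract the infinite $\AC^0$ subset from $\overline Q$ by the same construction applied to a machine for $\overline Q$-with-bounded-runtime-certificate-search; here one uses that $\phalt$ (not a complement) already suffices because "does \emph{not} halt in $n$ steps" is the negation and the $\AC^0$ circuit decides the characteristic function either way.

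I expect the main obstacle to be controlling the parameter $|\mathbb M_x|$: a naive encoding that hard-wires $x$ into $\mathbb M_x$ makes $|\mathbb M_x|$ grow like $m$, which destroys the fixed-parameter leverage. The fix is to keep $x$ \emph{outside} the machine — feed it via a read-only portion of the configuration, or better, pad $1^{N(m)}$ in a way that encodes $x$ in unary blocks that $\mathbb M_x$ (a constant-size universal-ish gadget) reads off — so that $|\mathbb M_x|=O(1)$ while the $\AC^0$-computability of $x\mapsto(\text{instance})$ is preserved. Making this encoding genuinely $\AC^0$-computable and checking that "halts in $\le N(m)$ steps" faithfully captures "$\mathbb M_Q$ accepts $x$" (accounting for the overhead of reading the unary-encoded $x$, which inflates the step bound only polynomially) is the part requiring care. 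Once the parameter is pinned to a constant $s^*$, the hypothesized $\paraAC$ algorithm restricted to parameter $s^*$ is literally an $\AC^0$ algorithm, and the pullback along an $\AC^0$ reduction is $\AC^0$, yielding the infinite $\AC^0$ subset of $Q$ or of $\overline Q$ and contradicting $\AC^0$-bi-immunity.
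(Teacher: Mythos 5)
Your overall plan (contrapositive: from $\phalt\in\paraAC$ extract infinite $\AC^0$ subsets of $Q$ or $\overline Q$) matches the paper, but the core of your construction does not work, and you have flagged exactly the point where it breaks without resolving it. In a $\phalt$-instance $\angle{1^n,\mathbb M}$ the machine is run on the \emph{empty} input; the unary component is only a time bound that the machine never sees (and acceptance within ``at most $n$ steps'' is monotone in $n$). So the only channel through which the instance can depend on $x$ is the description of $\mathbb M_x$ itself, and a machine of constant (or even $o(|x|/\log|x|)$) size started on empty input cannot reproduce an arbitrary $x$. Your proposed fixes --- feeding $x$ ``via a read-only portion of the configuration'' or padding $1^{N(m)}$ with ``unary blocks that $\mathbb M_x$ reads off'' --- are not available for this problem. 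Moreover, even if you could pin the parameter to a constant $s^*$, your argument would then only use that a single slice of $\phalt$ is in $\AC^0$; but that holds \emph{unconditionally} (the paper notes $\phalt\in\XAC 2$, cf.\ the family $F_{n,k}$ in the introduction), so an argument of this shape never actually uses the hypothesis $\phalt\in\paraAC$ and would ``prove'' unconditionally that $\NP$ has no $\AC^0$-bi-immune set --- contrary to the oracle evidence the theorem is responding to. The real content of $\phalt\in\paraAC$ is the uniform/eventual version: one fixed $\AC^0$ family (equivalently, one first-order sentence) is correct on all instances with $n\ge f(|\mathbb M|)$ for a single computable $f$, and any correct proof must exploit that computable gap $f$.

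The paper does so as follows, and this is the idea missing from your proposal: do not try to decide $Q$ on arbitrary inputs, but only on a sparse set of huge \emph{tally} strings. One builds a computable increasing $h$ with $h(m)\ge f(m^2)$ such that $X:=\{1^{h(m)}\mid m\ge 2\}$ is in $\AC^0$ and $m$ is first-order definable in $\str S(1^{h(m)})$ (this needs a careful choice of $h$, essentially encoding the computation of $h$ into the value $h(m)$; Lemma~\ref{lem:h}). For $Q\in\NP$ one hard-codes only the number $m$ into a machine $\mathbb M_{Q,h,m}$ that regenerates $1^{h(m)}$ and runs $Q$'s machine on it; its size is at most $m^2$ while the relevant time bound $n(m)\ge h(m)\ge f(m^2)$, so the eventually-definable sentence for $\phalt$ applies, and via an interpretation the whole test becomes a first-order sentence on $\str S(1^{h(m)})$. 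Hence $Q\cap X\in\AC^0$ for every $Q\in\NP$; since $X$ is infinite and in $\AC^0$, one of $Q\cap X$, $\overline Q\cap X$ is an infinite $\AC^0$ subset, refuting bi-immunity. Note also that this yields one set $X$ working for all $Q\in\NP$ simultaneously (Theorem~\ref{thm:universal}), which your per-$Q$, per-slice case analysis (whose ``finite for every $v$'' branch is left unresolved) does not provide.
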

It is a standard hypothesis that $\NP$ contains even $\P$-bi-immune problems
and this follows from the measure hypothesis~\cite{mayor}. Whether $\NP$
contains at least $\AC^0$-bi-immune problems has been asked once it was
realized~\cite{ghs,abhh} that deterministic time hierarchy theorems hold with
bi-immunity (or, equivalently~\cite{schoening}, almost everywhere) while this
is open for nondeterministic time~\cite{abhh,forsan}.
While Zimand~\cite{zimand} obtained some partial positive answers, Allender
and Gore~\cite{allgor90} showed that this has different answers relative to
different oracles.\footnote{\cite{allgor90} studies $\AC^0$-immunity but
their oracle constructions can be adapted to $\AC^0$-bi-immunity.} This
indicates that also refuting  Conjecture~\ref{conj:phaltAC} might be
non-trivial.

\subsection{Outline}
Much of the technical work consists in connecting the dots between results of
various subareas of logic and complexity, namely classical, parameterized and
descriptive complexity theory and formal arithmetic. Section~\ref{sec:prelim}
reviews the results we need and fixes our notation. The technicalities are
somewhat subtle, in particular, the move from $\phalt$ to $\phalt_=$ is
crucial. Section~\ref{sec:halt} proves Theorem~\ref{thm:phaltequal} and
various variants of it via an analysis of $\phalt_=$: it is in a strong sense
the hardest among what we call \emph{almost tally} problems in $\para\NP$.
Such problems have instances consisting of a natural number in unary notation
plus a short binary string where short means having a length effectively
bounded in terms of the parameter. Section~\ref{sec:truth} proves
Theorem~\ref{thm:truthAC}. This together with  the results in
Section~\ref{sec:halt} implies Theorem~\ref{thm:truthNP} and various
variants. Section~\ref{sec:mrdp} derives (a strengthening of)
Theorem~\ref{thm:mrdp} from Theorem~\ref{thm:truthNP}. Section~\ref{sec:immu}
proves Theorem~\ref{thm:immu}. The final section discusses the role of
uniformity, and exhibits the different behaviours of our parameterized
problems $\phalt$, $\phalt_=$ and $\ptruth$.


\section{Preliminaries}\label{sec:prelim}

Standard monographs are~\cite{papad,arora} for classical complexity theory,
\cite{df,fg,df2} for parameterized complexity theory, \cite{hp,kayebook} for
formal arithmetic, and~\cite{immer,ebbflu} for descriptive complexity theory.

\subsection{Classical complexity}
A \emph{(classical) problem} is a subset of $\{0,1\}^*$, the set of finite
binary strings. The length of a binary string $x\in\{0,1\}^*$ is denoted
$|x|$. For $n\in \mathbb N$ we let $1^n$ denote the binary string consisting
of $n$ many~$1$'s. We use multitape Turing machines with alphabet $\{0,1\}$
as our basic model of computation. When considering \emph{dlogtime} Turing
machines, i.e.\ deterministic machines running in time $O(\log n)$, it is
understood that they access their input via an address tape
(see~e.g.~\cite{BIS}). As usual, $\P$ and $\NP$ denote deterministic and
nondeterministic polynomial time~$n^{O(1)}$, and $\E$ and $\NE$ denote
deterministic and nondeterministic exponential time with linear exponent
$2^{O(n)}$. The \emph{linear time hierarchy} $\LINH$ is the set of problems
acceptable by alternating Turing machines in linear time $O(n)$ with $O(1)$
alternations. $\LINSP$ and $\NLINSP$ denote deterministic and
nondeterministic linear space $O(n)$. Clearly,
\[
\LINH\subseteq \LINSP\subseteq \NLINSP\subseteq \E\subseteq \NE.
\]

Following~\cite{BIS} we define (dlogtime uniform) $\AC^0$ as the set of
problems decided by \emph{$\AC^0$-circuit families} $\big(\C_n\big)_{n\in
\mathbb N}$:
\begin{enumerate}\itemsep=0pt
\item[--] $\C_n$ is a circuit (with $\wedge, \vee, \neg$ gates and
    unbounded fan-in) with $n$ variables, size $\le n^c$ and depth $\le d$,
    where $c,d\in \mathbb N$ are two constants independent of $n$;

\item[--] there is a \dlogtime\ Turing machine which given $\angle{1^n, i,
    b}$ where $n,i\in \mathbb N$ and $b\in \{0,1\}$ decides whether the
    $i$-th bit of the binary encoding of $\C_n$ is $b$.
\end{enumerate}
Here, for binary strings $x= x_0\cdots x_{|x|-1}$ and $y= y_0\cdots
y_{|y|-1}$ we use the standard pairing
\begin{equation}\label{eq:pairing}
\angle{x,y}:=
x_0x_0\cdots x_{|x|-1}x_{|x|-1}01y_0y_0\cdots y_{|y|-1}y_{|y|-1},
\end{equation}
and similarly for more arguments. The above definition is somewhat sensitive
to the choice of the binary encoding of a circuit. An appropriate choice
would be to encode~$C_n$ by the list of strings in the \emph{direct
connection language} corresponding to $n$; we refer to~\cite{BIS} for
details.

For $n\in\N$ we let $\bin(n)\in\{0,1\}^*$ denote the binary expansion of $n$;
it has length $\ceil{\log(n+1)}$. For $x\in \{0,1\}^*$ let $\num(x)$ be the
natural number with binary expansion $1x$, i.e., $\bin(\num(x))= 1x$. For a
problem $Q$ let
\[
\un(Q):= \left\{1^{\num(x)}\mid x\in Q\right\}.
\]
The last statement of the following is~\cite[Proposition~5]{allgor90}, and
the first two are trivial:

\begin{prop}[\cite{allgor90}]\label{prop:ag}
Let $Q$ be a problem. Then:
\begin{enumerate}\itemsep=0pt
\item[(i)] $Q\in \NE$ if and only if $\un(Q)\in \NP$.

\item[(ii)] $Q\in \E$ if and only if $\un(Q)\in \P$.

\item[(iii)] $Q\in \LINH$ if and only if $\un(Q)\in \AC^0$.
\end{enumerate}
\end{prop}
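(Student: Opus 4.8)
The plan is to handle (i) and (ii) by elementary padding arguments, and to obtain (iii) from the descriptive characterization of $\AC^0$ recalled in~\cite{BIS}; since (iii) is in fact exactly~\cite[Proposition~5]{allgor90}, I would ultimately cite that, but let me indicate how the argument goes.

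The common starting point is that $\num$ restricts to a bijection from $\{0,1\}^*$ onto $\{N\in\N : N\ge 1\}$ with $\bin(\num(x))=1x$. Consequently every string in $\un(Q)$ has the form $1^N$ with $N\ge 1$, the string $x$ determined by $1x=\bin(N)$ is unique, and $1^N\in\un(Q)$ if and only if $x\in Q$; in particular $x\in Q$ iff $1^{\num(x)}\in\un(Q)$. For the forward direction of (i), take $Q\in\NE$ decided in time $2^{cn}$ on inputs of length $n$. On input $y$ of length $\ell$, an $\NP$-algorithm for $\un(Q)$ first checks that $y=1^\ell$ and $\ell\ge 1$ (rejecting otherwise), then computes in time $\ell^{O(1)}$ the string $x$ with $1x=\bin(\ell)$, which has length $O(\log\ell)$, and simulates the machine for $Q$ on $x$ in time $2^{c|x|}=\ell^{O(1)}$. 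For the converse, if $\un(Q)\in\NP$ is decided in time $n^{O(1)}$, then on input $x$ of length $m$ we compute $N:=\num(x)<2^{m+1}$, write down $1^N$ in time $2^{O(m)}$, and run the $\NP$-machine on it in time $\big(2^{O(m)}\big)^{O(1)}=2^{O(m)}$; correctness is immediate from $1^N\in\un(Q)\iff x\in Q$. This proves (i), and (ii) is the same argument verbatim with deterministic machines and $\E,\P$ in place of $\NE,\NP$.

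For (iii) I would pass through logic. By~\cite{BIS}, $\un(Q)\in\AC^0$ is equivalent to the existence of a first-order sentence in the arithmetical vocabulary $\{<,+,\times\}$ extended by a unary input predicate $X$ that defines $\un(Q)$ over the word models with universe $\{0,\dots,|w|-1\}$, $X$ interpreted as $\{i : w_i=1\}$. Since $\un(Q)$ is tally, on an admissible input $1^N$ the predicate $X$ is the whole universe, so replacing each atom $X(t)$ by a tautology turns such a sentence into an $\FOA$-sentence $\psi$, while conversely conjoining an $\FOA$-sentence with $\forall t\,X(t)$ forces the input to be $1^N$ (recall $<$ is $\FOA$-definable over $\NMi{N}$). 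Tracking this through gives
\[
\un(Q)\in\AC^0\ \Longleftrightarrow\ \text{some }\FOA\text{-sentence }\psi\text{ satisfies }
\big\{N\in\N : \NMi{N}\models\psi\big\}=\big\{\num(x) : x\in Q\big\}.
\]
It then remains to show that $\{N:\NMi{N}\models\psi\}$ ranges, as $\psi$ ranges over $\FOA$-sentences, precisely over the sets $A$ with $\bin(A)\in\LINH$; combined with the identity $\bin\big(\{\num(x):x\in Q\}\big)=\{1x:x\in Q\}$ and the closure of $\LINH$ under prepending and stripping a fixed leading bit, this yields $\un(Q)\in\AC^0\iff Q\in\LINH$.

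This last equivalence is the only non-routine point. For the direction "$\Rightarrow$" an alternating machine given $\bin(N)$ (of length $m=\ceil{\log(N+1)}$) evaluates $\psi$ over $\NMi{N}$ by turning each of the $O(1)$ quantifiers of $\psi$ into an existential or universal phase that guesses the $m$-bit expansion of an element ${<}\,N$, and then verifies the quantifier-free matrix — a Boolean combination of atoms $t_1=t_2$, $t_1<t_2$, $t_1+t_2=t_3$, $t_1\times t_2=t_3$ on $m$-bit numbers — using the classical fact that these relations on binary-coded numbers lie in $\LINH$; this is a $\LINH$-algorithm. For "$\Leftarrow$", given an alternating machine deciding $\bin(A)$ in time $O(m)$ with $O(1)$ alternations, one writes an $\FOA$-sentence over $\NMi{N}$ asserting the existence of an accepting computation: since $N\ge 2^{m-1}$, a configuration (state, head positions, the $O(m)$ tape cells) is coded by $O(1)$ elements of $[N]$, the $O(1)$ alternation phases become $O(1)$ quantifier blocks over such tuples, and a closing $\FOA$-formula checks the transition relation between consecutive configurations and acceptance of the last one. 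The main obstacle is exactly here: one has to verify that bounded-alternation linear-time machines can really perform binary addition, comparison, and — the delicate case — multiplication, and that the supply of $\ceil{\log N}$-bit elements of $\NMi{N}$ genuinely suffices to encode an entire alternating computation on an input of length $\Theta(\log N)$. Both points are dealt with in~\cite{allgor90}, which we invoke for (iii).
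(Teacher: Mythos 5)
Your proposal is correct and matches the paper's treatment: the paper gives no proof, declaring (i) and (ii) trivial (your padding arguments are exactly the routine verification) and citing \cite[Proposition~5]{allgor90} for (iii), which is also where your sketch ultimately defers the non-routine points (definability of multiplication in $\LINH$ and the encoding of alternating computations).
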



\subsection{Parameterized complexity}
A \emph{parameterized problem} is a pair $(Q, \kappa)$ of an
\emph{underlying} classical problem $Q\subseteq \{0,1\}^*$ and a polynomial
time computable \emph{parameterization} $\kappa:\{0,1\}^*\to \mathbb N$
mapping an instance $x\in \{0,1\}^*$ to its \emph{parameter} $\kappa(x)\in
\mathbb N$.
E.g., $\phalt$ has underlying classical problem $\big\{\angle{1^n, \mathbb
M}\bigmid$ the nondeterministic Turing machine $\mathbb M$ accepts the empty
input in at most $n$ steps$\big\}$ and a parameterization $\kappa$ that maps
strings of the form $\angle{1^n, \mathbb M}$ to $|\mathbb M|$ and other
strings to, say,~0.

The para-operator~\cite{flugro03} turns a classical complexity class into a
parameterized one (the most important intractable parameterized classes are
not of this form, however). The class $\para\P= \FPT$ contains the
parameterized problems $(Q,\kappa)$ that are {\em fixed-parameter tractable},
i.e., decidable in deterministic time $f(\kappa(x))\cdot |x|^{O(1)}$ for some
computable $f:\N\to\N$. Similarly, $\para\NP$ denotes nondeterministic time
$f(\kappa(x))\cdot |x|^{O(1)}$ (for any computable $f$), $\para\L$ denotes
deterministic space $f(\kappa(x))+O(\log|x|)$, and $\para\NL$ denotes
nondeterministic such space. Clearly,
$$
\para\L\subseteq\para\NL\subseteq\FPT\subseteq\para\NP.
$$

The central parameterized
 class in this paper is $\paraAC$. It is characterized as follows:

\begin{prop}[\cite{cflowerac}]\label{prop:paraAC}
Let $(Q, \kappa)$ be a parameterized problem such that $Q$ is decidable and
$\kappa$ is computable by an $\AC^0$-circuit family. Then the following are
equivalent.
\begin{enumerate}\itemsep=0pt

\item[(i)] $(Q,\kappa)\in \paraAC$.

\item[(ii)] There is a family $(\C_{n,k})_{n,k\in \mathbb N}$ of circuits
    such that

    \begin{enumerate}\itemsep=0pt
    \item[--] there are a computable $f:\mathbb N\to \mathbb N$ and
        $c,d\in \mathbb N$ such that for all $n,k\in \mathbb N$ the
        circuit $\C_{n,k}$ has $n$ inputs, size at most $ f(k)\cdot n^c$,
        and depth at most $d$;

    \item[--] for all $x\in \{0,1\}^*$ we have
        \[
        x\in Q\iff \C_{|x|, \kappa(x)}(x)=1;
        \]

    \item[--] there are a computable $g:\mathbb N\to \mathbb N$ and a
        deterministic Turing machine which given as input $\angle{1^n,
        1^k, i, b}$ where $n,k,i\in \mathbb N$ and $b\in \{0,1\}$ decides
        in time $g(k)+ O(\log n)$ whether the $i$-th bit of the binary
        encoding of $\C_{n,k}$ is $b$.
    \end{enumerate}

\item[(iii)] There are a computable $h: \mathbb N\to \mathbb N$ and an
    $\AC^0$-circuit family $(\C_n)_{n\in \mathbb N}$  such that for all
    $x\in \{0,1\}^*$ with $|x|\ge h(\kappa(x))$ we have
    \[
    x\in Q\iff \C_{|x|}(x)=1.
    \]
\end{enumerate}
\end{prop}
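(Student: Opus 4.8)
Recall that $\paraAC=\para\AC^0$ is given by the para-operator: $(Q,\kappa)\in\paraAC$ iff there are a computable $\pi:\N\to\{0,1\}^*$ and a \dlogtime\ uniform $\AC^0$-family $(\D_N)_N$ with $x\in Q\iff\D_{|\angle{x,\pi(\kappa(x))}|}(\angle{x,\pi(\kappa(x))})=1$ for every $x$. The plan is to prove the cycle (i)$\Rightarrow$(ii)$\Rightarrow$(iii)$\Rightarrow$(i), freely enlarging the unnamed computable functions, so that $f,g$ and the relevant running-time bounds may be assumed monotone with $g(k)\ge k$, with $g$ dominating the running time of a machine computing $g$, and with $f(k)\le 2^{g(k)}$. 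For (i)$\Rightarrow$(ii) I would take $\C_{n,k}$ to be $\D_N$ with $N:=|\angle{x,\pi(k)}|$ (which depends only on $n,k$), its second component hard-wired to $\pi(k)$ and prefixed by the fixed rewiring $x\mapsto\angle{x,\pi(k)}$: correctness is immediate, the depth is that of $\D$, the size is $N^{O(1)}\le f(k)\cdot n^{O(1)}$, and a bit of $\C_{n,k}$ is read off by computing $\pi(k)$ (computable time $g_0(k)$) and making one \dlogtime\ query to $\D$ on an instance of length $N=O(n+|\pi(k)|)$, for a total of $g(k)+O(\log n)$.

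For (ii)$\Rightarrow$(iii), which I expect to be the main obstacle, the idea is to fuse the two-index family into a single $\AC^0$-family and to certify its \dlogtime\ uniformity by a clocked simulation of the uniformity machine $\mathbb D$, rather than by trying to compute in $O(\log n)$ time how far the second index must range. First I would normalize $\mathbb D$ so that on any bit-query $\angle{1^n,1^k,i,b}$ it runs for exactly $g(k)+c_3\ceil{\log n}$ steps, independently of $i,b$ (read $k$ and $\ceil{\log n}$, compute $g(k)$, install a clock, do the work, idle), and \emph{declare that copy $k$ exists at length $n$} when $\mathbb D$ halts on such a query within $(c_3+1)\ceil{\log n}$ steps --- which, by the normalization, is equivalent to $g(k)\le\ceil{\log n}$ and is decidable in time $O(\log n)$ by running $\mathbb D$ under a clock. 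Then $\C_n$ is the constant-depth circuit that computes $k:=\kappa(x)$ by a \dlogtime\ uniform sub-circuit (using $\kappa\in\AC^0$), lays out side by side the circuits $\C_{n,k'}$ for those $k'\in\{0,\dots,n\}$ for which copy $k'$ exists at length $n$ (and a trivial $0$-circuit otherwise), and outputs the value of the $k$-th copy. Since an existing copy has $g(k')\le\ceil{\log n}$, hence $k'\le\ceil{\log n}$ and $f(k')\le 2n$, this circuit has polynomial size, and its direct connection language is \dlogtime\ decidable --- each gate carries the tag $k'$ of its copy, existence of that copy is checked under a clock, and the gate is then resolved by the $\kappa$-sub-circuit, by the trivial circuit, or by one $O(\log n)$-time query to $\mathbb D$. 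Finally, taking $h(k):=2^{g(k)}$, the hypothesis $|x|\ge h(\kappa(x))$ gives $g(\kappa(x))\le\log|x|$, so copy $\kappa(x)$ exists at length $|x|$ and lies in $\{0,\dots,|x|\}$, whence $\C_{|x|}(x)=\C_{|x|,\kappa(x)}(x)=[x\in Q]$. The delicate part here is exactly this interface --- the normal form for $\mathbb D$, the tagging that makes the structure of $\C_n$ locally checkable, and the coordinated enlargements of $f$ and $g$ --- while the growth of $h$ is what pays for all of it.

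For (iii)$\Rightarrow$(i) I would put $\pi(k):=\angle{1^{h(k)},T_k}$, where $T_k$ is the truth table of $Q$ on all strings of length $<h(k)$; this is computable since $Q$ is decidable and $h$ computable. Let $\D_N$ parse $\angle{x,\angle{1^m,T}}$ into $x,m,T$ (an $\AC^0$ task for the pairing~\eqref{eq:pairing}) and output $\C_{|x|}(x)$ if $|x|\ge m$ and otherwise the entry of $T$ indexed by $x$; on $\angle{x,\pi(\kappa(x))}$ this returns $\C_{|x|}(x)=[x\in Q]$ when $|x|\ge h(\kappa(x))$ by (iii), and the bit read from $T_{\kappa(x)}$ otherwise, which is again $[x\in Q]$. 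That $\D\in\AC^0$ is routine: $\D_N$ needs only the parsing and lookup gadgets together with the hard-wired circuits $\C_1,\dots,\C_N$ selected by the value $|x|\le N$, of total size $\sum_{n\le N}n^{O(1)}=N^{O(1)}$ and constant depth, and its direct connection language reduces in $O(\log N)$ time to that of $(\C_n)$ --- note that here $h$ is used only to make $\pi$ computable, so no growth rate is needed. This closes the cycle.
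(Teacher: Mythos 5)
The paper itself gives no proof of this proposition: it is imported from \cite{cflowerac}, so there is no in-paper argument to compare against line by line. Your cycle (i)$\Rightarrow$(ii)$\Rightarrow$(iii)$\Rightarrow$(i) --- hard-wiring the precomputation for (i)$\Rightarrow$(ii), merging the two-index family for (ii)$\Rightarrow$(iii) by computing $\kappa$ inside the circuit and only laying out the copies $k'$ with $g(k')\le\ceil{\log n}$ (so that queries to the uniformity machine $\mathbb D$ fit an $O(\log n)$ budget), and a truth-table precomputation for (iii)$\Rightarrow$(i) --- is exactly the standard argument for this characterization and is correct, including the coordinated normalizations of $f,g$ and the choice $h(k)=2^{g(k)}$. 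The one point stated too quickly is the dlogtime bound for the composed direct-connection machine: simulating the clocked $\mathbb D$ while translating its random-access queries to the virtual input $\angle{1^n,1^{k'},i,b}$ costs $O(\log n)$ per translated query and hence $O(\log^2 n)$ if done naively; this is routine to repair, either by an amortized handling of the address translation or, more cleanly, by observing that the direct connection language of your merged family is first-order definable and invoking Theorem~\ref{thm:BIS} (the same remark applies to the copy-selection gadget in your (iii)$\Rightarrow$(i) step).
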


According to the terminology of~\cite{flugro03}, (iii) states that
$(Q,\kappa)$ is \emph{eventually in $\AC^0$}.


\subsection{Formal arithmetic}
We let $L_\arit:=\{+,\times,0,1,<\}$ be the language of arithmetic with
binary function symbols $+,\times$, constants $0,1$ and a binary relation
symbol $<$. The {\em standard $L_\arit$-structure}, denoted $\N$, has
universe $\N$ and interprets the symbols in the obvious way. Every
$L_\arit$-term $p$ computes a polynomial with coefficients in $\N$ and of
total degree at most $|p|$. We do not distinguish terms $p$ or formulas
$\varphi$ from their binary encodings, so $|p|$ and $|\varphi|$ denote the
lengths of these encodings. Writing $\varphi(\bar x)$ for a formula $\varphi$
means that \emph{all} free variables of~$\varphi$ are among~$\bar x$. A {\em
sentence} is a formula without free variables.

A \emph{$\Delta_0$-formula} is an $L_\arit$-formula obtained from atomic
formulas, Boolean connectives, and bounded quantifiers $\exists x{<}p$,
$\forall x{<}p$ where $p$ is an $L_\arit$-term not involving $x$; e.g.,
$\exists x{<}p\; \varphi$ stands for $\exists x (x{<}p\wedge \varphi)$.
$\Sigma_1$- and $\Pi_1$-{\em formulas} are obtained from $\Delta_0$-formulas
by existential and  universal quantification, respectively.

\begin{theo}[\MRDP]
For every $\Delta_0$-formula $\varphi(\bar x)$ there are $L_\arit$-terms
$p(\bar x,\bar y),q(\bar x,\bar y)$ such that
\[
\N\models \forall\bar x\big(\varphi(\bar x)\leftrightarrow
 \exists\bar y\; p(\bar x,\bar y){=}q(\bar x,\bar y)\big).
\]
\end{theo}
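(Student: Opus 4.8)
The plan is to prove, by induction on structure, the stronger statement that \emph{every} $L_\arit$-formula all of whose quantifiers are bounded is equivalent over $\N$ to a \emph{Diophantine formula}, i.e.\ one of the shape $\exists\bar y\;p{=}q$ with $p,q$ terms; the theorem is the special case where $\varphi$ is a $\Delta_0$-formula. First I would normalize $\varphi$ by pushing all negations inward to the atomic subformulas; this stays within the bounded-quantifier fragment and replaces $\neg(s{=}t)$ by $s{\neq}t$ and $\neg(s{<}t)$ by $t{\le}s$. The literals are then immediate: $s{=}t$ already has the required form (with $\bar y$ empty); $s{<}t$ is $\exists u\,(t{=}s{+}u{+}1)$; $t{\le}s$ is $\exists u\,(s{=}t{+}u)$; and $s{\neq}t$ is $\exists u\,\big((t{=}s{+}u{+}1)\vee(s{=}t{+}u{+}1)\big)$.

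Next I would establish the easy closure properties. Over $\N$ every $L_\arit$-term denotes a polynomial with natural coefficients, so the classical algebraic identities apply: after renaming the existential blocks of $\exists\bar y\,(p_1{=}q_1)$ and $\exists\bar z\,(p_2{=}q_2)$ apart,
\[
(p_1{=}q_1)\wedge(p_2{=}q_2)\iff p_1^2{+}q_1^2{+}p_2^2{+}q_2^2 = 2p_1q_1{+}2p_2q_2,
\]
while $(p_1{=}q_1)\vee(p_2{=}q_2)\iff p_1p_2{+}q_1q_2 = p_1q_2{+}q_1p_2$; hence the class of Diophantine formulas is closed under $\wedge$ and $\vee$ (merge the two existential blocks). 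For a bounded existential quantifier $\exists w{<}t\;\psi$, conjoin the Diophantine form of $\psi$ with that of $w{<}t$ and absorb $\exists w$ into the existential block. This leaves a single nontrivial induction step.

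The hard step --- and the real content of \MRDP\ --- is closure under \emph{bounded universal quantification}: from $\psi(\bar x,w)\leftrightarrow\exists\bar y\,\big(p(\bar x,w,\bar y){=}q(\bar x,w,\bar y)\big)$ one must produce a Diophantine equivalent of $\forall w{<}t(\bar x)\;\psi(\bar x,w)$. This is the bounded-universal-quantifier theorem of Matiyasevich and Robinson~\cite{mrdp}, whose proof rests on three ingredients: (i) Matiyasevich's theorem that the graph $\{(a,b,c):c{=}a^b\}$ of exponentiation is Diophantine, extracted from the Pell equation $x^2-(a^2{-}1)y^2=1$ via the exponential growth and the divisibility and congruence properties of its solution sequences; (ii) the consequences that factorials, binomial coefficients, and products $\prod_{i<n}$ of Diophantine functions are Diophantine, hence that finite sequences of naturals admit a Diophantine coding (e.g.\ through G\"odel's $\beta$-function) expressing ``$y$ is the $i$-th entry of the sequence coded by $s$''; and (iii) folding the assertion ``$s$ codes a sequence of witness tuples $\bar y_0,\dots,\bar y_{t-1}$ with $p(\bar x,w,\bar y_w){=}q(\bar x,w,\bar y_w)$ for every $w{<}t$'' into a single divisibility statement by the Chinese Remainder Theorem. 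I expect ingredient (i), Matiyasevich's exponential lemma, to be the main obstacle; ingredient (iii) additionally needs some care to bound the coded witnesses uniformly in $w$, which one arranges from the explicit bounds appearing in the recursion together with the growth of the exponential function. Running these closure properties along the structure of $\varphi$ delivers the desired Diophantine equivalent. (Alternatively, one could simply cite the full \MRDP\ theorem, every $\Delta_0$-definable relation being in particular recursively enumerable; but it is the inductive argument that gets adapted to the bounded-arithmetic setting in the later sections.)
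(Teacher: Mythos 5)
There is no proof of this statement in the paper for your attempt to be measured against: the theorem is the classical Matiyasevich--Robinson--Davis--Putnam theorem, which the authors simply quote with a citation to~\cite{mrdp} and use as a black box (its role in the paper is only to yield Corollary~\ref{cor:fxy} and to motivate the provability questions of Section~\ref{sec:mrdp}). Your sketch is the standard textbook route and is correct in outline: the negation-normal-form reduction, the Diophantine forms of the literals, and the sum-of-squares and product identities for $\wedge$ and $\vee$ are all right (both identities amount to $(p_1-q_1)^2+(p_2-q_2)^2=0$ and $(p_1-q_1)(p_2-q_2)=0$ over $\mathbb{Z}$, which is legitimate since the rearranged equations avoid subtraction), bounded $\exists$ is handled correctly, and you correctly isolate the genuine content --- the bounded universal quantifier theorem, resting on the Diophantineness of exponentiation, sequence coding, and the uniform bounding of witnesses --- which you attribute rather than prove; that is the same level of granularity at which the paper treats the whole theorem, so nothing is lost. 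Two small remarks: your ``stronger statement'' (all bounded-quantifier formulas) is literally the paper's definition of a $\Delta_0$-formula, so it is not actually stronger; and your parenthetical alternative (every $\Delta_0$-definable set is recursive, hence r.e., hence Diophantine by the full MRDP theorem) is in effect what the paper does by citation, though note that for the later bounded-arithmetic material the inductive, quantifier-by-quantifier form of the argument is indeed the one that matters, as you observe.
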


G\"odel showed that computable functions are $\Sigma_1$-definable. The \MRDP\
theorem improves this to an existential definition:

\begin{cor}\label{cor:fxy}
For every computable $f:\mathbb N\to \mathbb N$ there is a quantifier-free
$L_\arit$-formula $\varphi_f(x, y, \bar z)$ such that for every $n,m\in
\mathbb N$
\begin{eqnarray*}
f(n)= m & \iff &
\N \models \exists \bar z\; \varphi_f(n,m,\bar z).
\end{eqnarray*}
\end{cor}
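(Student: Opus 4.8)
The plan is to derive Corollary~\ref{cor:fxy} from the \MRDP\ theorem applied to a suitable $\Delta_0$-formula defining the graph of $f$. First I would invoke G\"odel's result (or directly the standard arithmetization of computation) to obtain a $\Sigma_1$-formula $\psi_f(x,y)$ with $\N\models\psi_f(n,m)$ iff $f(n)=m$. Writing $\psi_f(x,y)=\exists w\;\vartheta(x,y,w)$ with $\vartheta$ a $\Delta_0$-formula, I would then apply the \MRDP\ theorem to $\vartheta(x,y,w)$, obtaining $L_\arit$-terms $p(x,y,w,\bar u)$ and $q(x,y,w,\bar u)$ with
\[
\N\models\forall x\forall y\forall w\big(\vartheta(x,y,w)\leftrightarrow\exists\bar u\;p(x,y,w,\bar u){=}q(x,y,w,\bar u)\big).
\]
Combining, $f(n)=m$ iff $\N\models\exists w\exists\bar u\;p(n,m,w,\bar u){=}q(n,m,w,\bar u)$, so setting $\bar z:=(w,\bar u)$ gives a purely existential description. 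To reach the \emph{quantifier-free} matrix $\varphi_f$ claimed in the statement, I would use the trivial observation that an equation between two $L_\arit$-terms is itself a quantifier-free $L_\arit$-formula; thus $\varphi_f(x,y,\bar z):=\big(p(x,y,\bar z){=}q(x,y,\bar z)\big)$ works.

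The remaining point is to make sure $f$ being merely computable (not, say, polynomial-time computable) is not an obstacle: the graph of any computable function is $\Sigma_1$-definable over $\N$, which is exactly G\"odel's theorem as cited, and \MRDP\ applies to \emph{every} $\Delta_0$-formula with no complexity restriction, so no issue arises. One small bookkeeping step is to pass from a single existential quantifier $\exists w$ in front of the $\Delta_0$-formula to the bounded form required by the definition of $\Sigma_1$; but since $\MRDP$ is stated for arbitrary $\Delta_0$-formulas and we only need the resulting equivalence over $\N$, we may simply absorb $w$ into the block $\bar y$ of \MRDP\ and treat the existential quantifier as unbounded — over $\N$ this changes nothing.

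The only genuinely delicate point, and the one I would flag as the main obstacle, is purely notational: one must be slightly careful that the variables $\bar x$ in the \MRDP\ statement are allowed to be instantiated by the pair $(x,y)$ \emph{together with} the witness variable $w$ of the $\Sigma_1$-definition, i.e.\ that \MRDP\ is applied to a $\Delta_0$-formula in the variables $(x,y,w)$ rather than just $(x,y)$. Once this is set up correctly the argument is a one-line composition. I expect the proof in the paper to be essentially this two-step reduction, possibly phrased so as to obtain a single equation (using the standard trick $p_1{=}q_1\wedge p_2{=}q_2$ iff $(p_1{-}q_1)^2+(p_2{-}q_2)^2{=}0$ over $\N$, rewritten without subtraction) in case multiple equations appear.
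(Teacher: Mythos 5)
Your proposal is correct and is essentially the paper's own argument: the paper derives Corollary~\ref{cor:fxy} exactly by combining G\"odel's $\Sigma_1$-definability of the graph of a computable $f$ with the \MRDP\ theorem applied to the $\Delta_0$ matrix, absorbing the witness variable into the existential block so that the matrix is a single (quantifier-free) term equation. Your flagged bookkeeping points (instantiating \MRDP\ with the extra witness variable, merging the quantifier blocks) are handled implicitly in the paper and pose no difficulty.
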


We are mainly concerned with finite arithmetical structures with universe
\[
[n]:=\{0,\ldots, n-1\}
\]
for some $n\in\N$ with $n\ge 2$, and
therefore consider the relational version
\[
L_\arit^\rel
\]
of $L_\arit$ where $+$, $\times$ are ternary relation symbols. The standard
$L_\arit^\rel$-structure with universe $\N$, also denoted $\N$, interprets
$+$, $\times$ by the graphs of addition and multiplication, respectively. For
$n\in \N$ with $n\ge 2$, the standard $L^\rel_\arit$-structure with universe
$[n]$, simply denoted $n$, is the substructure of $\N$ with universe $[n]$,
i.e., it interprets the symbols in $L^\rel_\arit$ by $+^n:=\{(k,\ell,m)
\in[n]^3\mid k+\ell=m\}$, $\times^n:= \{(k,\ell,m)\in [n]^3\mid k\cdot \ell=
m\}$, $0^n:=0$, $1^n:=1$ and $<^n:=\{(k,\ell)\in[n]^2\mid k<\ell\}$.

For every $L^\rel_\arit$-formula $\varphi(\bar x)$ and $\bar n, n\in \N$ with
$1, \bar n<n$ we have
\begin{eqnarray*}
\N\models \varphi^{<n}(\bar n)
 & \iff &
 n\models \varphi(\bar n),
\end{eqnarray*}
where $\varphi^{<y}$ is obtained from $\varphi$ by replacing all quantifiers
$\exists z$, $\forall z$ by $\exists z{<}y$, $\forall z{<}y$.

\begin{rem}\label{rem:fxy}
Corollary~\ref{cor:fxy} holds for a quantifier-free $L^\rel_\arit$-formula
$\varphi_f(\bar x, y, \bar z)$. Indeed, it is straightforward to express an
$L_\arit$-term equality by an existential $L^\rel_\arit$-formula.
\end{rem}

\subsection{Descriptive complexity}
A binary string $x= x_0\cdots x_{n-1}\in \{0,1\}^*$ of length $n\ge 2$ is
often identified with the \emph{string structure} $\str S(x)$ defined as the
$L^\rel_\arit\cup\{\ONE\}$-expansion of the standard $L^\rel_\arit$-structure
$n$ that interprets the unary relation symbol $\ONE$ by
\[
\ONE^x:=\{i\in[n]\mid x_i=1\},
\]
i.e., $\str S(x)= \big([n], +^n, \times^n, 0^n, 1^n, <^n, \ONE^x\big)$.
We shall work with the following descriptive characterization of (dlogtime
uniform) $\AC^0$:
\begin{theo}[\cite{BIS}]\label{thm:BIS}
A problem $Q$ is in $\AC^0$ if and only if there is an
$L^\rel_\arit\cup\{\ONE\}$-sentence~$\varphi$ such that for every $x\in
\{0,1\}^*$ with $|x|\ge 2$:
\begin{eqnarray*}
x\in Q & \iff & \str S(x)\models \varphi.
\end{eqnarray*}
\end{theo}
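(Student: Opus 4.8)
The plan is to prove the two directions of the equivalence in Theorem~\ref{thm:BIS} separately, relying on the standard machinery from~\cite{BIS} but spelling out the key correspondence between dlogtime-uniform $\AC^0$-circuits and first-order formulas over string structures.

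For the ``only if'' direction, suppose $Q\in\AC^0$ and let $(\C_n)_{n\in\N}$ be an $\AC^0$-circuit family deciding $Q$, with size bound $n^c$, depth bound $d$, and a dlogtime machine recognizing the direct connection language. First I would observe that gates of $\C_n$ can be named by tuples from $[n]^c$ (padding as needed), so that a single $L^\rel_\arit\cup\{\ONE\}$-formula with $c$-tuples of variables can quantify over all gates. The crucial point is that, because the uniformity machine runs in time $O(\log n)=O(\log|\str S(x)|)$ and $\AC^0=\LINH$-on-tally-inputs (via Proposition~\ref{prop:ag}(iii)), the predicates ``gate $\bar g$ is an $\wedge$-gate'', ``$\bar g$ is an input gate for position $i$'', ``$\bar g$ is the output gate'', and ``$\bar g$ is a child of $\bar h$'' are all themselves $\FO$-definable over the standard arithmetic structure on $[n]$, using only $+,\times,<$; this is exactly the content of the direct-connection-language formulation in~\cite{BIS}. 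Then I would write a formula $\mathit{Val}_\ell(\bar g)$ by induction on $\ell\le d$ asserting that gate $\bar g$ evaluates to $1$ on input $x$: the base case $\ell=0$ handles input gates by consulting $\ONE$ (and their negations), and the inductive step says $\bar g$ is an $\wedge$-gate all of whose children satisfy $\mathit{Val}_{\ell-1}$, or an $\vee$-gate with some child satisfying it, or a $\neg$-gate whose child fails it. Since $d$ is a constant, unfolding this recursion $d$ times yields a genuine first-order formula; the final sentence $\varphi$ asserts $\exists\bar g\,(\text{``}\bar g\text{ is the output gate''}\wedge \mathit{Val}_d(\bar g))$.

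For the ``if'' direction, suppose $\varphi$ is an $L^\rel_\arit\cup\{\ONE\}$-sentence and define $Q:=\{x: |x|\ge 2, \str S(x)\models\varphi\}$. Here I would build, for each $n$, a constant-depth polynomial-size circuit $\C_n$ evaluating $\varphi$ on $\str S(x)$ when $|x|=n$, by structural induction on $\varphi$: atomic formulas $+(a,b,c)$, $\times(a,b,c)$, $a<b$, $\ONE(a)$ on fixed elements are either hard-wired constants (the arithmetic relations depend only on $n$, not on $x$) or direct references to input bits ($\ONE(i)$ is the bit $x_i$); Boolean connectives become $\wedge,\vee,\neg$ gates; and a quantifier $\exists z\,\psi$ becomes an unbounded fan-in $\vee$ over the $n$ possible values of $z$ (dually for $\forall$). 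Each quantifier contributes one level and a blow-up of factor $n$, so with $q$ quantifiers the circuit has depth $O(q)=O(1)$ and size $n^{O(q)}=n^{O(1)}$. For uniformity I would note that the wiring of $\C_n$ is determined in an entirely syntax-directed way from $\varphi$ together with the bits of the arithmetic tables on $[n]$, and these bits — e.g.\ whether $k+\ell=m$ — are computable in time $O(\log n)$; hence the direct connection language of $(\C_n)_n$ is dlogtime-decidable, so $Q\in\AC^0$.

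The main obstacle I expect is not the logical-to-circuit translation, which is routine, but the \emph{uniformity bookkeeping} in both directions: matching the dlogtime-uniformity condition in the definition of $\AC^0$ (phrased via an address-tape machine recognizing the direct connection language) with $\FO$-definability of the arithmetic and connection predicates, and conversely checking that the circuit families one constructs from a formula really meet the dlogtime bound under the chosen encoding of circuits. This is exactly where the sensitivity of the definition to the circuit encoding matters, and where I would lean most heavily on the detailed treatment in~\cite{BIS} rather than reprove everything; I would state the correspondence at the level of the direct connection language and cite~\cite{BIS} for the verification that this encoding is equivalent to the bit-encoding used in the definition of $\AC^0$ above.
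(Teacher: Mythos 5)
The paper itself gives no proof of this statement: Theorem~\ref{thm:BIS} is imported from~\cite{BIS} as a black box, so there is no internal argument to compare yours against. Your sketch is the standard proof of the equivalence between dlogtime-uniform $\AC^0$ and first-order logic with built-in arithmetic on string structures, and as an outline both directions are right: circuit-to-formula by naming gates with $c$-tuples over $[n]$ and unfolding a gate-evaluation predicate $d$ times, formula-to-circuit by the syntax-directed translation in which quantifiers become unbounded fan-in gates of width $n$ and the arithmetic atoms are data independent of $x$.

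Two cautions about where the real content sits. First, in the ``only if'' direction you justify $\FO$-definability of the connection predicates partly by appeal to Proposition~\ref{prop:ag}~(iii); that proposition (Allender--Gore) is itself proved via the $\AC^0=\FO(+,\times)$ characterization, so this route is circular in spirit -- the fact you actually need, that every dlogtime-decidable predicate on $O(\log n)$-bit arguments is definable over $\big([n],+,\times,<\big)$, is precisely the hard core of~\cite{BIS} and should be cited as such (as you also do). Second, in the ``if'' direction the claim that the bits of the arithmetic tables, e.g.\ whether $k\cdot\ell=m$, are computable in time $O(\log n)$ is unproblematic for $+$ and $<$ but not for $\times$: multiplying two $\Theta(\log n)$-bit numbers on a deterministic multitape machine in $O(\log n)$ steps is not available by naive means, and handling this (via $\BIT$, or by realizing multiplication atoms through small uniform subcircuits, together with the choice of circuit encoding) is exactly the delicate uniformity bookkeeping that~\cite{BIS} is needed for. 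Since you explicitly defer these points to~\cite{BIS} -- which is also all the paper does -- your proposal is adequate as a citation-level justification, but it should not be read as a self-contained proof of the theorem.
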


This result and Proposition~\ref{prop:paraAC}~(iii) imply what is to be our
working definition of $\paraAC$: the parameterized problems that are
\emph{eventually definable}.

\begin{cor}\label{cor:evtfo}
Let $(Q, \kappa)$ be a parameterized problem such that $Q$ is decidable and
$\kappa$ is computable by an $\AC^0$-circuit family.
Then $(Q, \kappa)$ is in $\paraAC$ if and only if $(Q,
\kappa)$ is \emph{eventually definable}: there are a computable $h: \N\to \N$
and an $L^\rel_\arit\cup \{\ONE\}$-sentence $\varphi$ such that for all $x\in
\{0,1\}^*$ with $|x|\ge h(\kappa(x))$:
\begin{eqnarray*}
x\in Q & \Longleftrightarrow & \str S(x)\models \varphi.
\end{eqnarray*}
\end{cor}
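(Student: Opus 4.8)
The plan is to derive Corollary~\ref{cor:evtfo} by simply composing Proposition~\ref{prop:paraAC}(iii) --- which identifies membership in $\paraAC$ with being ``eventually in $\AC^0$'' --- with Theorem~\ref{thm:BIS}, which transfers membership of a \emph{classical} problem in $\AC^0$ into $\FO$-definability of that problem over the string structures $\str S(x)$, and back. The hypotheses of the corollary ($Q$ decidable, $\kappa$ computable by an $\AC^0$-circuit family) are exactly what is needed to apply Proposition~\ref{prop:paraAC}. The only nuisance is that $\str S(x)$ is defined only for $|x|\ge 2$ while Proposition~\ref{prop:paraAC}(iii) speaks of agreement for $|x|\ge h(\kappa(x))$; this is harmless, since replacing the computable $h$ by $k\mapsto\max\{h(k),2\}$ is again computable and both regimes then only constrain sufficiently long inputs.

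For the direction from $\paraAC$ to eventual definability, apply Proposition~\ref{prop:paraAC}(iii) to get a computable $h$ and an $\AC^0$-circuit family $(\C_n)_{n\in\N}$ with $x\in Q\iff\C_{|x|}(x)=1$ whenever $|x|\ge h(\kappa(x))$. The classical problem $Q_{\C}:=\{x\in\{0,1\}^*\mid\C_{|x|}(x)=1\}$ is in $\AC^0$ by definition, so Theorem~\ref{thm:BIS} yields an $L^\rel_\arit\cup\{\ONE\}$-sentence $\varphi$ with $\C_{|x|}(x)=1\iff\str S(x)\models\varphi$ for every $x$ with $|x|\ge 2$. Chaining the two equivalences gives $x\in Q\iff\str S(x)\models\varphi$ for all $x$ with $|x|\ge\max\{h(\kappa(x)),2\}$, which is eventual definability with threshold $k\mapsto\max\{h(k),2\}$.

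For the converse, suppose $(Q,\kappa)$ is eventually definable via a computable $h$ and a sentence $\varphi$. Put $Q_\varphi:=\{x\in\{0,1\}^*\mid|x|\ge 2\text{ and }\str S(x)\models\varphi\}$; by Theorem~\ref{thm:BIS} this is in $\AC^0$, hence decided by some $\AC^0$-circuit family $(\C_n)_{n\in\N}$, so $\str S(x)\models\varphi\iff\C_{|x|}(x)=1$ for all $x$ with $|x|\ge 2$. Then $x\in Q\iff\C_{|x|}(x)=1$ for all $x$ with $|x|\ge\max\{h(\kappa(x)),2\}$, and Proposition~\ref{prop:paraAC}(iii) --- applicable because $Q$ is decidable, $\kappa$ is $\AC^0$-computable, and $k\mapsto\max\{h(k),2\}$ is computable --- gives $(Q,\kappa)\in\paraAC$.

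I do not anticipate any genuine difficulty here: the argument is pure bookkeeping. The single point worth checking is the alignment of the two ``eventual'' regimes at short inputs, which is why one passes to $\max\{h(\cdot),2\}$; no circuit family or sentence needs to be patched on inputs of length $\le 1$, since neither Proposition~\ref{prop:paraAC}(iii) nor Theorem~\ref{thm:BIS} constrains behaviour there.
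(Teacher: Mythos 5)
Your proposal is correct and is essentially the paper's own argument: the paper obtains Corollary~\ref{cor:evtfo} precisely by combining Proposition~\ref{prop:paraAC}~(iii) with Theorem~\ref{thm:BIS}, exactly as you do in both directions. The small bookkeeping point about passing to the threshold $k\mapsto\max\{h(k),2\}$ is handled correctly and does not affect the argument.
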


In descriptive complexity the role of reductions is played by
interpretations. Let $L,L'$ be languages consisting of relation symbols and
constants. Let $w\in \N$ with $w\ge 1$. An \emph{interpretation $\inter I$ of
$L'$ in $L$ (of width $w$)} is given by an $L$-formula
$\varphi_{\textup{uni}}(\bar x)$, an $L$-formula $\varphi_R(\bar x_0, \cdots,
\bar x_{r-1})$ for each $r$-ary relation symbol $R\in L'$, and an $L$-formula
$\varphi_c(\bar x)$ for every constant $c\in L'$; here, $\bar x$, $\bar x_i$
are $w$-tuples of variables. Given an $L$-structure $A$ define the
$L'$-structure $A^{\inter I}$ as follows. It has universe $A^{\inter I}:=
\big\{\bar a\in A^w\bigmid A\models \varphi_{\textup{uni}}(\bar a)\big\}$,
interprets an $r$-ary $R\in L'$ by $\big\{(\bar a_0,\ldots,\bar a_{r-1})\in
(A^{\inter I})^r \bigmid A\models \varphi_R(\bar a_0, \ldots, \bar a_{r-1})
\big\}$, and a constant $c\in L'$ by the unique $\bar a\in A^{\inter I}$
satisfying $\varphi_c(\bar x)$ in $A$. If this uniqueness is violated or if
the universe $A^{\inter I}$ is empty, then $A^{\inter I}$ is not defined. If
$B\cong A^I$ for some~$I$, we say $B$ is {\em interpretable} in $A$. The
following is standard.
\begin{lem}\label{lem:interpretation}
Let $\inter I$ an interpretation of $L'$ in $L$ of width $w$ and $I'$ an
interpretation of $L''$ in $L'$ of width $w'$. Further let $A$ be an
$L$-structure such that $A^I$ is defined.
\begin{enumerate}\itemsep=0pt
\item[(i)] For every $L'$-formula $\varphi(x,y,\ldots)$ there is an
    $L$-formula $\varphi^I(\bar x,\bar y,\ldots)$ where $\bar x, \bar y,
    \ldots$ are $w$-tuples of variables such that for all $\bar a, \bar b,
    \ldots \in A^I$:
    \begin{eqnarray*}
    A^I \models \varphi(\bar a, \bar b, \ldots)
     & \iff &
    A \models \varphi^I(\bar a, \bar b, \ldots).
    \end{eqnarray*}

\item[(ii)] There is an interpretation $I'\circ I$ of $L''$ in $L$ of width
    $w\cdot w'$ such that if $(A^I)^{I'}$ is defined, then so is
    $A^{I'\circ I}$ and
    \[
    A^{I'\circ I }\cong (A^I)^{I'}.
    \]
\end{enumerate}
\end{lem}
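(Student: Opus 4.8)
The statement to prove is Lemma~\ref{lem:interpretation}, the standard composition lemma for first-order interpretations. Here is how I would approach it.

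\medskip

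\textbf{Proof proposal.} For part~(i), the plan is to define $\varphi^I$ by structural induction on $\varphi$, simultaneously for all formulas, treating the $w$-tuples of variables $\bar x,\bar y,\ldots$ as the surrogates for the single variables $x,y,\ldots$ of $\varphi$. For an atomic formula $R(x,y,\ldots)$ with $R\in L'$ of arity $r$, set $\varphi^I := \varphi_R(\bar x_0,\ldots,\bar x_{r-1})$, the defining $L$-formula from the interpretation; for an equality $x=y$ set $\varphi^I$ to be the conjunction $\bigwedge_{i<w} x_i = y_i$ asserting the two $w$-tuples agree coordinatewise; and for atoms involving a constant $c\in L'$, first replace $c$ by a fresh variable $\bar z$, conjoin $\varphi_c(\bar z)$, and quantify $\bar z$ existentially over the universe formula $\varphi_{\mathrm{uni}}$. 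Boolean connectives are passed through unchanged. The crucial case is the quantifier: $(\exists z\,\psi)^I$ is defined as $\exists \bar z\,(\varphi_{\mathrm{uni}}(\bar z)\wedge \psi^I)$, i.e.\ we relativize the new block of quantified variables to the set carved out by $\varphi_{\mathrm{uni}}$; dually for $\forall$. One then checks the displayed equivalence by induction on $\varphi$: the atomic cases are immediate from the definition of $A^I$ (using that $A^I$ is defined, so constants are interpreted by genuine elements), the Boolean cases are trivial, and the quantifier case uses precisely that the universe of $A^I$ is $\{\bar a\in A^w : A\models\varphi_{\mathrm{uni}}(\bar a)\}$, so that quantifying over $A^I$ on the left corresponds to quantifying over tuples satisfying $\varphi_{\mathrm{uni}}$ on the right.

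\medskip

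For part~(ii), the plan is to build $I'\circ I$ explicitly from the two given interpretations by applying part~(i) to each defining formula of $I'$. Concretely, $I'$ is given by $L'$-formulas $\varphi'_{\mathrm{uni}}(\bar x)$, $\varphi'_R(\bar x_0,\ldots)$, $\varphi'_c(\bar x)$ in $w'$-tuples of variables. Applying the translation of part~(i) turns each of these into an $L$-formula in $w\cdot w'$-tuples of variables (each of the $w'$ variables becomes a $w$-tuple). The universe formula of the composite is then $\varphi_{\mathrm{uni}}^{I'\circ I}(\bar x) := \varphi_{\mathrm{uni}}(\bar x^{(0)})\wedge\cdots\wedge\varphi_{\mathrm{uni}}(\bar x^{(w'-1)})\wedge (\varphi'_{\mathrm{uni}})^I(\bar x)$, where I write a $ww'$-tuple $\bar x$ as a concatenation $\bar x^{(0)}\cdots\bar x^{(w'-1)}$ of $w'$ many $w$-tuples: we must insist that each of the $w'$ blocks already lies in the universe of $A^I$ before asking that the $w'$-tuple of such elements satisfies $\varphi'_{\mathrm{uni}}$ in $A^I$. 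The relation and constant formulas are defined the same way, taking $(\varphi'_R)^I$, $(\varphi'_c)^I$ and conjoining the block-wise $\varphi_{\mathrm{uni}}$ conditions. Then, assuming $(A^I)^{I'}$ is defined, one verifies $A^{I'\circ I}\cong (A^I)^{I'}$: the natural candidate bijection sends a $ww'$-tuple $\bar a = \bar a^{(0)}\cdots\bar a^{(w'-1)} \in A^{I'\circ I}$ to the $w'$-tuple $(\bar a^{(0)},\ldots,\bar a^{(w'-1)})$ of elements of $A^I$, which by our universe formula lies in $(A^I)^{I'}$; part~(i) applied to $\varphi'_{\mathrm{uni}}$, $\varphi'_R$, $\varphi'_c$ shows this map is a well-defined bijection preserving all relations and constants, and a short argument handles definedness (in particular, the uniqueness of constants transfers).

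\medskip

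\textbf{Main obstacle.} The conceptual content is routine, so the real work is purely bookkeeping: keeping the variable tuples straight. The one place that genuinely needs care is the universe formula of the composite in part~(ii). It is tempting to set it equal to $(\varphi'_{\mathrm{uni}})^I$ alone, but part~(i) only guarantees the equivalence $A^I\models\psi(\bar a,\ldots)\iff A\models\psi^I(\bar a,\ldots)$ \emph{for tuples $\bar a$ already in $A^I$}, i.e.\ already satisfying $\varphi_{\mathrm{uni}}$. So the block-wise $\varphi_{\mathrm{uni}}$ conjuncts are not optional; they are exactly what makes the translated formulas behave correctly on the composite universe. A second minor subtlety is the treatment of constants and equality in part~(i): one has to handle terms/atoms mentioning constants by introducing and relativizing auxiliary quantifiers, and one must use that $A^I$ is defined so that $\varphi_c$ is satisfied by a unique element, otherwise the atomic base case fails. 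Once those points are flagged, the induction and the isomorphism check go through mechanically.
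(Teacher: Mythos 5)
Your proposal is correct, and it is exactly the standard argument: part (i) by structural induction with quantifiers relativized to $\varphi_{\textup{uni}}$ (equality read coordinatewise, constants eliminated via the uniquely satisfied $\varphi_c$), and part (ii) by translating the defining formulas of $I'$ through (i) and conjoining the block-wise $\varphi_{\textup{uni}}$ conditions, with the flattening map giving the isomorphism. The paper itself gives no proof of this lemma --- it is stated as standard --- so there is no alternative argument to compare against; your write-up, including the correctly flagged point that $(\varphi'_{\textup{uni}})^I$ alone would not suffice as the composite universe formula, supplies precisely the folklore proof being appealed to.
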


\medskip
The following is folklore, and a proof can be found
in~\cite[Appendix]{sch05}.
\begin{lem}\label{lem:karyinterpretation}
Let $d\in\N$.
\begin{enumerate}\itemsep=0pt
\item[(i)] For every $n\ge 2$ the standard $L^\rel_\arit$-structure $n^d$
    is interpretable in the standard $L^\rel_\arit$-structure~$n$. In fact,
    there is an interpretation $\inter I_d$ of width $d$ such that
    $n^d\cong n^I$ for every $n\ge 2$, and the isomorphism maps each
    $a<n^d$ to the length $d$ representation of $a$ in base~$n$.

\item[(ii)] There is an $L^\rel_\arit$-formula $\BIT(x, y)$ such that for
    every $n\ge 2$ and all $i, j\in [n]$:
    \begin{eqnarray*}
    n\models \BIT(i,j)
     & \iff &
    \text{the $j$-th bit of $\bin(i)$ is $1$}.
    \end{eqnarray*}
\end{enumerate}
\end{lem}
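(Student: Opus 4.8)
The plan is to build the interpretation $\inter I_d$ of part~(i) directly and then to reduce part~(ii) to it. For part~(i), take the width to be $d$ and let the universe formula be trivially true, so that $n^{\inter I_d}$ has universe $[n]^d$; identify a tuple $(a_{d-1},\dots,a_0)\in[n]^d$ with the number $\sum_{j<d}a_j n^j\in[n^d]$. Under this identification $0^{n^d}$ and $1^{n^d}$ are the tuples $(0,\dots,0)$ and $(0,\dots,0,1)$, which are definable; $<^{n^d}$ is the lexicographic order on $d$-tuples, a fixed first-order formula (a disjunction over the $d$ digit positions); and $+^{n^d}$, $\times^{n^d}$ are given by schoolbook base-$n$ addition and multiplication. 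Since $d$ is a constant, the latter introduce only $O(d)$, respectively $O(d^2)$, auxiliary single-digit and carry variables, all bounded by $n$, and each single-digit clause — a statement such as ``$a+b+e=c+e'\cdot n$'' with $a,b,c<n$ and $e,e'\in\{0,1\}$, or its analogue for a two-digit product — is expressible from $+^n,\times^n,<^n$ by a bounded case analysis on overflow, using that $+^n$ and $\times^n$ are graphs of \emph{partial} functions so that ``$a+b$ (or $a\cdot b$) overflows $[n]$'' is itself first-order. Assembling the clauses yields one first-order $L^\rel_\arit$-formula per symbol, uniformly in $n$, and the induced map $a\mapsto(\text{length-}d\text{ base-}n\text{ representation of }a)$ is readily checked to be an isomorphism $n^d\cong n^{\inter I_d}$. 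The work here is routine; the only care needed is the bookkeeping of overflows, for which the partiality of $+^n,\times^n$ is exactly the right tool, and in particular the definability of the two-digit product already contains the one non-obvious ingredient (the case $d=2$).

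For part~(ii) I would first reduce $\BIT$ to the graph of the exponentiation relation $E:=\{(j,p)\mid p=2^j\text{ and }p<n\}$. For $i,j<n$ the $j$-th bit of $\bin(i)$ can be $1$ only if $2^j<n$ (otherwise $i<2^j$), so fix $p$ with $(j,p)\in E$; by a small case distinction on whether $p+p$ still lies in $[n]$, the bit equals $1$ iff either $p+p\notin[n]$ and $p\le i$, or $p+p\in[n]$ and $\exists a\exists b\,(i=a\cdot(p{+}p)+p+b\wedge b<p)$, all of which is first-order once $E$ is defined. Moreover ``$p$ is a power of two'' is first-order — $p\ge 1$ and $p$ has no odd divisor $>1$, where $q\mid p$ is $\exists r\,{\times^n}(q,r,p)$ and ``$q$ odd'' is $\neg\exists r\,{+^n}(r,r,q)$ — so it remains only to pin down the exponent: $(j,p)\in E$ iff $p$ is a power of two and $j$ is the number of powers of two strictly below $p$.

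Expressing this count in first-order logic, uniformly in $n$, is the delicate point, and I expect it to be the main obstacle. The set of powers of two below a given $p<n$ has only $O(\log n)$ elements, but a naive encoding of the doubling sequence $1,2,\dots,2^j=p$ witnessing ``$p=2^j$'' blows up superpolynomially, so some cleverness is required. The standard (folklore) route, which I would follow \cite{sch05}, is to carry out the count inside an interpreted structure $n^c$ from part~(i) for a suitable constant $c$ — there one can name the element $n$ and has polynomially much room for auxiliary data — using a counting/coding argument (for instance a Chinese-remainder or G\"odel-$\beta$ style encoding of a sufficiently compressed witness for the relation $p=2^j$). Once $E$ has been defined in $n^c$, Lemma~\ref{lem:interpretation}(i) pulls the defining formula back to $n$, and combining it with the reduction of the preceding paragraph yields $\BIT$.
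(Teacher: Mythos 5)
The paper itself gives no proof of this lemma --- it is quoted as folklore with a pointer to \cite[Appendix]{sch05} --- so your outline has to stand on its own, and at present it does not: both of its decisive steps are left unproven. In part~(i) everything you call routine really is routine (lexicographic order, carry propagation for addition, assembling the schoolbook scheme), but the one remaining point is the entire content of the statement: the definability over $\big([n],+^n,\times^n,<^n\big)$ of the two-digit partial product, i.e.\ of the relation $\big\{(a,b,h,l)\in[n]^4 \mid a\cdot b = h\cdot n + l,\ l<n\big\}$. Your justification --- ``a bounded case analysis on overflow, using that $+^n,\times^n$ are graphs of partial functions'' --- works for addition, where the carry is a single bit and there really are only two cases, but it does not transfer: the overflow information is one bit, whereas the high digit $h=\floor{ab/n}$ ranges over up to $n$ values, so no bounded case analysis determines it, and the partial graph $\times^n$ is simply silent once $ab\ge n$. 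Defining this relation is essentially equivalent to interpreting the standard structure $n^2$ in $n$ (your own ``case $d=2$''), and the known arguments for it need the same nontrivial machinery (Bennett/CRT-style coding of witnesses, or a prior definition of $\BIT$) as the rest of the lemma; direct attempts lead either to iterated addition of $\Theta(\log n)$ many summands or to a Euclidean-type recursion of logarithmic depth, neither of which is first-order expressible outright. So the crux of (i) is asserted, not proved.

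In part~(ii) the reduction of $\BIT$ to $E=\{(j,p)\mid p=2^j<n\}$ and the definition of ``power of two'' are correct and standard, but the definability of $E$ --- equivalently, of the count of powers of two below $p$ --- is exactly the classical hard kernel, and you explicitly leave it open: the G\"odel-$\beta$/Chinese-remainder hint is not carried out, and, as you note yourself, the naive doubling-sequence witness has $\Theta(\log^2 n)$ bits and hence does not fit into an element of $[n^c]$, so ``sufficiently compressed witness'' is precisely the missing idea rather than a proof step. Deferring it to ``the standard folklore route \cite{sch05}'' is circular here, since that is the very reference the paper cites for the whole lemma. A concrete way to finish, granted (i), is to evaluate inside the interpreted structure $n^c$ a $\Delta_0$ definition of the graph of exponentiation whose quantifiers are polynomially bounded in its arguments (see \cite[Section~V.3.(c)]{hp}, the same source the paper uses for the formula $2^x{\le}y$); but this still rests on part~(i), whose crux is the unproven point above. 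As it stands, the proposal correctly isolates the two genuine difficulties of the lemma but establishes neither.
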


\section{$\phalt$ and $\NE$ versus $\LINH$}\label{sec:halt}

In this section we first introduce a workable notion of reduction that
preserves $\paraAC$, prove Theorem~\ref{thm:phaltequal}, and then consider
some generalizations and variants that will be instrumental later in
Section~\ref{sec:truth} for the proof of Theorem~\ref{thm:truthNP} and its
variants.

\subsection{Eventually definable reductions}
A \emph{parameterized} reduction from a parameterized problem $(Q,\kappa)$ to
another $(Q',\kappa')$ is a reduction $r:\{0,1\}^*\to\{0,1\}^*$ from $Q$ to
$Q'$ such that $\kappa'\circ r\le f\circ \kappa$ for some computable function
$f: \N\to \N$.

\begin{defn}\label{def:kappaevt}
Let $\kappa$ be a parameterization. A function $r:\{0,1\}^*\to \{0,1\}^*$ is
\emph{$\kappa$-eventually definable} if there are a computable $h:\N\to\N$
and an interpretation~$\inter I$ such that
\begin{eqnarray*}
\str S(x)^{\inter I} & \cong & \str S(r(x))
\end{eqnarray*}
for all $x\in\{0,1\}^*$ with $|x|\ge h(\kappa(x))$.
\end{defn}


\begin{exa}\label{exa:internum}
The function
\[
\angle{1^n, x} \mapsto
 1^{\displaystyle \num(\langle \bin(n),x\rangle)}
\]
where $n\in \N$, $x\in \{0,1\}^*$ is $\kappa$-eventually definable where
$\kappa$ maps $\angle{1^n, x}$ to $|x|$ (both functions map arguments that
are not of the required form to, say, $0$).
\end{exa}

\proof Note $\num(\angle{\bin(n), x})< 2^{|\angle{\bin(n), x}|+ 1}\le
2^{O(\log n+ |x|)}$. Choose a constant $d\in \N$ and a computable $h: \N\to
\N$ such that $\num(\angle{\bin(n), x})< n^d$ and $\num(x)< n$ whenever $n\ge
h(|x|)$. We describe an interpretation of $\str S(1^{\num(\angle{\bin(n),
x})})$ in $\str S(\angle{1^n, x})$ whenever $n\ge h(|x|)$. It will be clear
that the interpretation does not depend on $n$, $x$.

Let $(n,\num(x))$ be the expansion of the standard $L^\rel_\arit$-structure
$n$ that interprets a new constant by $\num(x)\in[n]$. This is interpretable
in $\str S(\angle{1^n, x})$ using $\BIT$. By
Lemma~\ref{lem:karyinterpretation}, also $(n^d, \num(x))$ is interpretable in
$\str S(\angle{1^n, x})$. But this structure defines ($n$ and)
$\num(\angle{\bin(n), x})\in [n^d]$ using $\BIT$. Thus, $\str
S(1^{\num(\angle{\bin(n), x})})$ is interpretable in $\str S(\angle{1^n, x})$
as claimed. \proofend

Recall, a function $r:\{0,1\}^*\to\{0,1\}^*$ is \emph{honest} if $|r(x)|\ge
|x|^{\Omega(1)}$.

\begin{lem}\label{lem:transitive}
Assume that $r,r': \{0,1\}^*\to \{0,1\}^*$ are $\kappa$- and
$\kappa'$-eventually definable, respectively, that $\kappa'\circ r\le f\circ
\kappa$ for some computable $f:\N\to \N$, and that $r$ is honest. Then
$r'\circ r$ is $\kappa$-eventually definable.
\end{lem}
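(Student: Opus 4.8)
The plan is to compose the two interpretations using Lemma~\ref{lem:interpretation}(ii), taking care that the threshold beyond which the second interpretation is guaranteed to work is reached whenever the first one's threshold is reached, which is where honesty of $r$ enters. Concretely, let $\inter I$ witness that $r$ is $\kappa$-eventually definable with threshold $h$, and let $\inter I'$ witness that $r'$ is $\kappa'$-eventually definable with threshold $h'$; so $\str S(x)^{\inter I}\cong\str S(r(x))$ whenever $|x|\ge h(\kappa(x))$, and $\str S(y)^{\inter I'}\cong\str S(r'(y))$ whenever $|y|\ge h'(\kappa'(y))$. I want to produce a single interpretation $\inter I'\circ\inter I$ and a computable threshold $h''$ such that $\str S(x)^{\inter I'\circ\inter I}\cong\str S(r'(r(x)))$ for all $x$ with $|x|\ge h''(\kappa(x))$.

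The key steps are as follows. First, fix $x$ with $|x|$ large (how large to be determined). Since $r$ is honest, there is a constant $c\ge 1$ with $|r(x)|\ge |x|^{1/c}$ for all sufficiently long $x$; absorb the finitely many exceptions into the threshold. Second, I need $|r(x)|\ge h'(\kappa'(r(x)))$ so that $\inter I'$ applies to $y=r(x)$. Here I use $\kappa'(r(x))\le f(\kappa(x))$, so $h'(\kappa'(r(x)))\le \max_{j\le f(\kappa(x))} h'(j)=:H'(f(\kappa(x)))$, a computable function of $\kappa(x)$. Combining with honesty, it suffices that $|x|^{1/c}\ge H'(f(\kappa(x)))$, i.e. $|x|\ge H'(f(\kappa(x)))^c$. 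Third, I also need $|x|\ge h(\kappa(x))$ so that $\inter I$ applies to $x$. So set $h''(k):=\max\{h(k),\, H'(f(k))^c\}$ (and enlarge it to swallow the finitely many short $x$ where honesty's asymptotic bound fails); this is computable since $h$, $h'$, $f$ are. Fourth, for $x$ with $|x|\ge h''(\kappa(x))$ we have both $\str S(x)^{\inter I}\cong\str S(r(x))$ and, since the isomorphism type is what matters and $r(x)$ meets $\inter I'$'s threshold, $\str S(r(x))^{\inter I'}\cong\str S(r'(r(x)))$; hence $\big(\str S(x)^{\inter I}\big)^{\inter I'}\cong\str S(r'(r(x)))$. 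In particular $\big(\str S(x)^{\inter I}\big)^{\inter I'}$ is defined, so by Lemma~\ref{lem:interpretation}(ii) the composite interpretation $\inter I'\circ\inter I$ is defined on $\str S(x)$ and $\str S(x)^{\inter I'\circ\inter I}\cong\big(\str S(x)^{\inter I}\big)^{\inter I'}\cong\str S(r'(r(x)))$. Finally, since $r'\circ r$ is a reduction (composition of reductions) and $h''$ is computable, this exhibits $r'\circ r$ as $\kappa$-eventually definable, as required. \proofend

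The only genuinely delicate point — the main obstacle — is the threshold bookkeeping in the third step: the hypothesis $\str S(r(x))^{\inter I'}\cong\str S(r'(r(x)))$ is only promised when $r(x)$ is long \emph{relative to its own parameter} $\kappa'(r(x))$, and a priori $r(x)$ could be short (if $r$ shrank its input), so I must rule this out. Honesty of $r$ is exactly what forbids this, and the bound $\kappa'\circ r\le f\circ\kappa$ is exactly what converts "long relative to $\kappa'(r(x))$" into a condition on $\kappa(x)$. Everything else is routine: the existence of the composite interpretation and the transfer of the isomorphism are handed to us by Lemma~\ref{lem:interpretation}, and that $r'\circ r$ remains a reduction with a computable parameter bound ($\kappa''\circ(r'\circ r)\le f'\circ f\circ\kappa$ if $\kappa''\circ r'\le f'\circ\kappa'$) is immediate. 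I would state the honesty constant and the definition of $h''$ explicitly and otherwise keep the argument short.
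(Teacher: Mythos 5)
Your proof is correct and follows essentially the same route as the paper's: compose the two interpretations via Lemma~\ref{lem:interpretation}(ii), use honesty of $r$ together with $\kappa'\circ r\le f\circ\kappa$ to guarantee that $r(x)$ clears the second threshold, and take a computable maximum as the new threshold. The only cosmetic difference is that you monotonize $h'$ by taking $\max_{j\le m}h'(j)$ where the paper simply assumes $h'$ nondecreasing; the bookkeeping is otherwise identical.
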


\proof Choose $\inter I,h$ for $r$ and $\inter I',h'$ for $r'$ according to
Definition~\ref{def:kappaevt}. We can assume
that~$h'$ is nondecreasing. Choose $n_0, c\in\N$ such that $|r(x)|\ge
|x|^{1/c}$ for all $x\in\{0,1\}^*$ with
$|x|\ge n_0$. Define $g:\N\to\N$ by
\[
g(k):= \max\big\{h'(f(k))^c,h(k),n_0\big\}.
\]
We claim that $I'\circ I$ and $g$ witness that $r'\circ r$ is
$\kappa$-eventually definable. To verify this let $x\in\{0,1\}^*$ satisfy
$|x|\ge g(k)$ where $k:= \kappa(x)$. Then $|r(x)|\ge |x|^{1/c}\ge h'(f(k))\ge
h'(\kappa'(r(x)))$ using that~$h'$ is nondecreasing. Hence $\str
S(r(x))^{I'}\cong \str S(r'(r(x)))$. Then $\str S(x)^{I'\circ I}\cong \str
S(r'(r(x)))$ because $\str S(x)^I\cong\str S(r(x))$, because $|x|\ge h(k)$.
\proofend

\begin{defn}\label{defn:evtdef}
Let $(Q, \kappa)$ and $(Q', \kappa')$ be parameterized problems. An
\emph{eventually definable reduction from $(Q, \kappa)$ to $(Q', \kappa')$}
is a parameterized reduction from $(Q, \kappa)$ to $(Q', \kappa')$ that is
honest and $\kappa$-eventually definable.
\end{defn}

\begin{rem}\label{rem:evtdef}
A parameterized problem is in $\paraAC$ if and only if there is an eventually
definable reduction from $(Q,\kappa)$ to a trivial problem, say, $(Q_0,
\kappa_0)$ for $Q_0$ the set of strings starting with $0$ and $\kappa_0$ is
constantly $0$.
\end{rem}

It is straightforward to check that this reducibility is transitive and
preserves membership in~$\paraAC$:
\begin{lem}\label{lem:ACclosed}
Assume there is an eventually definable reduction from $(Q, \kappa)$ to $(Q',
\kappa')$.
\begin{enumerate}\itemsep=0pt
\item[(i)] If there is an eventually definable reduction from $(Q',
    \kappa')$ to $(Q'', \kappa'')$, then there is one from $(Q, \kappa)$ to
    $(Q'', \kappa'')$.

\item[(ii)]If $(Q', \kappa')\in \paraAC$,  then $(Q, \kappa)\in \paraAC$.
\end{enumerate}
\end{lem}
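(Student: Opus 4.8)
The plan is to prove both parts essentially by composing interpretations, using Lemma~\ref{lem:transitive} as the main engine. For part~(i), let $r$ be the given eventually definable reduction from $(Q,\kappa)$ to $(Q',\kappa')$ and $r'$ an eventually definable reduction from $(Q',\kappa')$ to $(Q'',\kappa'')$. First I would observe that $r' \circ r$ is again a parameterized reduction from $Q$ to $Q''$: it is a many-one reduction because the composition of reductions is a reduction, and the parameter bound $\kappa'' \circ (r'\circ r) \le f'\circ \kappa' \circ r \le f'\circ f \circ \kappa$ holds because $r$ satisfies $\kappa'\circ r \le f\circ\kappa$ and $r'$ satisfies $\kappa''\circ r' \le f'\circ\kappa'$, with $f'$ without loss of generality nondecreasing. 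Honesty is preserved because $|r'(r(x))| \ge |r(x)|^{\Omega(1)} \ge |x|^{\Omega(1)}$. Finally, $\kappa$-eventual definability of $r'\circ r$ is exactly the content of Lemma~\ref{lem:transitive}, whose hypotheses—$r$ honest, $r$ is $\kappa$-eventually definable, $r'$ is $\kappa'$-eventually definable, and $\kappa'\circ r \le f\circ\kappa$—are all in hand. So $r'\circ r$ is an eventually definable reduction from $(Q,\kappa)$ to $(Q'',\kappa'')$.

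For part~(ii), I would apply part~(i) together with Remark~\ref{rem:evtdef}. Assuming $(Q',\kappa')\in\paraAC$, that remark gives an eventually definable reduction from $(Q',\kappa')$ to the trivial problem $(Q_0,\kappa_0)$. Composing with the given eventually definable reduction from $(Q,\kappa)$ to $(Q',\kappa')$ via part~(i) yields an eventually definable reduction from $(Q,\kappa)$ to $(Q_0,\kappa_0)$, and applying Remark~\ref{rem:evtdef} in the reverse direction concludes that $(Q,\kappa)\in\paraAC$. One small point to check here is that the hypotheses of Corollary~\ref{cor:evtfo}/Remark~\ref{rem:evtdef} on $(Q,\kappa)$ (decidability of $Q$ and $\AC^0$-computability of $\kappa$) are available; since a parameterized problem by definition has a polynomial-time computable parameterization and since in all our applications the underlying problems are decidable, this is harmless, but I would state it explicitly or note that the ``only if'' direction of Remark~\ref{rem:evtdef} needs only the definition of $\paraAC$ and does not require these extra assumptions.

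The only genuine subtlety—and the step I expect to be the main obstacle—is verifying that the honesty hypothesis of Lemma~\ref{lem:transitive} is actually part of the definition of an eventually definable reduction, which it is by Definition~\ref{defn:evtdef}, so there is in fact no gap: honesty is built in precisely so that transitivity goes through. I would therefore emphasize that this is where the honesty requirement earns its keep—without it, the exponent blow-up in composing the functions $h$ and $h'$ from Definition~\ref{def:kappaevt} could not be absorbed into a single computable threshold function $g$. Beyond that, both parts are routine bookkeeping: chase the parameter bounds, chase honesty, and invoke Lemma~\ref{lem:transitive} and Remark~\ref{rem:evtdef}.
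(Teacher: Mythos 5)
Your proposal is correct and matches the paper's argument: the paper proves (i) by invoking Lemma~\ref{lem:transitive} for the composed reduction and (ii) by combining (i) with Remark~\ref{rem:evtdef}, exactly as you do, with your write-up merely spelling out the routine checks (parameter bound, honesty, composition of thresholds) that the paper leaves implicit.
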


\proof
%
%
(i) follows from Lemma~\ref{lem:transitive}. (ii) follows from (i) and
Remark~\ref{rem:evtdef}. \proofend

\subsection{The complexity of $\phalt_=$}
It is known that the question whether $\phalt_=$ is fixed-parameter tractable
is closely related to the relationship of $\E$ and $\NE$:

\begin{theo}[\cite{aumdom08,cheflu09}]\label{thm:phaltENE}
$\phalt_=\in \FPT$ if and only if $\NE\subseteq \E$.
\end{theo}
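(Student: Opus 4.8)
\textbf{Proof plan for Theorem~\ref{thm:phaltENE}.} The plan is to prove both directions by passing through the unary encoding operator $\un$ and Proposition~\ref{prop:ag}, relating $\phalt_=$ to a classical problem about machine runtimes. First I would introduce, for each nondeterministic Turing machine $\mathbb M$, the quantity $\mathit{time}(\mathbb M)$ equal to the minimal number of steps in which $\mathbb M$ accepts the empty input (and $\infty$ if it never does). The underlying classical problem of $\phalt_=$ is then $\{\angle{1^n,\mathbb M}\mid \mathbb M \text{ accepts the empty input in exactly } n \text{ steps}\}$; note this is nonempty only for $n = \mathit{time}(\mathbb M)$ when we normalise $\mathbb M$ to halt immediately upon acceptance, so morally deciding $\phalt_=$ on parameter $k$ amounts to computing $\mathit{time}(\mathbb M)$ for machines of size $k$, which is an exponential-time-ish task.

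For the direction $\NE\subseteq\E \Rightarrow \phalt_=\in\FPT$: given $(1^n,\mathbb M)$ with $|\mathbb M|=k$, I want a deterministic algorithm running in time $f(k)\cdot n^{O(1)}$. The naive approach is to simulate $\mathbb M$ for $n$ steps along all nondeterministic branches, but there are up to $c^n$ branches, which is too expensive. Instead I would use the hypothesis $\NE\subseteq\E$: consider the classical problem $B := \{\bin(\angle{\mathbb M,1^t})\mid \mathbb M \text{ accepts the empty input in exactly } t \text{ steps}\}$ (suitably padded so that the length is linear in $|\mathbb M|+t$, making $1^t$ contribute length $t$); this problem is in $\NE$ because an $\NE$-machine, on input of length $m\approx |\mathbb M|+t$, has time $2^{O(m)}$ available, enough to guess and check a length-$t$ accepting run of $\mathbb M$ (and verify that $\mathbb M$ halts exactly at step $t$). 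By hypothesis $B\in\E$, so $B$ is decided deterministically in time $2^{O(m)}$. Now on input $(1^n,\mathbb M)$ we have $t=n$ and $m = O(k+n)$, so the $\E$-algorithm for $B$ runs in time $2^{O(k+n)} = 2^{O(k)}\cdot 2^{O(n)}$ — which is still exponential in $n$, not $f(k)\cdot n^{O(1)}$. The fix is to exploit that $n$ is given in \emph{unary}: rephrase via $\un$. Precisely, I would argue: $\phalt_=\in\FPT$ iff for every fixed $k$ the slice $\{1^n\mid \mathbb M \text{ of size }k, \mathbb M\text{ accepts in exactly }n\text{ steps}\}$, as a function of $\mathbb M$ and $n$, is decidable within the required time bound; and the connecting trick is that an instance $(1^n,\mathbb M)$ of $\phalt_=$ has \emph{size} $\Theta(n)$, so ``time $f(k)\cdot n^{O(1)}$'' is genuinely ``time polynomial in the input length, with a $k$-dependent constant''. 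So set $Q_{\mathbb M}:= \{t\in\N\mid \mathbb M \text{ accepts the empty input in exactly } t \text{ steps}\}$ viewed as a set of integers, encode it in binary, and observe that $\un$ of (a padded version of) the binary problem ``$\angle{\mathbb M, \bin(t)}$ with $t\in Q_{\mathbb M}$'' is essentially $\phalt_=$ with the parameter absorbed. Then Proposition~\ref{prop:ag}(ii) converts $\NE\subseteq\E$ (equivalently: the binary problem, which is in $\NE$, lies in $\E$) into: its $\un$-image is in $\P$, i.e. decidable in time polynomial in $n$ — and the $k$-dependence lands in a computable $f(k)$ multiplicative factor after we handle the finitely-many-machines-of-each-size bookkeeping. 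That gives $\phalt_=\in\FPT$.

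For the converse $\phalt_=\in\FPT \Rightarrow \NE\subseteq\E$: take an arbitrary $Q\in\NE$; by Proposition~\ref{prop:ag}(i), $\un(Q)\in\NP$, so there is a polynomial-time nondeterministic machine $\mathbb N$ and a polynomial bound $n^c$ such that $x\in\un(Q)$ iff $\mathbb N$ accepts $x$ within $|x|^c$ steps. For $1^n\in\un(Q)$-candidates (recall $\un(Q)\subseteq\{1^m\mid m\in\N\}$), I would build from $\mathbb N$ and the fixed integer $n$ a machine $\mathbb M_n$ of size $O(\log n)$ that has $1^n$ hard-wired and simulates $\mathbb N$ on $1^n$, arranged (by a clock/padding gadget) so that $\mathbb M_n$ accepts the empty input in \emph{exactly} $N := g(n)$ steps for a fixed explicit polynomial $g$ iff $\mathbb N$ accepts $1^n$ within $n^c$ steps. (Making the acceptance time \emph{exactly} $N$ rather than ``at most'' is the usual trick of running an explicit step-counter down to zero after acceptance is detected; this is precisely why $\phalt_=$ and not $\phalt$ is what we need.) Then $n\in Q$ iff $(1^N,\mathbb M_n)\in\phalt_=$, with parameter $|\mathbb M_n| = O(\log n)$. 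By the hypothesis $\phalt_=\in\FPT$, this is decided in time $f(|\mathbb M_n|)\cdot N^{O(1)} = f(O(\log n))\cdot n^{O(1)}$; since $f$ is computable this is $2^{O(\log n)} = n^{O(1)}$ in the length of $1^n$, wait — no: it is $f(O(\log n))\cdot n^{O(1)}$ which, as a function of the original input $x$ with $|x| = \log n$ roughly, is $f(O(|x|))\cdot 2^{O(|x|)} = 2^{O(|x|)}$, i.e. $Q\in\E$. (The point being: the $\FPT$ running time, measured against the \emph{short} original input $x$ of length $\approx\log n$, is exponential in $|x|$, exactly the $\E$ budget.) Thus $\NE\subseteq\E$.

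\textbf{Main obstacle.} The delicate point, and the one I would spell out most carefully, is the bookkeeping that makes the $k$-dependence come out as a genuinely \emph{computable} function $f(k)$ while keeping the polynomial-in-$n$ factor uniform: in the forward direction one must show that the $\E$-algorithm for the binary problem, when fed a unary-length input, collapses to $\P$-time \emph{uniformly over all machines}, so that the finitely many machines of size $k$ only contribute via a precomputable table of size depending on $k$; and in the backward direction one must verify that the machine $\mathbb M_n$ can be built of size $O(\log n)$ with acceptance time controlled to hit $N$ \emph{exactly} — the exact-time clock gadget is routine but it is the conceptual reason the theorem is about $\phalt_=$ and why the analogous statement for $\phalt$ (which is insensitive to padding the runtime upward) would instead be tied to $\NE\subseteq\E$ only trivially or not at all. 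I expect both halves to reduce, after these preparations, to straightforward applications of Proposition~\ref{prop:ag}(i),(ii).
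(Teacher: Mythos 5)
Your forward direction is essentially sound once you commit to encoding the step count in binary: the classical problem underlying $\phalt_=$ with $n$ written in binary is in $\NE$, the hypothesis $\NE\subseteq\E$ puts it in $\E$, and running that algorithm on the binary-converted instance already takes time $2^{O(\log n+k)}=2^{O(k)}\cdot n^{O(1)}$, which is an fpt bound; the detour through $\un$ and Proposition~\ref{prop:ag}~(ii) also works, but the ``finitely-many-machines-of-size-$k$ bookkeeping'' you invoke is neither needed nor available (minimal accepting times are not computable), so it should be dropped. The genuine gap is in the converse. You hard-wire $1^n$ into a machine $\mathbb M_n$, so the $\phalt_=$-instance you query has parameter $|\mathbb M_n|=\Theta(\log n)$. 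But $\phalt_=\in\FPT$ only guarantees time $f(|\mathbb M_n|)\cdot N^{O(1)}$ for an \emph{arbitrary} computable $f$, and your final step ``$f(O(|x|))\cdot 2^{O(|x|)}=2^{O(|x|)}$'' is false in general: take $f(k)=2^{2^k}$, then $f(\Theta(\log n))$ is about $2^{n^{\Theta(1)}}$, far beyond the $\E$ budget. With a parameter that grows with the input, fixed-parameter tractability gives you nothing; you must arrange for the parameter to be an absolute constant.

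That is precisely what the exact-step condition buys, and it is the point your clock-gadget remark stops short of. The standard argument (and the template the paper itself uses for the $\paraAC$ analogue, Theorem~\ref{thm:phaltequal}~(i), via the machine $\mathbb M^*$ in Section~\ref{sec:halt}) uses one \emph{single fixed} machine: given $Q\in\NE$ accepted by $\mathbb M$ in time $\num(x)^c-2|x|-2$, the machine $\mathbb M^*$ guesses a string $y$, simulates $\mathbb M$ on $y$, and pads so that its total accepting time is exactly $\num(y)^c+1$. Since $y\mapsto\num(y)^c+1$ is injective, $x\in Q$ iff $(1^{\num(x)^c+1},\mathbb M^*)\in\phalt_=$: the input $x$ is transmitted through the unary runtime, not through the machine, the parameter $|\mathbb M^*|$ is a constant, and the fpt algorithm then runs in time $f(|\mathbb M^*|)\cdot 2^{O(c|x|)}=2^{O(|x|)}$, giving $Q\in\E$. (With ``at most $n$ steps'' instead of ``exactly'', the machine could accept via a shorter guessed $y$ and this correspondence collapses, which is why the argument needs $\phalt_=$.) Note also that the paper does not prove Theorem~\ref{thm:phaltENE} itself---it is quoted from the literature---but the proof of Theorem~\ref{thm:phaltequal} shows exactly the construction your backward direction is missing.
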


Theorem~\ref{thm:phaltequal}~(i) is a $\paraAC$-analogue of this result.

\medskip\noindent \emph{Proof of Theorem~\ref{thm:phaltequal}:} (ii) follows easily
from the equivalence that a nondeterministic Turing machine $\mathbb M$
accepts the empty input in at most $n$ steps if and only if $\mathbb M$
accepts the empty input in exactly $n'$ steps for some $n'\le n$.

To prove (i), first assume $\NE\subseteq \LINH$ and let $Q$ be the classical
problem underlying $\phalt_=$ but with input $n$ encoded in binary:
\nprob{$Q$}{$n \in \mathbb N$ in \emph{binary} and a nondeterministic Turing
machine $\mathbb M$}{Does $\mathbb M$ accept the empty input in exactly $n$
steps?}
Clearly, $Q\in \NE$, so by assumption and Proposition~\ref{prop:ag}~(iii) we
have $\un(Q)\in \AC^0$. Recall
\begin{align*}
\un(Q)=
 \Big\{1^{\num(\angle{\bin(n), \mathbb M})} \Bigmid
 \begin{array}{l}
 \text{the nondeterministic Turing machine $\mathbb M$} \\
  \text{accepts the empty input
   in exactly $n$ steps}
  \end{array}
 \Big\}.
\end{align*}
By Example~\ref{exa:internum} the map $\angle{1^n, \mathbb M} \mapsto
1^{\num(\angle{\bin(n), \mathbb M})}$ is eventually definable with respect to
the parameterization of $\phalt_=$. It is a honest parameterized reduction to
$(\un(Q), \kappa)$ where~$\kappa$ maps $1^{\num(\angle{\bin(n), \mathbb M})}$
to $|\mathbb{M}|$. Since $(\un(Q), \kappa)\in \paraAC$,
Lemma~\ref{lem:ACclosed} implies $\phalt_=\in \paraAC$.

\medskip
Conversely, assume $\phalt_=\in \paraAC$. Let $Q\subseteq \{0,1\}^*$ be a
problem in $\NE$. To show that $Q\in \LINH$, it suffices to prove $\un(Q)\in
\AC^0$ again by Proposition~\ref{prop:ag}~(iii).

As $Q \in \NE$ there is a nondeterministic Turing machine $\mathbb M$ and a
constant $c \in \mathbb N$ such that $\mathbb M$ accepts $Q$ in time at most
$\num(x)^c-2|x|-2$. Consider the nondeterministic
Turing machine~$\mathbb M^*$\label{page:Mstar} that started with the empty
input runs as follows:
\begin{center}
\fbox{
\begin{minipage}[t]{15cm}
\begin{algorithm}
\im0 guess $y\in\{0,1\}^*$

\im0 simulate $\mathbb M$ on $y$

\im0 \IF\ $\mathbb M$ rejects, \THEN\ reject

\im0 make dummy steps such that so far the total
 running time  is $\num(y)^c$

\im0 accept.
\end{algorithm}
\end{minipage}
}
\end{center}
Line~1 takes exactly $2|y|+2$ many steps by moving the head forth and back on
some tape, 
so the dummy steps in line~4 are possible. Since $\num$ is injective, we have
\begin{eqnarray}\label{eq:mes}
x \in Q  & \iff & \text{$\mathbb M^*$ accepts the empty input tape
in exactly $ \num(x)^c+1$ many steps}.
\end{eqnarray}
Since $\mathbb M^*$ is a fixed machine,   $\phalt_=\in \paraAC$ implies that
the classical problem
\[
Q':=\Big\{ 1^n\mid \text{$\mathbb M^*$ accepts the empty input tape
in exactly $n+1$ many steps}\Big\}
\]
is in $\AC^0$. Choose a first-order sentence $\varphi$ for $Q'$ according to
Theorem~\ref{thm:BIS}. Lemma~\ref{lem:karyinterpretation} gives an
interpretation $\inter I$ such that $\str S(1^n)^{\inter I}\cong \str
S(1^{n^c})$ for all $n\ge 2$. Then $1^{n^c}\in Q'$ is equivalent to $\str
S(1^n)\models \varphi^{\inter I} $. Thus the r.h.s.\ in~\eqref{eq:mes} is
equivalent to $\str S(1^{\num(x)})\models \varphi^{\inter I}$ provided
$\num(x)\ge 2$, i.e., $x$ is non-empty. The l.h.s.\ in \eqref{eq:mes} is
equivalent to $1^{\num(x)}\in \un(Q)$. Thus $\varphi^{\inter I}$ witnesses
that $\un(Q)\in \AC^0$ according to Theorem~\ref{thm:BIS}. \proofend

\begin{rem}\label{rem:xac0}
The direction from left to right only required a $\AC^0$-circuit family for
instances of $\phalt_=$ with the fixed machine $\mathbb M^*$. This implies
that the assertions in Theorem~\ref{thm:phaltequal}~(i) are equivalent to
$\phalt_=\in \XAC{}$ (see Definition~\ref{df:xac}).
\end{rem}

\subsection{Almost tally problems}
Recall that a classical problem $Q\subseteq \{0,1\}^*$ is \emph{tally} if
$Q\subseteq\{1\}^*$. All parameterized problems mentioned in the introduction
are almost tally in the following sense:
\begin{defn}
A parameterized problem $(Q,\kappa)$ is \emph{almost tally} if
\[
Q\subseteq\big\{\angle{1^n,x} \mid n\in\N,x\in\{0,1\}^*\big\}
\]
and there is a computable $f:\N\to\N$ such that for all $n\in\N$,
$x\in\{0,1\}^*$
\[
|x|\le f(\kappa(\angle{1^n, x})).
\]
\end{defn}

Theorem~\ref{thm:phaltequal}~(ii) holds not only for $\phalt$ but for every
almost tally problem in $\para\NP$. In fact, $\phalt_=$ is the hardest almost
tally problem in $\para\NP$:
\begin{lem}\label{lem:phaltalmtally}
For every almost tally problem in $\para\NP$ there is an eventually definable
reduction to $\phalt_=$.
\end{lem}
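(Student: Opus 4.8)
The plan is to reduce an arbitrary almost tally problem $(Q,\kappa)\in\para\NP$ to $\phalt_=$ via an eventually definable reduction, essentially by ``unary padding the running time'' so that the reduction's output is a tally-plus-short-machine instance. First I would unwind the hypotheses: since $(Q,\kappa)\in\para\NP$, there is a nondeterministic machine $\mathbb{N}$ and a computable $g$ deciding whether $\angle{1^n,x}\in Q$ in time at most $g(\kappa(\angle{1^n,x}))\cdot(n+|x|)^{c}$; and since $(Q,\kappa)$ is almost tally, $|x|\le f(\kappa(\angle{1^n,x}))$ for some computable $f$, so the parameter $k:=\kappa(\angle{1^n,x})$ effectively bounds everything except $n$. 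I can therefore think of an instance as a pair $(1^n, x)$ with $|x|$ small in $k$.

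The reduction. Given $\angle{1^n,x}$ with parameter $k$, I would build — uniformly in $k$ and $x$, but NOT in $n$ — a nondeterministic machine $\mathbb{M}_{k,x}$ that on empty input does the following: it writes $1^n$ and $x$ onto a work tape (it can read $n$ off its input-free computation only by guessing, so instead: guess a number $m$ in unary, i.e.\ guess a string of $1$'s, then simulate $\mathbb{N}$ on $\angle{1^m,x}$), simulates $\mathbb{N}$ on $\angle{1^m,x}$, rejects if $\mathbb{N}$ rejects, and otherwise pads with dummy steps so that the total running time becomes exactly $T(m):=\big(g(k)\cdot(m+|x|)^{c}\big)\cdot\text{(a fixed polynomial bound)} + 2m + (\text{const depending on }x,k)$, a strictly increasing, easily computable function of $m$ whose value at $m=n$ I can hit exactly. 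As in the proof of Theorem~\ref{thm:phaltequal}, guessing $1^m$ costs exactly $2m+2$ steps done by walking a head forth and back, and all subsequent phases are padded to a value depending only on $m$ (and on the fixed data $k,x$), so $\mathbb{M}_{k,x}$ accepts the empty input in exactly $T(n)$ steps iff $\angle{1^n,x}\in Q$. The reduction then maps $\angle{1^n,x}\mapsto\angle{1^{T(n)},\mathbb{M}_{k,x}}$. Its parameter under $\phalt_=$ is $|\mathbb{M}_{k,x}|$, which is bounded by a computable function of $k$ (since $\mathbb{M}_{k,x}$ only needs to hardwire $k$, the machine $\mathbb{N}$, the short string $x$ with $|x|\le f(k)$, and the code to compute $T$), so this is a parameterized reduction. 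Honesty is clear since $T(n)\ge n$.

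It remains to check that this reduction is $\kappa$-eventually definable in the sense of Definition~\ref{def:kappaevt}, i.e.\ that for $n$ large relative to $k$ there is a fixed interpretation $\inter I$ with $\str S(\angle{1^n,x})^{\inter I}\cong\str S(\angle{1^{T(n)},\mathbb{M}_{k,x}})$. Here I would follow the template of Example~\ref{exa:internum} and Lemma~\ref{lem:karyinterpretation}: the target string $\angle{1^{T(n)},\mathbb{M}_{k,x}}$ has length $2T(n)+2+2|\mathbb{M}_{k,x}|$, and $T(n)$ is a polynomial in $n$ with coefficients and an additive constant depending only on $k,x$; choosing $d$ with $T(n)<n^{d}$ for $n\ge h(k)$, the structure $\str S(1^{T(n)})$ is interpretable in $n^{d}$, hence (by Lemma~\ref{lem:karyinterpretation}) in $\str S(1^n)$, hence in $\str S(\angle{1^n,x})$; the short suffix encoding $\mathbb{M}_{k,x}$ depends only on $x$ and $k$, so for fixed $k$ it is a fixed finite string that the interpretation can simply spell out using $\BIT$ and the constant denoting (the number coded by) $x$ inside $\str S(\angle{1^n,x})$. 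Assembling these pieces with Lemma~\ref{lem:interpretation}(ii) and taking $\inter I$ independent of $n$ (for each fixed $k$ — and then uniformly, since the interpretation is built uniformly from $k$ and from the finitely many bits of $x$, which are themselves recoverable from the input structure) yields the required interpretation, and Definition~\ref{defn:evtdef} is met.

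The main obstacle is the bookkeeping that makes $T(n)$ hit the target step count \emph{exactly} while keeping $\mathbb{M}_{k,x}$ of parameter-bounded size and keeping the whole map interpretation-definable: one must ensure the padding phase computes $T$ on the \emph{guessed} unary number $m$ deterministically and in a way that does not itself blow past $T(m)$, which is exactly the forth-and-back dummy-step trick of the $\mathbb{M}^*$ construction generalized from the fixed polynomial $\num(x)^c$ to the parameter-dependent polynomial $g(k)\cdot(m+|x|)^c$. Once that is arranged, everything else is a routine combination of Example~\ref{exa:internum}, Lemma~\ref{lem:karyinterpretation}, and the transitivity Lemma~\ref{lem:transitive}/Lemma~\ref{lem:ACclosed}.
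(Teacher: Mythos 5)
Your machine construction (guess $m$ in unary at an exact step cost, simulate, pad to an injective, easily computable step count $T(m)$ depending only on $m$ and the fixed data, then map $\angle{1^n,x}\mapsto\angle{1^{T(n)},\mathbb M_{k,x}}$) is essentially the paper's construction of $\mathbb M_x$ with $g(m,k)$, and your treatment of the tally part of the output via Lemma~\ref{lem:karyinterpretation} and $\BIT$ is also the paper's. The gap is in the eventual definability of the reduction. Definition~\ref{def:kappaevt} demands a \emph{single} interpretation $\inter I$, independent of the instance, such that $\str S(\angle{1^n,x})^{\inter I}\cong \str S(\angle{1^{T(n)},\mathbb M_{k,x}})$ whenever $n\ge h(k)$. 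Your plan to ``simply spell out'' the code of $\mathbb M_{k,x}$, which works only ``for each fixed $k$'' (in fact only for each fixed $x$), produces an interpretation per instance, which is not what is required; and the parenthetical claim that the construction is ``uniform, since the interpretation is built uniformly from $k$ and from the finitely many bits of $x$'' is exactly the point that needs an argument. The bits of $x$ are indeed available in $\str S(\angle{1^n,x})$, but what the interpretation must define is the number $\num(\mathbb M_{k,x})$, i.e., the value at $\num(x)$ of the computable map $\num(x)\mapsto\num(\mathbb M_x)$ that hardwires $x$ into a machine description, and it is not immediate that this map is first-order definable inside the bounded structure $n^d$ by one fixed formula working for all $x$.

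The paper closes precisely this hole with a specific device you are missing: by Corollary~\ref{cor:fxy} and Remark~\ref{rem:fxy} (an application of the \MRDP\ theorem), the graph of the computable function $M$ with $M(\num(x))=\num(\mathbb M_x)$ is defined by $\exists\bar z\,\varphi_M(x,y,\bar z)$ with $\varphi_M$ quantifier-free over $L^\rel_\arit$; one then inflates the eventuality threshold $h$ so that for every $\ell$ the witnesses can be chosen below $h(\ell)$, hence below $n\le n^d$, so that the single formula $\exists\bar z\,\varphi_M(c_1,y,\bar z)$ defines $\num(\mathbb M_x)$ in the expansion $(n^d,k,\num(x))$ for all sufficiently large $n$. (Alternatively you would have to argue directly that the coding map is uniformly $\FO$-definable via $\BIT$-arithmetic, which requires a careful analysis of the machine encoding that your sketch does not provide.) Without this step, or a substitute for it, the reduction you define is a parameterized reduction but has not been shown $\kappa$-eventually definable, so the lemma is not yet proved.
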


\proof Let $(Q, \kappa)\in \para\NP$ be almost tally. The identity is a
parameterized reduction from $(Q, \kappa)$ to its re-parameterization $(Q,
\kappa')$ where $\kappa'(\angle{1^n, x}):=|x|$ for all $n\in \N$, $x\in
\{0,1\}^*$. We can therefore assume that $\kappa= \kappa'$.

Let $\mathbb M$ be a nondeterministic Turing machine that accepts $Q$ and on
input $\angle{1^n, x}$ runs in time at most $f(k)\cdot n^{c}$ where $c\in
\N$, $f: \N\to \N$ is a computable function, and $k:=|x|$.

Define $g:\N^2\to\N$ by
\[
g(m,k):= m^{c+1}+ 2m+ 2k+ 2.
\]
For $x\in\{0,1\}^*$ with $k:=|x|$, consider the nondeterministic Turing
machine $\mathbb M_x$\label{page:Mx} that on the empty input runs as follows:
\begin{center}
\fbox{
\begin{minipage}[t]{15cm}
\begin{algorithm}
\im0 nondeterministically write $\angle{1^m,x}$ for some $m\in \N$

\im0 simulate $\mathbb M$ on $\angle{1^m,x}$

\im0 \IF\ $\mathbb M$ does not halt or rejects, \THEN\ reject

\im0 make dummy steps such that so far the total
 running time  is $g(m,k)$

\im0 accept.
\end{algorithm}
\end{minipage}
}
\end{center}
Step~1 can be implemented to take exactly $2+2m+2+2k$ many steps
(recall~\eqref{eq:pairing}), so the dummy steps in line~4 are possible if
$m>f(k)$. Note that for each $k$, the function $m\mapsto g(m,k)$  is
injective. Thus, if $n>f(k)$, we have
\begin{eqnarray*}
\angle{1^n,x}\in Q
 & \Longleftrightarrow &
\angle{1^{g(n,k)+1},\mathbb M_x}\in\phalt_=.
\end{eqnarray*}
Choose a parameterized honest reduction from $(Q,\kappa)$ to $\phalt_=$ that
maps $\angle{1^n,x}$ with $n>f(k)$ to  $\angle{1^{g(n,k)+1}, \mathbb M_x}$.
We verify that it is eventually definable.

Choose a computable $h:\N\to\N$ and $d\in\N$ such that $n\ge h(k)$ implies
$n> f(k)$ and $n^d> g(n,k)+ 1, \num(x), \num(\mathbb M_x)$ for all $x\in
\{0,1\}^*$ of length $|x|= k$.

Let $\angle{1^n, x}$ satisfy $n> h(k)$ where $k:= |x|$. By
Lemma~\ref{lem:karyinterpretation}, $\str S(\angle{1^n, x})$ interprets the
expansion $(n^{d}, k, \num(x))$ of the standard $L^\rel_\arit$-structure
$n^d$ by two constants $c_0$ and $c_1$ denoting~$k$ and $\num(x)$. We are
left to show that $\big(n^{d}, k, \num(x)\big)$ interprets $\str
S(\angle{1^{g(n,k)+1}, \mathbb M_x})$. Using $\BIT$, it suffices to show that
both $g(n,k)+1$ and $\num(\mathbb M_x)$ are definable in
$\big(n^{d},k,\num(x)\big)$.

The former being clear, we show the latter. Consider a computable function
$M: \N\to \N$ that maps $\num(x)$ to $\num(\mathbb M_x)$. Choose a
quantifier-free $L^\rel_\arit$-formula $\varphi_M(x, y, \bar z)$ according to
Remark~\ref{rem:fxy}. We can assume that $h$ grows fast enough so that for
all $\ell\in\N$
\[
\N\models \exists \bar z {<} h(\ell)\ \varphi_M(\ell,M(\ell),\bar z).
\]
Then $\exists \bar z\ \varphi_M(c_1,y,\bar z)$ defines $\num(\mathbb M_x)$ in
the structure $(n^{d},k,\num(x))$. \proofend

It is straightforward to infer from Proposition~\ref{prop:ag} that
$\NE\subseteq\LINH$ if and only if every tally problem in $\NP$ is in
$\AC^0$. We don't know of a similarly easy  proof of the following
parameterized variant of this observation.  Instead, our proof relies on our
analysis of $\phalt_=$:

\begin{cor}\label{cor:almtallyNE}
$\NE\subseteq\LINH$ if and only if every almost tally problem in $\para\NP$
is in $\paraAC$.
\end{cor}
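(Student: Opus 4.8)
The plan is to prove both directions by combining the equivalence $\NE\subseteq\LINH \iff$ (every tally problem in $\NP$ is in $\AC^0$) with Lemma~\ref{lem:phaltalmtally} and the transfer properties of eventually definable reductions (Lemma~\ref{lem:ACclosed}).

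For the direction from right to left, suppose every almost tally problem in $\para\NP$ is in $\paraAC$. Since $\phalt_=$ is almost tally and lies in $\para\NP$, we get $\phalt_=\in\paraAC$, and then Theorem~\ref{thm:phaltequal}~(i) immediately yields $\NE\subseteq\LINH$. So this direction is essentially free.

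For the direction from left to right, assume $\NE\subseteq\LINH$; equivalently, by Theorem~\ref{thm:phaltequal}~(i), $\phalt_=\in\paraAC$. Let $(Q,\kappa)$ be an arbitrary almost tally problem in $\para\NP$. By Lemma~\ref{lem:phaltalmtally} there is an eventually definable reduction from $(Q,\kappa)$ to $\phalt_=$. Since eventually definable reductions preserve membership in $\paraAC$ (Lemma~\ref{lem:ACclosed}~(ii)), we conclude $(Q,\kappa)\in\paraAC$.

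The only delicate point is the hypothesis of Lemma~\ref{lem:ACclosed}~(ii): it requires the target $(\phalt_=,\text{its parameterization})$ to be in $\paraAC$, which is exactly what Theorem~\ref{thm:phaltequal}~(i) provides under the assumption $\NE\subseteq\LINH$. There is no further obstacle; the whole content has been packaged into the earlier lemmas, and the corollary is just their composition. One should note that this gives, via $\phalt_=$, a proof of the parameterized analogue that does not go through the ``easy'' classical argument alluded to before the statement — which is the point of the remark preceding the corollary.
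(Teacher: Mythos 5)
Your proof is correct and is essentially the paper's own argument: both directions are obtained by composing Theorem~\ref{thm:phaltequal}~(i) with Lemma~\ref{lem:phaltalmtally} and Lemma~\ref{lem:ACclosed}, using that $\phalt_=$ is itself an almost tally problem in $\para\NP$. (The classical tally/$\AC^0$ equivalence mentioned in your opening sentence plays no role in the actual argument, just as in the paper.)
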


\begin{proof}
The l.h.s.\ is equivalent to $\phalt_=\in\paraAC$ by
Theorem~\ref{thm:phaltequal}~(i). And by Lemmas~\ref{lem:phaltalmtally} and
\ref{lem:ACclosed}, $\phalt_=\in\paraAC$ is equivalent to the r.h.s..
\proofend
\end{proof}

\subsection{Variants}
For the optimistic reader, Corollary~\ref{cor:almtallyNE} is an approach to
separate $\NE$ from~$\LINH$. From this perspective, it is of interest to ask
whether finding an almost tally problem outside $\paraAC$ but in a natural
subclass of $\para\NP$  implies stronger separations of natural complexity
classes. We verify the following variants of Corollary~\ref{cor:almtallyNE}:

\begin{lem}\label{lem:almtallyvariants} \
\begin{enumerate}\itemsep=0pt
\item[(i)] $\E\subseteq \LINH$ if and only if every almost tally problem in
    $\FPT$ is in $\paraAC$.

\item[(ii)] $\NLINSP\subseteq \LINH$ if and only if every almost tally
    problem in $\para\NL$ is in $\paraAC$.

\item[(iii)] $\LINSP\subseteq \LINH$ if and only if every almost tally
    problem in $\para\L$ is in $\paraAC$.
\end{enumerate}
\end{lem}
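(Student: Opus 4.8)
The plan is to mimic the proof of Corollary~\ref{cor:almtallyNE} verbatim, replacing $\phalt_=$ and the pair $(\NE,\NP)$ by suitable variants. For each of the three items I would first isolate the correct "hardest almost tally problem" living in the relevant parameterized class, prove the analogue of Theorem~\ref{thm:phaltequal}~(i) for it, and then invoke the analogue of Lemma~\ref{lem:phaltalmtally} together with Lemma~\ref{lem:ACclosed}. For (i) the relevant problem is the deterministic halting problem $\pdhalt_=$ (ask whether the \emph{deterministic} machine $\mathbb M$ halts in exactly $n$ steps), which clearly lies in $\FPT$; for (ii) one takes the acceptance problem for a nondeterministic machine that runs in \emph{space} $f(k)+O(\log n)$ on $\angle{1^n,x}$ and asks whether it accepts — this is almost tally and in $\para\NL$ — and analogously for (iii) with deterministic such space.

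The heart of each item is the left-to-right direction of the complexity-class characterization. For (i): assuming $\E\subseteq\LINH$, given any almost tally $(Q,\kappa)\in\FPT$ one re-parameterizes so that $\kappa(\angle{1^n,x})=|x|$, builds for each short string $x$ a deterministic machine $\mathbb M_x$ that on the empty tape writes $\angle{1^m,x}$ nondeterministic— here \emph{deterministically} cycling through $m$ is not available, so instead one encodes $n$ directly: consider the classical problem $Q^{\bin}$ with $n$ in binary, note $Q^{\bin}\in\E$ because $Q\in\FPT$, hence $\un(Q^{\bin})\in\P\subseteq$ (by Proposition~\ref{prop:ag}~(ii)) — wait, we need $\AC^0$, so one rather argues directly. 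The clean route, parallel to the proof of Lemma~\ref{lem:phaltalmtally}, is: $(Q,\kappa)\in\FPT$ gives a deterministic machine $\mathbb M$ deciding $Q$ in time $f(k)\cdot n^c$ on $\angle{1^n,x}$; the reduction $\angle{1^n,x}\mapsto\angle{1^{g(n,k)+1},\mathbb M_x}$ into $\pdhalt_=$ is eventually definable exactly as before, since the only facts used were that $\mathbb M_x$ is computable from $x$, that $g(\cdot,k)$ is injective, and the $\BIT$-definability arguments of Example~\ref{exa:internum} — none of which cared about (non)determinism. So one gets: every almost tally problem in $\FPT$ reduces eventually-definably to $\pdhalt_=$, and $\pdhalt_=\in\paraAC\iff\E\subseteq\LINH$ by the argument of Theorem~\ref{thm:phaltequal}~(i) with the machine $\mathbb M^*$ now deterministic and the problem $Q$ ranging over $\E$ instead of $\NE$ (using Proposition~\ref{prop:ag}~(ii) for one direction was a red herring; one uses (iii) throughout, together with the fact that a deterministic analogue of $Q$ lies in $\E$). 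For (ii) and (iii) one replaces "runs in time $f(k)\cdot n^c$" by "runs in space $f(k)+c\log n$" and "exactly $n$ steps" by "accepts" (space-bounded machines need a clock only to bound the configuration graph, which is automatic); the classical side then concerns $\NLINSP$ resp.\ $\LINSP$, again via $\un(\cdot)$ and Proposition~\ref{prop:ag}~(iii), noting $\un(Q)\in\AC^0\iff Q\in\LINH$ is exactly what is needed and that a space-$O(n)$ machine on $\un(Q)$ corresponds to a space-$O(\log)$ machine parameterized appropriately.

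The right-to-left directions are immediate in all three cases: the chosen hardest problem ($\pdhalt_=$, or the space-bounded acceptance problem) is itself almost tally and lies in the respective class ($\FPT$, $\para\NL$, $\para\L$), so if every such problem is in $\paraAC$ then in particular it is, and then Lemma~\ref{lem:ACclosed} plus the equivalence with the complexity-class statement closes the loop. The main obstacle I anticipate is bookkeeping in the space-bounded cases (ii)–(iii): one must check that the simulation inside $\mathbb M_x$ of a space-$(f(k)+c\log m)$ machine on $\angle{1^m,x}$, padded out to a canonical halting configuration, can still be made to have its halting behaviour controlled in a way that survives the eventually-definable-reduction machinery — in particular that "accepts the empty tape in space $s$" is expressible so that the $\BIT$/interpretation argument of Example~\ref{exa:internum} still applies, and that the $\un$-transfer on the classical side lines up space $O(n)$ with space $O(\log N)$ where $N=\num(\cdot)$. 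Once the correct space-bounded "halting" problem is pinned down, though, each of the three proofs is a line-by-line transcription of the $\phalt_=$ argument, so I would present them together, proving (i) in detail and remarking that (ii), (iii) follow by the same argument with the stated substitutions.
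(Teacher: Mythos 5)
Your overall architecture (isolate a hardest almost tally problem, prove an analogue of Theorem~\ref{thm:phaltequal}~(i) for it, then use Lemma~\ref{lem:phaltalmtally}-style reductions and Lemma~\ref{lem:ACclosed}) is indeed the paper's, but both substitutions you propose break the mechanism that makes the $\phalt_=$ argument work. For (i): the machine $\mathbb M_x$ of Lemma~\ref{lem:phaltalmtally} is nondeterministic even when $\mathbb M$ is deterministic, because its first line \emph{guesses} $m$, and the exact total running time $g(m,k)+1$, injective in $m$, is the only thing forcing the guessed $m$ to equal $n$; so the construction very much ``cares about (non)determinism''. Worse, no honest parameterized reduction from a general almost tally problem to a deterministic exact-time halting problem can exist: a deterministic machine started on the empty input accepts in exactly $N$ steps for at most one $N$, while for fixed $x$ your images $\angle{1^N,\mathbb M'}$ use machines from the finite set of size at most $g(|x|)$ and, by honesty, infinitely many distinct values of $N$ whenever infinitely many $n$ satisfy $\angle{1^n,x}\in Q$ --- a pigeonhole contradiction. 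The same defect undercuts your claim that the $\mathbb M^*$ of Theorem~\ref{thm:phaltequal}~(i) can simply be ``made deterministic'': its first step is to guess $y$. (Note the right-to-left directions need none of this machinery: for $Q\in\E$, $\un(Q)$ is a tally problem in $\P$, hence an almost tally problem in $\FPT$, and the hypothesis together with Proposition~\ref{prop:ag}~(iii) yields $Q\in\LINH$; similarly for (ii) and (iii). You instead route these directions through the problematic equivalences.)

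For (ii) and (iii), replacing ``exactly $n$ steps'' by mere acceptance under a space bound removes the only channel through which $n$ reaches the target instance: a bound of the form $\floor{\log m}$ cannot distinguish the many values of $n$ with the same logarithm, so your reduction would decide whether \emph{some} $m$ has $\angle{1^m,x}\in Q$, not whether $\angle{1^n,x}\in Q$. The paper keeps exactness and instead adds the space bound as a \emph{second unary component}: it introduces $\phalt^*_=$, with instances $\angle{1^n,1^m,\mathbb M}$ asking whether $\mathbb M$ accepts the empty input in exactly $n$ steps \emph{and} space at most $\floor{\log m}$, proves the analogue of Theorem~\ref{thm:phaltequal}~(i) for it against $\NLINSP$, and reduces an almost tally $(Q,\kappa)\in\para\NL$ to it by mapping $\angle{1^n,x}$ to $\angle{1^{g(n,k)+1},1^{n^d},\mathbb M_x}$, where $\mathbb M_x$ guesses $m$ \emph{in binary} and simulates $\mathbb M$ on $\angle{1^m,x}$ without materializing the input, maintaining the virtual input-head position and a binary clock so that each simulated step costs a fixed number of steps and the total time is again injective in $m$. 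The extra exponent $d$ is needed because the space used exceeds the logarithm of the step count, and since $\phalt^*_=$ is not literally almost tally, the proof closes the loop by re-pairing $\angle{1^n,1^m,\mathbb M}$ as $\angle{1^{\angle{n,m}},\mathbb M}$. These points --- retaining the exact step count, the second unary input, the logspace-compatible exact-time simulation, and the re-pairing --- are precisely what your sketch defers to ``bookkeeping'', and they, rather than a line-by-line transcription of the $\phalt_=$ argument, constitute the content of the paper's proof.
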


\proof The proof of (i) is analogous to the proof of
Corollary~\ref{cor:almtallyNE} using the subproblem of $\phalt_=$ where the
input machine $\mathbb M$ is deterministic. Similarly the proof of (iii) is
analogous to the proof of (ii). We show how (ii) is proved by modifying the
proof of Corollary~\ref{cor:almtallyNE}.

Consider the following variant of $\phalt$:
\npprob{$\phalt^*_=$}{$n,m\in \mathbb N$ in unary with $n\le m$ and a
nondeterministic Turing machine $\mathbb M$}{$|\mathbb M|$}{Does $\mathbb M$
accept the empty input in \emph{exactly} $n$ steps and space at most
$\floor{\log m}$?}
It is clear that this problem is in $\para\NL$.

\medskip
\noindent \textit{Claim 1.} $\phalt^*_=\in \paraAC$ if and only if
$\NLINSP\subseteq \LINH$.

\medskip
\noindent \textit{Proof of the Claim 1.} Assume $\NLINSP\subseteq \LINH$ and
let $Q$ be the classical problem underlying $\phalt^*_=$ but with the inputs
$n, m$ encoded in binary. Clearly, $Q\in \NLINSP \subseteq \LINH$, so
$\un(Q)\in \AC^0$ by Proposition~\ref{prop:ag}~(iii). Similarly as
Example~\ref{exa:internum} one sees that $\angle{1^n, 1^m, \mathbb M}\mapsto
1^{\num(\angle{n, m, \mathbb M})}$ is eventually definable. Then $\phalt^*_=
\in \paraAC$ follows as before.

Conversely, assume $\phalt^*_= \in \paraAC$ and let $Q\in \NLINSP$. Choose a
nondeterministic Turing machine $\mathbb M$ accepting $Q$ that on input $x\in
\{0,1\}^*$ runs in time at most
\[
\frac{\num(x)^{c}}{10c(|x|+2)} - 10c(|x|+ 2)- |x|
\]
and uses space at most $c\cdot |x|$; here $c\in \N$ is a suitable constant.
Define $\mathbb M^*$ as in page~\pageref{page:Mstar} but with the following implementation details. For
the simulation in line 2, first initialize a length $c(|y|+ 2)$ binary
counter using exactly $10c(|y|+ 2)$ steps, and increase it using exactly
$10c(|y|+ 2)$ many steps for each simulated step of $\mathbb M$. In line~4
continue increasing the counter in this way until it reaches
$\num(y)^c/(10c(|y|+ 2))$. For long enough $y$, the binary representation of
this number can be computed in time at most $\num(y)$ and space $O(|y|)$
(where the constant in the O-notation depends on $c$). This computation can
be done in parallel to the simulation in lines 2 and 4. Hence, line~5
completes exactly $\num(y)^c+1$ steps, and uses space at most $d\cdot |y|$
for a suitable~$d\ge c$.

Thus, we arrive at the following variant of~\eqref{eq:mes}. For long enough
$x\in\{0,1\}^*$:
 \begin{eqnarray*}
x \in Q  & \iff & \text{$\mathbb M^*$ accepts the empty input
in exactly $ \num(x)^c+1$ many steps}\\
&&\text{and space at most  $\floor{\log(\num(y)^d)}$}.
\end{eqnarray*}
Our assumption  $\phalt_=^*\in \paraAC$ implies that the classical problem
\begin{align*}
Q':=\big\{\angle{1^n,1^m}\bigmid\ & n\le m \text{ and $\mathbb M^*$ accepts the empty input
in exactly $n+1$ many steps}\\
&\text{and space at most $\floor{\log m}$}\big\}
\end{align*}
is in $\AC^0$. Now $\un(Q)\in\AC^0$ (and hence $Q\in\LINH$) follows as before
using an interpretation~$\inter I$ such that $\str S(1^n)^{\inter I}\cong
\str S(\big\langle 1^{n^c}, 1^{n^d}\big\rangle)$. \hfill$\dashv$

\medskip
\noindent \textit{Claim 2.} For every almost tally problem in $\para\NL$
there is an eventually definable reduction to $\phalt^*_=$.

\medskip
\noindent \textit{Proof of the Claim 2.} Let $(Q, \kappa)\in \para\NL$ be
almost tally and $\mathbb M$ be a nondeterministic Turing machine that
accepts $Q$ and that on input $\angle{1^n, x}$ runs in time at most
$f(k)\cdot n^{c}$ and space at most $f(k)+ c\cdot\log n$ where $c\in \N$,
$f:\N\to \N$ is a computable function, and $k:= \kappa(\angle{1^n, x})= |x|$.

For $x\in \{0, 1\}^*$ with $k:= |x|$, define the nondeterministic Turing
machine $\mathbb M_x$ as in page~\pageref{page:Mx} but
with a different $g$ (chosen below) and line 1 changed to
nondeterministically write some $m\in \N$ in binary in exactly
$2\ceil{\log(m+ 1)}+ 2$ steps. The simulation in line 2 is done as in the
previous claim maintaining a length $(c+1)\ceil{\log(m+1)}$ binary counter.
It further maintains the position of $\mathbb M$'s head on the input tape
\big(which we can assume to be at most $|\angle{1^m,x}|+1$\big) and uses it
to compute the currently scanned bit. If $k<m$, both the maintenance of the
counter and position can be done in exactly $10c\ceil{\log(m+1)}$ steps. In
line 4 the binary counter is updated until it reaches $m^{c+1}$. Hence line~4
is completed after exactly $g(m,k):=m^{c+1}\cdot 10c\ceil{\log(m+ 1)}+
2\ceil{\log(m+1)}+ 2$ steps. The dummy steps in line~4 are possible if
$m>f(k)$. In this case the computation takes space at most $d\log m$ for
suitable $d\in\N$. Thus, if $n> f(k)$, we have
\begin{eqnarray*}
\angle{1^n, x}\in Q
 & \Longleftrightarrow &
\angle{1^{g(n,k)+1},1^{n^d},\mathbb M_x} \in \phalt^*_=.
\end{eqnarray*}
Similarly as seen in the proof of Lemma~\ref{lem:phaltalmtally}, this implies
the claim. \hfill$\dashv$

\medskip
It now suffices to show that $\phalt^*_= \in \paraAC$ if and only if every
almost tally problem in $\para\NL$ is in $\paraAC$. The forward direction
follows from Claim~2 and Lemma~\ref{lem:ACclosed}. And if $\phalt^*_= \not\in
\paraAC$, then we get an almost tally problem in $\para\NL \setminus \paraAC$
by rewriting inputs $\angle{1^n, 1^m, \mathbb M}$ of $\phalt_=^*$ to
$\angle{1^{\angle{n, m}}, \mathbb M}$ where $\angle{n, m}$ is a pairing
function on $\N$. \proofend

We find it worthwhile to explicitly point out the following direct corollary
concerning the parameterized halting problem for {\em deterministic} Turing
machines:
\begin{cor}
If $\pdhalt\not\in \paraAC$, then $\E\not\subseteq \LINH$.
\npprob{$\pdhalt$}{$n\in \mathbb N$ in unary and a deterministic Turing
machine $\mathbb M$}{$|\mathbb M|$}{Does $\mathbb M$ accept the empty input
in at most $n$ steps?}
\end{cor}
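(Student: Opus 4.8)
The plan is to derive this corollary from Lemma~\ref{lem:almtallyvariants}~(i), which states that $\E\subseteq\LINH$ if and only if every almost tally problem in $\FPT$ is in $\paraAC$. Contrapositively, if some almost tally problem in $\FPT$ fails to lie in $\paraAC$, then $\E\not\subseteq\LINH$. So it suffices to observe that $\pdhalt$ is an almost tally problem in $\FPT$; then $\pdhalt\notin\paraAC$ immediately yields $\E\not\subseteq\LINH$.

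First I would check that $\pdhalt$ is almost tally. Its instances have the form $\angle{1^n,\mathbb M}$ for a deterministic Turing machine $\mathbb M$, and the parameter is $\kappa(\angle{1^n,\mathbb M})=|\mathbb M|$; with $x:=\mathbb M$ we have $|x|=|\mathbb M|=\kappa(\angle{1^n,x})$, so the bounding function $f$ can be taken to be the identity, meeting the definition of almost tally. Next I would check $\pdhalt\in\FPT$: given $\angle{1^n,\mathbb M}$ one simply simulates the \emph{deterministic} machine $\mathbb M$ on the empty input for at most $n$ steps and accepts if it halts accepting within that budget. This runs in time $O(n)\cdot|\mathbb M|^{O(1)}$ (the per-step overhead of the simulation depends polynomially on $|\mathbb M|$), which is of the form $f(\kappa(x))\cdot|x|^{O(1)}$, so $\pdhalt\in\FPT=\para\P$.

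With these two observations in place, Lemma~\ref{lem:almtallyvariants}~(i) finishes the argument: assuming $\pdhalt\notin\paraAC$, we have an almost tally problem in $\FPT$ that is not in $\paraAC$, hence the equivalence in (i) forces $\E\not\subseteq\LINH$. There is essentially no obstacle here — the content is entirely in Lemma~\ref{lem:almtallyvariants}, and the corollary is just the instantiation of that lemma at the concrete problem $\pdhalt$. The only point requiring a little care is making sure the deterministic-machine simulation has the right fixed-parameter running time, but this is routine since $n$ is given in unary so the step budget is linear in the input length.
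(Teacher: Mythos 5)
Your proof is correct and is exactly the intended derivation: the paper states this as a direct corollary of Lemma~\ref{lem:almtallyvariants}~(i), and your verification that $\pdhalt$ is an almost tally problem in $\FPT$ (via the trivial deterministic simulation in time $|\mathbb M|^{O(1)}\cdot n$) followed by the contrapositive of that equivalence is precisely the argument the paper has in mind.
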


\section{On the parameterized complexity of $\ptruth$}\label{sec:truth}

This section first observes that $\ptruth$ is ``the same'' as a basic
parameterized model-checking problem, uses this to prove the lower bound
$\ptruth\not\in\paraAC$ (Theorem~\ref{thm:truthAC}), and finally, based on
the previous section, infers consequences from {\em upper bounds} on the
parameterized complexity of $\ptruth$, including Theorem~\ref{thm:truthNP}.

\subsection{Model-checking arithmetic}
We observe that $\ptruth$ is ``the same'' as the parameterized model-checking
problem for first-order logic over finite standard $L^\rel_\arit$-structures:
\npprob{$\pMCFOA$}{$n\ge 2$ in unary and an $L^\rel_\arit$-sentence
$\varphi$}{$|\varphi|$}{$n\models \varphi$?}

\begin{lem}\label{lem:fcttorel}
There is a computable function that maps every $\Delta_0$-formula
$\varphi(x)$ to an $L^\rel_\arit$-sentence~$\psi$ such that for all $n\in \N$
with $n\ge 2$:
\begin{eqnarray}\label{eq:fcttorel}
\N\models \varphi(n)
 & \Longleftrightarrow &
 n\models \psi.
\end{eqnarray}

Further, there is a computable function that maps every
$L^\rel_\arit$-sentence $\psi$ to a $\Delta_0$-formula~$\varphi(x)$ such that
\eqref{eq:fcttorel} holds all $n\in\N$ with $n\ge2$.
\end{lem}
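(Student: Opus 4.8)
The plan is to translate between the functional language $L_\arit$ and the relational language $L^\rel_\arit$ in both directions, using the standard trick of representing each term evaluation by a chain of auxiliary existential quantifiers naming the values of all subterms, and conversely absorbing the graphs $+,\times$ back into genuine function symbols. The only subtlety is that a $\Delta_0$-formula on the left of \eqref{eq:fcttorel} speaks about the infinite structure $\N$, whereas an $L^\rel_\arit$-sentence on the right speaks about the finite structure $n$, so I must make sure all quantifiers end up bounded by (something below) $n$ and that the free variable $x$ is instantiated to the top element, i.e.\ that $n$ itself plays the role of the input $n$.

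First, for the forward direction: given a $\Delta_0$-formula $\varphi(x)$, I would rewrite every atomic subformula. Each atomic formula of $L_\arit$ is of the form $p(\bar u){=}q(\bar u)$ or $p(\bar u){<}q(\bar u)$ for $L_\arit$-terms $p,q$; replace it by an $L^\rel_\arit$-formula that existentially quantifies a fresh variable for the value of each subterm occurring in $p$ and $q$, asserts via the ternary relations $+,\times$ that these values are correctly related to those of their immediate subterms, and finally compares the values of the two roots using $=$ (equality is available) or $<$. These fresh existential quantifiers must be bounded: since the original formula is $\Delta_0$, every variable $u_i$ is already bounded by some term, and in the structure $n$ all elements are automatically below $n$, so I bound each fresh variable by a term that is guaranteed to stay below the universe — but here I must be slightly careful, because subterm values of $p(\bar u)$ can in principle exceed $\bar u$; the clean fix is to bound them by the top element, which in a finite $L^\rel_\arit$-structure is definable, or simply to leave them unbounded in the $L^\rel_\arit$-sentence $\psi$ (unbounded quantifiers over $n$ are harmless and the addition/multiplication relations $+^n,\times^n$ are partial, so a subterm value that would overflow simply has no witness, exactly mirroring the possibility in $\N$ that it is too large relative to the bounds — this needs a short argument that overflow never changes the truth value because the original atomic formula, being a subformula of a $\Delta_0$-formula, is evaluated at arguments for which all relevant values are genuinely below $n$). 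After the atomic rewriting, keep the Boolean structure and turn each bounded quantifier $\exists u{<}p\,\theta$ into $\exists u\,(``u{<}p''\wedge \theta')$ where $``u{<}p''$ is the relational rendering of the bound. Finally, to obtain a \emph{sentence} $\psi$ from the formula $\psi(x)$ so produced, substitute for $x$ the top element of the universe: since $\N\models\varphi(n)$ is to correspond to $n\models\psi$ where the universe of $n$ is $\{0,\dots,n-1\}$, the element that should instantiate $x$ is $n-1$, not $n$. This is an off-by-one annoyance: the paper's convention is that the standard $L^\rel_\arit$-structure with universe $[n]$ is denoted $n$, so the formula $\varphi(n)$ with the \emph{integer} $n$ does not literally fit; I would instead set things up so that $\psi$ expresses ``$\varphi(x)$ holds where $x$ is the maximum element $+1$'', encoding $n$ via the size of the universe, using $\BIT$ from Lemma~\ref{lem:karyinterpretation} if an explicit handle on the numeral $n$ is needed, or — more simply — shifting the whole construction to $n+1$ via the interpretation machinery of Lemma~\ref{lem:karyinterpretation}. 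The cleanest route is: produce $\psi(x)$ with $x$ free, relativise all quantifiers so that the intended universe is $\{0,\dots,x\}$ rather than all of $n$, and then take $\psi := \exists x\,(``x$ is maximal$'' \wedge \psi(x))$; then $n\models\psi$ iff $(n-1)$ (the max element) is plugged in, and after adjusting the original input encoding accordingly one gets \eqref{eq:fcttorel}. I would state this adjustment once and not belabour it.

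For the backward direction: given an $L^\rel_\arit$-sentence $\psi$, replace each atomic subformula $+(u,v,w)$ by $u+v{=}w$ and $\times(u,v,w)$ by $u\times v{=}w$ (now using the genuine function symbols of $L_\arit$), keep $=$ and $<$, keep the Boolean structure, and relativise every quantifier $\exists u$ (resp.\ $\forall u$) to $\exists u{<}x$ (resp.\ $\forall u{<}x$) where $x$ is a single fresh variable that will be the free variable of the resulting $\Delta_0$-formula $\varphi(x)$; the relativisation makes $\varphi$ a $\Delta_0$-formula because the bounding terms are just the variable $x$. By construction $\N\models\varphi(n)$ holds iff the structure with universe $\{m\in\N : m<n\}=[n]$ equipped with the standard interpretations models $\psi$, i.e.\ iff $n\models\psi$, which is \eqref{eq:fcttorel}; again there is the same $[n]$-vs-$n$ bookkeeping to align with the paper's conventions and the already-stated equivalence $\N\models\varphi^{<n}(\bar n)\iff n\models\varphi(\bar n)$, which essentially is exactly this direction and can be cited almost verbatim. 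The main obstacle in the whole lemma is not conceptual but the careful handling of subterm overflow and the off-by-one between the integer $n$ and the universe $[n]$ in the forward direction; once one fixes the convention (universe of size $n$, input element $n{-}1$, or equivalently work over $[n{+}1]$), everything else is a routine structural induction, and the computability of the two translations is evident since each is a simple syntax-directed recursion on formulas.
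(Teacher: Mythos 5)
Your backward direction (relativise quantifiers to $<x$ and rewrite the graph atoms functionally) is exactly the paper's, and is fine. The forward direction, however, has a genuine gap at precisely the point you try to wave away: you claim that overflow of subterm values ``never changes the truth value because the original atomic formula, being a subformula of a $\Delta_0$-formula, is evaluated at arguments for which all relevant values are genuinely below $n$''. This is false. Bounded quantifiers in a $\Delta_0$-formula are bounded by arbitrary $L_\arit$-terms in $x$ (e.g.\ $\exists y{<}x{\cdot}x$), and even atoms whose arguments lie below $n$ have subterm values polynomial in $n$: evaluating $\varphi(n)$ in $\N$ genuinely involves numbers up to roughly $n^{|\varphi|}$, which simply do not exist in the structure $n$ with universe $[n]$. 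Your ``partial relation'' semantics (no witness on overflow) does change truth values: for instance the atom $y\cdot y< z\cdot z$ at $y=2$, $z=3$ is true in $\N$, but in the structure $n=5$ the formula $\exists a\,\exists b\,(\times(y,y,a)\wedge\times(z,z,b)\wedge a{<}b)$ is false because $3\cdot 3$ has no witness; similarly $\exists y\,(x{<}y \wedge y{<}x{\cdot}x)$ is true at every $n\ge 2$ in $\N$ but has no witness in $[n]$. So without an extra idea the translated sentence does not satisfy \eqref{eq:fcttorel}.

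The missing idea (the paper's route, following Gaifman--Dimitracopoulos) is to first compute from $\varphi$ a constant $c_\varphi$ and an $L^\rel_\arit$-formula $\psi_0(x)$ such that $\N\models\varphi(n)$ iff $\N\models\psi_0^{<m}(n)$ for every $m\ge\max\{n,2\}^{c_\varphi}$, i.e.\ to bound \emph{all} witnesses and subterm values by $n^{c_\varphi}$; then $\N\models\varphi(n)$ iff $n^{c_\varphi}\models\psi_0(n)$, and one pulls this back to the structure $n$ via the width-$c_\varphi$ interpretation $\inter I_{c_\varphi}$ of Lemma~\ref{lem:karyinterpretation}~(i), which codes elements of $[n^{c_\varphi}]$ as $c_\varphi$-tuples over $[n]$. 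Your proposal never invokes this tuple-coding step in the forward direction (you only mention Lemma~\ref{lem:karyinterpretation} for the off-by-one issue), and without it the construction fails. Note also that passing to $n^{c_\varphi}$ dissolves your off-by-one worry cleanly: $n$ \emph{is} an element of $[n^{c_\varphi}]$ and is definable there as the least element whose $c_\varphi$-th power does not exist, so one replaces the parameter $n$ in $\psi_0$ by this definition to obtain a sentence before applying $\inter I_{c_\varphi}$; your alternative of plugging in the maximal element $n-1$ and ``adjusting the input encoding'' does not prove the lemma as stated.
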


\proof
For the second
assertion define $\varphi(x)$ as~$\psi^{<x}$ with atoms rewritten in the functional language $L_\arit$. The first assertion
is folklore, see~\cite[Proposition~2.2]{gaifdim82}. We give a brief sketch
for completeness. It is routine to compute, given a $\Delta_0$-formula
$\varphi(\bar x)$, a constant $c_\varphi>1$ and an $L^\rel_\arit$-formula
$\psi_0(\bar x)$ such that
\begin{eqnarray*}
\N\models \varphi(\bar n)&\Longleftrightarrow&\N\models\psi_0^{<m}(\bar n)
\end{eqnarray*}
for all $\bar n,m\in\N$ with $m\ge \max\{\bar n,2\}^{c_\varphi}$. Hence, for
$n\ge 2$, the truth of $\varphi(n)$ is equivalent to $n^{c_\varphi} \models
\psi_0(n)$. Since the number $n$ is definable in  the standard
$L^\rel_\arit$-structure $n^{c_\varphi}$ (as the minimal element whose
$c_\varphi$-th power does not exist), we can replace $\psi_0(n)$ by some
sentence~$\psi_1$. Then set $\psi:=\psi_1^{\inter I_{c_\varphi}}$ for the
interpretation $\inter I_{c_\varphi}$ from
Lemma~\ref{lem:karyinterpretation}. \proofend

\subsection{A lower bound} In this subsection we prove Theorem~\ref{thm:truthAC}. We fix
a proper elementary extension $M$ of the standard $L_\arit^\rel$-model $\N$,
and a nonstandard element  $a\in M\setminus \N$. We let $<^M$ denote the
interpretation of $<$ in $M$. We need a simple lemma:
\begin{lem}\label{lem:Mf}
Let $f:\mathbb N\to \mathbb N$ be a computable function. Then there is an
$L_\arit^\rel$-formula $\chi_f(x, y)$ such that
for every $k\in \mathbb N$ and every $b<^Ma$:
\begin{eqnarray*}\label{eq:chif}
 f(k)=b & \iff & M \models \chi_f^{<a}(k,b).
 \end{eqnarray*}
\end{lem}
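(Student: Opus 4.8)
The plan is to take $\chi_f$ to be the $\Sigma_1$-formula supplied by the \MRDP\ theorem. Applying Corollary~\ref{cor:fxy} together with Remark~\ref{rem:fxy} to the computable function~$f$ gives a quantifier-free $L^\rel_\arit$-formula $\varphi_f(x,y,\bar z)$ such that
\[
f(n)=m\quad\Longleftrightarrow\quad\N\models\exists\bar z\,\varphi_f(n,m,\bar z)\quad\text{for all }n,m\in\N .
\]
I set $\chi_f(x,y):=\exists\bar z\,\varphi_f(x,y,\bar z)$. Since $\varphi_f$ is quantifier-free, $\chi_f^{<a}$ is just $\exists\bar z{<}a\,\varphi_f(x,y,\bar z)$.

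For the implication from left to right, assume $f(k)=b$; then $b\in\N$. Choose a tuple~$\bar c$ of standard naturals with $\N\models\varphi_f(k,b,\bar c)$. As $\varphi_f$ is quantifier-free and $\N$ is an elementary substructure of $M$, we get $M\models\varphi_f(k,b,\bar c)$, and $\bar c$ lies componentwise below $a$ because $a$ is nonstandard. Hence $M\models\chi_f^{<a}(k,b)$.

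For the converse, the obvious idea of pulling a witness back to $\N$ does not work, because the $\bar z$ that witnesses $M\models\chi_f^{<a}(k,b)$ may be nonstandard; this is the one genuinely non-routine point. Instead I exploit that $\varphi_f$ defines the graph of a single-valued function: the sentence
\[
\sigma:=\forall x\forall y\forall y'\Big[\big(\exists\bar z\,\varphi_f(x,y,\bar z)\wedge\exists\bar z\,\varphi_f(x,y',\bar z)\big)\to y{=}y'\Big]
\]
holds in $\N$ by Corollary~\ref{cor:fxy}, hence also in $M$ since $M$ is an elementary extension of $\N$ (in fact $\sigma$ is $\Pi_1$, so only the transfer of true $\Pi_1$-sentences is used). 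Now assume $M\models\chi_f^{<a}(k,b)$. Dropping the bound gives $M\models\exists\bar z\,\varphi_f(k,b,\bar z)$, while the left-to-right direction applied to the trivial equation $f(k)=f(k)$ gives $M\models\exists\bar z\,\varphi_f(k,f(k),\bar z)$. Instantiating $\sigma$ in $M$ with $x:=k$, $y:=b$, $y':=f(k)$ forces $b=f(k)$ in $M$; as $f(k)$ is standard, this means $f(k)=b$, as desired.

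The main obstacle is exactly the asymmetry just mentioned: a $\Sigma_1$-definition transfers upward from $\N$ to $M$ for free on standard data but not downward, so the backward implication must be routed through the low-complexity functionality statement~$\sigma$ rather than through witnesses. Everything else -- how relativization acts on a quantifier-free matrix, and the fact that standard elements of $M$ lie below the nonstandard~$a$ -- is routine.
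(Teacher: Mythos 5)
Your proof is correct, but it takes a somewhat different route from the paper's. The paper starts from a G\"odel-style $\Sigma_1$-definition $\exists z\,\varphi(x,y,z)$ of $f$ with $\Delta_0$ matrix, translates $\varphi$ into an $L^\rel_\arit$-formula $\psi$ satisfying the bounded-truth equivalence \eqref{eq:fpsi} for every bound exceeding the arguments, sets $\chi_f:=\exists z\,\psi$, and handles the converse by transferring an existential statement with standard parameters $k$ and $f(k)$ \emph{downward} from $M$ to $\N$, contradicting the functionality of $f$ in $\N$. You instead invoke the \MRDP\ theorem through Corollary~\ref{cor:fxy} and Remark~\ref{rem:fxy} to get a quantifier-free relational graph formula, which makes the relativization $\chi_f^{<a}$ trivial (no analogue of \eqref{eq:fpsi} is needed), and you handle the converse by transferring the single-valuedness sentence $\sigma$ \emph{upward} from $\N$ to $M$ and instantiating it at $k$, $b$, $f(k)$. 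Both arguments rest on the same two ingredients --- the graph formula is single-valued in $\N$, and $M\succeq\N$ --- so the difference is mainly in packaging: your version buys a cleaner relativization step and a direct (non-contradiction) converse, at the cost of using \MRDP, which is stronger machinery than necessary (G\"odel's $\Sigma_1$-definability suffices, as the paper's proof shows). Since the paper states and uses Corollary~\ref{cor:fxy} and Remark~\ref{rem:fxy} elsewhere, and the open questions concern provability of \MRDP\ in $I\Delta_0$ rather than its truth, there is no circularity in your appeal to it.
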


\proof
Let $\exists z\varphi(x,y,z)$ define $f$ in $\N$ where $\varphi$ is $\Delta_0$. As in the proof of Lemma~\ref{lem:fcttorel} let $\psi(x,y,z)$ be an $L_\arit^\rel$-formula such that for all $k,\ell,m,n\in\N$ with $k,\ell,m<n$:
\begin{eqnarray}\label{eq:fpsi}
\N\models\varphi(k,\ell,m)\iff\N\models\psi^{<n}(k,\ell,m)
\end{eqnarray}
Here, $\N$ ambiguously denotes the standard model in the respective languages. Set
$$
\chi_f(x,y):=\exists z\psi(x,y,z).
$$
If $f(k)=b$, then $b\in\N$ and there is $m\in\N$ such that $\N\models\psi^{<n}(k,b,m)$ for all $n>k,b,m$, so $M\models\psi^{<a}(k,b,m)$ as $k,b,m<^Ma$, and $M\models \chi^{<a}_f(k,b)$. Conversely, assume $M\models \chi^{<a}_f(k,b)$ for $b<^Ma$ and $f(k)\neq b$; then
$\exists u,y,z(k,y,z<u\wedge \psi^{<u}(k,y,z)\wedge y\neq f(k))$ holds in $M$ and hence in $\N$;  by \eqref{eq:fpsi},  $\N\models\varphi(k,\ell,m)$ for some $\ell,m\in\N$ with $\ell\neq f(k)$; this contradicts the fact that $\exists z\varphi(x,y,z)$ defines~$f$ in $\N$.
\proofend

Some notation: for $n\in\N$ define the $L^\rel_\arit$-formula $\text{``}x{=}n\text{''}$ by $\text{``}x{=}0\text{''}:= x{=}0$ and $\text{``}x{=}(n+1)\text{''}:=\exists y(\text{``}y{=}n\text{''}\wedge +(y,1,x))$.  For an $L^\rel_\arit$-formula $\varphi(y,\bar x)$ set $\varphi(\underline{n},\bar x):=\exists y(\text{``}y{=}n\text{''}\wedge\varphi(y,\bar x))$; we understand $\varphi^{<z}(\underline{n},\bar x)$ as
$\big(\varphi(\underline{n},\bar x)\big)^{<z}$. If $n<m$, then both
$(\text{``}x{=}n\text{''})^{<m}$ and $\text{``}x{=}n\text{''}$  define $n$ in $\N$, so $\varphi^{<m}(\underline{n},\bar x)$ and
$\varphi^{<m}(n,\bar x)$ are equivalent in $\N$. In particular, for
every $n\in\N$:
\begin{eqnarray}\label{eq:subst}
M\models\varphi^{<a}(\underline{n},\bar x)\leftrightarrow\varphi^{<a}(n,\bar x)  )
\end{eqnarray}

\medskip
\noindent{\em Proof of Theorem~\ref{thm:truthAC}:} For contradiction, assume
otherwise, so $\pMCFOA$ $\in \paraAC$ by Lemma~\ref{lem:fcttorel}.
By~Corollary~\ref{cor:evtfo}, there is an increasing computable function $h:
\mathbb N\to \mathbb N$ and a sentence $\textit{sat}$ such that for every
$n\in \mathbb N$ and every $L^\rel_\arit$-sentence $\varphi$ with $n>
h(\num(\varphi))$ we have
\begin{eqnarray}\label{eq:sat}
n \models \varphi
 & \iff &
\str S\big(\angle{1^n, \varphi}\big) \models \textit{sat}.
\end{eqnarray}

For $k<n$, let $(n,k)$ denote the expansion of the standard
$L^\rel_\arit$-structure $n$ that interprets a constant $c$ by $k$. It is
clear that there is an interpretation $\inter I$ (independent of $n,
\varphi$) such that $(n,\num(\varphi))^{\inter I}\cong\str S(\angle{1^n,
\varphi})$ for all $\varphi$ with $\num(\varphi)<n$. Replacing in
$\textit{sat}^{\inter I}$ the constant $c$ by a new variable $x$ gives an
$L^\rel_\arit$-formula $\textit{true}(x)$ such that for $n>
h(\num(\varphi))\ge \num(\varphi)$:
\begin{eqnarray}\nonumber
\str S\big(\angle{1^n, \varphi}\big) \models \textit{sat}
 & \iff &
n \models \textit{true}\big(\num(\varphi)\big) \\\label{eq:true}
 & \iff &\N\models \textit{true}^{<n}\big(\num(\varphi)\big),
\end{eqnarray}
where $\N$ is the standard $L^\rel_\arit$-model. Since $h:\mathbb N\to
\mathbb N$ is computable, there is an  $L^\rel_\arit$-formula $\text{``}
h(x)< y\text{''}$ with the obvious meaning. Note the l.h.s.\ of
\eqref{eq:sat} is equivalent to $\N\models\varphi^{<n}$. Combining
\eqref{eq:sat} and \eqref{eq:true} we get
\[
\N\models \text{``}h(\num(\varphi))< y\text{''}\to
\big(\varphi^{<y}
      \leftrightarrow \textit{true}^{<y}(\num(\varphi))\big)
\]
for every $L^\rel_\arit$-sentence $\varphi$. But $M\models \text{``}
h(\num(\varphi))<a\text{''}$, hence
\begin{eqnarray}\label{eq:Mvarphi}
M \models \varphi^{<a}
 \leftrightarrow \textit{true}^{<a}(\num(\varphi))
\end{eqnarray}
for every $L^\rel_\arit$-sentence $\varphi$. As stated in \cite[proof of
Proposition 3]{paris} this contradicts Tarski's undefinability of truth. We
include the details as they are omitted in~\cite{paris}.

The function which for every $L^\rel_\arit$-formula~$\varphi(x)$ maps
$\num(\varphi)$ to $\num(\varphi(\underline{\num(\varphi)}))$ is computable.
So by Lemma~\ref{lem:Mf}, there is a formula $\textit{sub}(x,y)$ such that
for every formula $\varphi(x)$ and every $b\in M$ with $b<^M a$:
\begin{eqnarray}\label{eq:sub}
b= \num(\varphi(\underline{\num(\varphi)}))
 & \iff &
M\models \textit{sub}^{<a}( \num(\varphi), b)
\end{eqnarray}
Define $\chi(x) := \forall y\big(\textit{sub}(x,y) \to \neg
\textit{true}(y))$ and $\theta:=\chi(\underline{\num(\chi)})$, and note
\begin{equation}\label{eq:numtheta}
\num(\theta)= \num(\chi(\underline{\num(\chi)})).
\end{equation}
We arrive at the desired contradiction:
\[
\begin{array}{lrll}
 & M \models \theta^{<a} \iff
  & M\models \forall y{<}a \big(\textit{sub}^{<a}(\num(\chi),y)\to \neg \textit{true}^{<a}(y)\big)
   & \text{by~\eqref{eq:subst}} \\
 & \iff &
  M \models \textit{sub}^{<a}(\num(\chi),b) \to \neg \textit{true}^{<a}(b)
   & \text{for all $b<^M a$} \\
 & \iff &
  M\models \neg \textit{true}^{<a}(\num(\theta))
   & \text{by \eqref{eq:sub} and \eqref{eq:numtheta}} \\
 & \iff & M\not\models \theta^{<a}
  & \text{by~\eqref{eq:Mvarphi}}.
\end{array}
\]

\vspace{-.7cm} \proofend

\subsection{Upper bounds}
Based on our analysis of halting problems in Section~\ref{sec:halt}, we now
see that various \emph{upper bounds} on the complexity of $\ptruth$ imply
separations of classical complexity classes from $\LINH$. This is our main
result. The first assertion is Theorem~\ref{thm:truthNP}:
\begin{theo}\label{thm:truthupper}\
\begin{enumerate}\itemsep=0pt
\item[(i)] If $\ptruth\in \para\NP$, then $\NE\not\subseteq \LINH$.

\item[(ii)] If $\ptruth\in \FPT$, then $\E\not\subseteq \LINH$.

\item[(iii)] If $\ptruth\in \para\NL$, then $\NLINSP\not\subseteq \LINH$.

\item[(iv)] If $\ptruth\in \para\L$, then $\LINSP\not\subseteq \LINH$.
\end{enumerate}
\end{theo}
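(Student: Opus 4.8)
The plan is to recognise $\ptruth$ as an \emph{almost tally} parameterized problem and then feed it into the machinery of Section~\ref{sec:halt}, using as the one new ingredient the unconditional lower bound $\ptruth\notin\paraAC$ of Theorem~\ref{thm:truthAC}. That $\ptruth$ is almost tally is immediate from the shape of its instances: an instance is $\angle{1^n,\varphi}$ with $\varphi=\varphi(x)$ a $\Delta_0$-formula, the unary part provides the ``tally'' coordinate $1^n$, and the binary part $\varphi$ has length exactly the parameter $|\varphi|$, so the defining inequality of almost tallyness holds with witnessing function the identity. (If one prefers, one first applies the trivial re-parameterization used in the proof of Lemma~\ref{lem:phaltalmtally}; this changes nothing.)

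For part~(i), suppose $\ptruth\in\para\NP$. Then $\ptruth$ is an almost tally problem in $\para\NP$ which, by Theorem~\ref{thm:truthAC}, is not in $\paraAC$; hence it is \emph{not} the case that every almost tally problem in $\para\NP$ is in $\paraAC$, and Corollary~\ref{cor:almtallyNE} yields $\NE\not\subseteq\LINH$. Parts~(ii)--(iv) are proved by exactly the same argument with Corollary~\ref{cor:almtallyNE} replaced by the corresponding item of Lemma~\ref{lem:almtallyvariants}: if $\ptruth\in\FPT$ then Lemma~\ref{lem:almtallyvariants}~(i) gives $\E\not\subseteq\LINH$; if $\ptruth\in\para\NL$ then Lemma~\ref{lem:almtallyvariants}~(ii) gives $\NLINSP\not\subseteq\LINH$; and if $\ptruth\in\para\L$ then Lemma~\ref{lem:almtallyvariants}~(iii) gives $\LINSP\not\subseteq\LINH$. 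The four statements are mutually consistent, the hypotheses strengthening along $\para\NP\supseteq\FPT\supseteq\para\NL\supseteq\para\L$ and the conclusions along $\NE\supseteq\E\supseteq\NLINSP\supseteq\LINSP$.

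I do not expect any real obstacle here: all of the substance has already been carried out, namely Theorem~\ref{thm:truthAC} (proved via the undefinability of truth), the ``$\phalt_=$ is the hardest almost tally problem in $\para\NP$'' analysis behind Corollary~\ref{cor:almtallyNE}, and its space/deterministic variants behind Lemma~\ref{lem:almtallyvariants}. What is left is the one-line check that $\ptruth$ genuinely satisfies the definition of an almost tally problem. If anything needs a moment's thought, it is only to keep in mind that each hypothesis ``$\ptruth\in\para X$'' is taken with respect to the actual parameterization $|\varphi|$ (as it is by definition of $\ptruth$), so that Theorem~\ref{thm:truthAC} and the results of Section~\ref{sec:halt} apply directly and no bookkeeping about reductions is required.
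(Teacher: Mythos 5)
Your proposal is correct and is essentially the paper's own proof: the paper likewise observes that $\ptruth$ is almost tally and deduces (i) from Theorem~\ref{thm:truthAC} together with Corollary~\ref{cor:almtallyNE}, and (ii)--(iv) from the corresponding items of Lemma~\ref{lem:almtallyvariants}. Your one-line check of almost tallyness (the binary part is $\varphi$ itself, whose length equals the parameter) is exactly the point the paper leaves implicit.
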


\proof Since $\ptruth$ is an almost tally problem, (i) follows from
Theorem~\ref{thm:truthAC} and Corollary~\ref{cor:almtallyNE}. The other
assertions follow using Lemma~\ref{lem:almtallyvariants}. \proofend


\section{Provability of the \MRDP\ theorem}\label{sec:mrdp}

In this section we prove Theorem~\ref{thm:mrdp} as a corollary to
Theorem~\ref{thm:truthNP} via Parikh's theorem~\cite{par71}:
\begin{theo}\label{thm:parikh}
Let $T$ be a $\Pi_1$-theory and $\varphi(\bar x, \bar y)$ a
$\Delta_0$-formula. If $T$ proves $\exists \bar y\; \varphi(\bar x, \bar y)$,
then $T$ proves $\exists \bar y{<}p(\bar x)\; \varphi(\bar x, \bar y) $ for
some term $p(\bar x)$.
\end{theo}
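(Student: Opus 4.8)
\proof[(sketch)]
The plan is to prove the contrapositive by a compactness argument. If $T$ is inconsistent the claim is trivial, so assume $T$ is consistent and, towards a contradiction, that for \emph{every} $L_\arit$-term $p(\bar x)$ one has $T\not\vdash\exists\bar y{<}p(\bar x)\,\varphi(\bar x,\bar y)$. Introduce fresh constants $\bar c$ (of the same length as $\bar x$) and set
\[
T'\;:=\;T\;\cup\;\big\{\,\neg\exists\bar y{<}p(\bar c)\,\varphi(\bar c,\bar y)\;:\;p(\bar x)\ \text{an}\ L_\arit\text{-term}\,\big\}.
\]
First I would check that $T'$ is consistent by compactness: a finite subset mentions only finitely many terms $p_1,\dots,p_m$, and for $p:=p_1+\dots+p_m+1$ the sentence $\neg\exists\bar y{<}p(\bar c)\,\varphi(\bar c,\bar y)$ implies each $\neg\exists\bar y{<}p_j(\bar c)\,\varphi(\bar c,\bar y)$ over $T$ (an elementary fact about $+$ and $<$); moreover $T\cup\{\neg\exists\bar y{<}p(\bar c)\,\varphi(\bar c,\bar y)\}$ is consistent precisely because, $\bar c$ being new, this is equivalent to $T\not\vdash\exists\bar y{<}p(\bar x)\,\varphi(\bar x,\bar y)$, which holds by assumption.

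Next, fix $M\models T'$ and write $\bar c$ also for $\bar c^M$. The heart of the argument is to pass to the initial segment
\[
J\;:=\;\big\{\,b\in M\;:\;M\models b\le t(\bar c)\ \text{for some}\ L_\arit\text{-term}\ t(\bar x)\,\big\}.
\]
Using elementary properties of $+,\times,<$, one checks that $J$ contains $0^M$ and $1^M$, is closed under $+^M$ and $\times^M$, and is downward closed under $<^M$; hence $J$ is a substructure of $M$ which is moreover an initial segment. The step that will need the most care --- and the main obstacle --- is showing that $\Delta_0$-formulas are absolute between $M$ and $J$: when a bounded quantifier $\exists z{<}s$ is evaluated at parameters from $J$, the value of the bounding term $s$ lies in $J$ (closure under the operations), so every potential witness $z<^M s$ lies in $J$ (initial segment), and a routine induction on formula structure then gives the absoluteness. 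Since $T$ is $\Pi_1$, applying this absoluteness to the bounded matrix of each axiom of $T$ yields $J\models T$.

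Finally, from $J\models T$ and $T\vdash\exists\bar y\,\varphi(\bar x,\bar y)$ we get $J\models\exists\bar y\,\varphi(\bar c,\bar y)$, so there is $\bar d\in J$ with $J\models\varphi(\bar c,\bar d)$, hence $M\models\varphi(\bar c,\bar d)$ by absoluteness. By the definition of $J$, each component $d_i$ satisfies $M\models d_i\le t_i(\bar c)$ for some term $t_i(\bar x)$; setting $p(\bar x):=t_1(\bar x)+\dots+t_k(\bar x)+1$, where $k$ is the length of $\bar y$, we obtain $M\models\exists\bar y{<}p(\bar c)\,\varphi(\bar c,\bar y)$. But $\neg\exists\bar y{<}p(\bar c)\,\varphi(\bar c,\bar y)$ is an axiom of $T'$, contradicting $M\models T'$. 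Hence the assumption fails, and $T\vdash\exists\bar y{<}p(\bar x)\,\varphi(\bar x,\bar y)$ for some term $p(\bar x)$. (The appeals to ``elementary properties of $+,\times,<$'' are unproblematic here because the theories to which we apply the theorem extend a fixed finite base of open arithmetic axioms; one may simply build this assumption into the meaning of ``$\Pi_1$-theory'' throughout.)
\proofend
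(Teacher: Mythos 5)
Your argument is correct, and it is worth noting that the paper does not prove this statement at all: it is quoted as Parikh's theorem with a citation to the original paper. Parikh's original argument is proof-theoretic, whereas what you give is the standard model-theoretic proof (compactness with fresh constants $\bar c$, the cut $J$ of elements bounded by terms in $\bar c$, $\Delta_0$-absoluteness between $J$ and $M$, and downward persistence of $\Pi_1$-axioms to the initial segment), and all the individual steps you sketch go through. The one genuine caveat is the one you flag yourself at the end: as literally stated, for an arbitrary set of $\Pi_1$-sentences the theorem is false --- e.g.\ for $T=\emptyset$ and $\varphi(x,y):= y{=}y$, the sentence $\exists y\,\varphi$ is logically valid, but $\exists y{<}p(x)\,\varphi$ is not provable for any term $p$ (interpret $<$ as the empty relation) --- so some base of true universal axioms about $+,\times,<$ must be assumed to majorize finitely many terms by one, to make $J$ a term-closed initial segment, and to get absoluteness. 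Your proposed convention (building a fixed finite open base theory such as $PA^-$ into ``$\Pi_1$-theory'') repairs this and is harmless for the paper's application in Section~5: there $T$ is a true computably enumerable $\Pi_1$-theory, and one may freely replace $T$ by its union with the (true, universal, finite) base axioms without affecting the hypothesis that $T$ proves \MRDP\ for small numbers or the way the conclusion is used. So your proof is a correct and self-contained substitute for the citation, at the price of this mild strengthening of the hypothesis, which the intended applications ($I\Delta_0$ and its true $\Pi_1$-extensions) satisfy anyway.
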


Here, a \emph{theory} is a set of sentences, and a  {\em $\Pi_1$-theory} is a
set of $\Pi_1$-sentences. For example, $I\Delta_0$ is (equivalent to) a
$\Pi_1$-theory.

\begin{defn}
A theory \emph{$T$ proves \MRDP} if for every $\Delta_0$-formula
$\varphi(\bar x)$ there are $L_\arit$-terms $p(\bar x, \bar y)$ and $q(\bar
x, \bar y)$ such that $T$ proves
\begin{align*}
\varphi(\bar x)
 \leftrightarrow \exists \bar y\; p(\bar x, \bar y){=} q(\bar x, \bar y).
\end{align*}
\end{defn}

As mentioned in the introduction it is a long standing open problem whether
$I\Delta_0$ proves \MRDP\ but it is known that adding exponentiation
suffices. Intuitively, the following concept asks whether \MRDP\ can be
proved using exponentiation only once.

\begin{defn}\label{def:smallnumber}
A theory \emph{$T$ proves \MRDP\ for small numbers} if for every $k\in
\mathbb N$ and every $\Delta_0$-formula $\varphi(\bar x)= \varphi(x_0,
\ldots, x_{k-1})$ there are $L_\arit$-terms $p(\bar x, \bar y)$ and $q(\bar
x, \bar y)$ such that $T$ proves
\begin{eqnarray}\label{eq:smallmrdp}
\textstyle\bigwedge_{i<k}  2^{x_i}{\le}z
 & \to & \Big(\varphi(\bar x) \leftrightarrow
  \exists \bar y\; p(\bar x, \bar y){=} q(\bar x, \bar y)\Big).
\end{eqnarray}
\end{defn}

\noindent Here, $2^x{\le}y$ stands for a well-known
$\Delta_0$-formula~\cite[Section~V.3.(c)]{hp}. The following strengthens
Theorem~\ref{thm:mrdp}:
\begin{theo}
Let $T$ be a true $\Pi_1$-theory. Moreover, assume that $T$ is computably
enumerable. If $T$ proves \MRDP\ for small numbers, then
 $\ptruth\in\para\NP$
  and thus $\NE\not\subseteq \LINH$.
\end{theo}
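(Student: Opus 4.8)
The plan is to establish $\ptruth\in\para\NP$; the second conclusion $\NE\not\subseteq\LINH$ is then immediate from Theorem~\ref{thm:truthNP}. So fix an instance $(1^n,\varphi(x))$ of $\ptruth$ with parameter $k:=|\varphi|$, and aim for a nondeterministic algorithm running in time $f(k)\cdot n^{O(1)}$ for a computable $f$. First I would carry out a preprocessing stage depending only on $\varphi$, hence only on $k$. Since $T$ proves \MRDP\ for small numbers, Definition~\ref{def:smallnumber} (with one free variable) provides $L_\arit$-terms $p(x,\bar y),q(x,\bar y)$ such that $T$ proves
\[
2^x\le z\to\big(\varphi(x)\leftrightarrow\exists\bar y\; p(x,\bar y){=}q(x,\bar y)\big).
\]
Because $T$ is computably enumerable, its set of theorems is computably enumerable, so such terms can be found by exhaustive search; this halts, hence runs in time bounded by a computable function of $k$, and returns $p,q$ of size at most some computable $f_0(k)$.

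The key step is to bound the existential witnesses. From the displayed sentence $T$ proves $\big(\varphi(x)\wedge 2^x\le z\big)\to\exists\bar y\; p(x,\bar y){=}q(x,\bar y)$, and since $\bar y$ does not occur in the antecedent this is logically equivalent to $\exists\bar y\,\big((\varphi(x)\wedge 2^x\le z)\to p(x,\bar y){=}q(x,\bar y)\big)$, whose matrix is a $\Delta_0$-formula. Because $T$ is a $\Pi_1$-theory, Parikh's Theorem~\ref{thm:parikh} — applied with the free variables $x,z$ — yields an $L_\arit$-term $r(x,z)$ such that $T$ proves $\exists\bar y{<}r(x,z)\,\big((\varphi(x)\wedge 2^x\le z)\to p(x,\bar y){=}q(x,\bar y)\big)$, and therefore also
\[
\big(\varphi(x)\wedge 2^x\le z\big)\to\exists\bar y{<}r(x,z)\; p(x,\bar y){=}q(x,\bar y).
\]
Searching the theorems of $T$ once more, the preprocessing computes such an $r$, of size at most a computable $f_1(k)$.

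Now $T$ is true, so both displayed $T$-theorems hold in the standard model $\N$. Instantiating $z:=2^n$ and using $2^n\le 2^n$, the first gives $\N\models\varphi(n)\leftrightarrow\exists\bar y\; p(n,\bar y){=}q(n,\bar y)$ and the second gives $\N\models\varphi(n)\to\exists\bar y{<}r(n,2^n)\; p(n,\bar y){=}q(n,\bar y)$; combining them,
\[
\N\models\varphi(n)\iff\exists\bar y{<}r(n,2^n)\; p(n,\bar y){=}q(n,\bar y).
\]
Every $L_\arit$-term evaluated at inputs bounded by $V\ge 2$ has value at most $V$ raised to the length of the term, so $r(n,2^n)\le 2^{n\cdot|r|}\le 2^{n\cdot f_1(k)}$; hence each component of a witness $\bar y$ has at most $f_1(k)\cdot n$ bits, which — crucially — is linear in the input length because $n$ is given in unary. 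The nondeterministic algorithm then guesses such a tuple $\bar y$ (at most $f_0(k)$ components, each of at most $f_1(k)\cdot n$ bits), evaluates the integers $p(n,\bar y)$ and $q(n,\bar y)$ by term evaluation (arithmetic on numbers of $O(f_0(k)f_1(k)\cdot n)$ bits, hence in time $g(k)\cdot n^{O(1)}$ for a computable $g$), and accepts iff $p(n,\bar y)=q(n,\bar y)$. By the last displayed equivalence — whose backward direction absorbs the spurious witnesses above $r(n,2^n)$ — this is correct, and the total running time is $f(k)\cdot n^{O(1)}$, so $\ptruth\in\para\NP$.

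I expect the one genuinely load-bearing point to be the bounding of the witnesses $\bar y$: this is exactly where the hypothesis that $T$ is a $\Pi_1$-theory enters, via Parikh's theorem, together with the small trick of pushing $\exists\bar y$ out of the implication so that the statement has the form Parikh's theorem requires. The rest is routine bookkeeping: finding $p,q,r$ is legitimate fixed-parameter preprocessing precisely because $T$ is computably enumerable, and an exponential bound $2^{O(n)}$ on $\bar y$ is harmless precisely because $\ptruth$ presents $n$ in unary.
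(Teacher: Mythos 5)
Your proposal is correct and follows essentially the same route as the paper: extract $p,q$ from the small-number MRDP hypothesis, use Parikh's theorem (over the $\Pi_1$-theory $T$) to bound $\exists\bar y$ by a term $r(x,z)$, find $p,q,r$ effectively via computable enumerability, and then, since $T$ is true, instantiate $z:=2^n$ so that guessing a witness $\bar y<r(n,2^n)$ of length $O(f(k)\cdot n)$ and checking $p(n,\bar y)=q(n,\bar y)$ gives a $\para\NP$ algorithm, after which Theorem~\ref{thm:truthNP} yields $\NE\not\subseteq\LINH$. Your explicit step of pushing $\exists\bar y$ out of the implication to put the statement in the form required by Parikh's theorem is exactly what the paper's terser "$\exists\bar y$ can be replaced by $\exists\bar y{<}r(x,z)$" is eliding, so there is no substantive difference.
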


\proof
Assume $T$ proves \eqref{eq:smallmrdp} for $\varphi(x)$, and hence
\[
2^{x}{\le}z
 \wedge \varphi(x) \to
  \exists \bar y\; p(x, \bar y){=} q(x, \bar y).
\]
By Theorem~\ref{thm:parikh} $\exists\bar y$ can be replaced by $\exists \bar
y{<}r(x,z)$ for some term $r(x,z)$. But since $T$ proves \eqref{eq:smallmrdp}
for $\varphi(x)$, $T$ proves
\[
2^{x}{\le}z
 \to\big( \varphi(x) \leftrightarrow
  \exists \bar y{<}r(x,z)\; p(x, \bar y){=} q(x, \bar y)\big).
\]

Since $T$ is computably enumerable, such terms $p,q,r$ can be computed from
$\varphi$. Given an instance $\angle{1^n,\varphi}$ of $\ptruth$, compute
$p,q,r$ as above, guess $\bar m<r(n,2^n)$ and check $p(n, \bar m){=} q(n,
\bar m)$. Note the length of the guess $\bar m$ is $O(|r|\cdot|\bar y|\cdot
n)$. The check can be done in time $(|p|\cdot |q|\cdot |r|\cdot n)^{O(1)}$.
Thus, $\ptruth\in\para\NP$. Now apply Theorem~\ref{thm:truthNP}. \proofend

It would be interesting to find variants of the this result that infer
$\ptruth\in \FPT$ or $\ptruth\in \para\NL$ from certain provabilities of
\MRDP\ or other arithmetical statements. Note this implies stronger
separations of complexity classes by Theorem~\ref{thm:truthupper}.



\section{\phalt\ and a universal $\AC^0$-easy set in $\NP$}\label{sec:immu}

In this section we prove Theorem~\ref{thm:immu}. We use the following
technical lemma stating, roughly, that every computable function is dominated
by a computable injection which is $\AC^0$-invertible.

\begin{lem}\label{lem:h}
Let $f: \mathbb N\to \mathbb N$ be computable. Then there is an
increasing $h:\mathbb N\to \mathbb N$ with the following properties.
\begin{enumerate}\itemsep=0pt
\item[(i)] $h(n)\ge f(n^2)$ for every $n\in \mathbb N$.

\item[(ii)] $1^n\mapsto 1^{h(n)}$ is computable in time
    $h(n)^{O(1)}$.

\item[(iii)] There is an $L^\rel_\arit$-sentence $\varphi_h$ such that for
    every $x\in\{0,1\}^*$ with $|x|\ge 2$:
    \begin{eqnarray*}
    \str S(x)\models\varphi_h
     & \iff &
      x= 1^{h(n)} \text{for some $n\in\N$}.
    \end{eqnarray*}

\item[(iv)] There is an $L^\rel_\arit$-formula $\varphi(x)$ that defines
    $n$ in $ \str S(1^{h(n)})$ for every $n\ge 2$.
\end{enumerate}
\end{lem}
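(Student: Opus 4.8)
\medskip\noindent\textit{Proof proposal.} The idea is that although first-order logic over the finite structure $\str S(1^m)$ cannot \emph{compute} a fast-growing $f$, it can \emph{verify} a computation of $f$ presented as a standard tableau. So I will let the binary expansion of $h(n)$ be a rigid, self-delimiting string that literally contains a full computation history of a fixed machine computing $f$ at $n^2$, which turns the image of $h$ into an $\AC^0$-set while leaving $h$ free to grow arbitrarily fast.

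Concretely, first replace $f$ by the increasing majorant $F(N):=\max\{f(0),\dots,f(N)\}+N$ and fix a deterministic Turing machine $A$ computing $F$; its running time $T_A$ is then strictly increasing. For $n\in\N$ let $H_n$ be the computation history of $A$ on input $n^2$, written as an explicit, uncompressed time-by-space tableau over $\{0,1\}$, and set
\[
\bin(h(n)):=1^{n+1}\,0\,H_n .
\]
This is a legal binary expansion; the length of its leading block of ones is $n+1$, so $n$ is read off from $h(n)$. Its bit-length is $(n+1)+1+|H_n|$, which is strictly increasing in $n$ because $T_A$ is increasing and $n\mapsto n^2$ is increasing, so $h$ is increasing. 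Property~(i) follows from $h(n)\ge 2^{|\bin(h(n))|-1}\ge 2^{|H_n|}\ge 2^{|\bin(F(n^2))|}> F(n^2)\ge f(n^2)$, using $|H_n|\ge|\bin(F(n^2))|$ since the final configuration recorded in $H_n$ holds $\bin(F(n^2))$. Property~(ii): on input $1^n$ compute $n^2$, run $A$ on it recording $H_n$ in $T_A(n^2)=O(|H_n|)$ steps, form $\bin(h(n))$, and print $1^{h(n)}$; the total time is $O(|H_n|+h(n))=O(h(n))$ since $|H_n|\le|\bin(h(n))|\le h(n)$.

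For~(iii) and~(iv) I would appeal to the descriptive setup of Section~\ref{sec:prelim}. Take $\varphi_h$ to be the conjunction of $\forall z\,\ONE(z)$ --- which over $\str S(x)$ holds exactly when $x=1^{|x|}$ --- with an $L^\rel_\arit$-sentence expressing that the binary expansion of the universe size $m$ has the shape $1^{k}0H$ with $k\ge1$ such that $H$ is the (unique, as $A$ is deterministic) accepting computation history of $A$ on input $(k-1)^2$. Using $+,\times,<$ together with the formula $\BIT$ of Lemma~\ref{lem:karyinterpretation}(ii), first-order logic over $\str S(1^m)$ can address the bits of $m$ itself --- either via the maximal element $m-1$ with an off-by-one adjustment, or by passing through the structure $m^{d}$ interpretable in $\str S(1^m)$ by Lemma~\ref{lem:karyinterpretation}(i), in which $m$ is an element --- can locate the leading $1$-block and the following $0$, and can then run the usual local (Cook--Levin-style) verification that the remaining bits form an accepting tableau of the fixed machine $A$; every auxiliary quantity needed ($k-1$, $(k-1)^2$, a tableau coordinate $t\cdot W+i$) has size $O(\log^2 m)$ and hence lies in $[m]$. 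Thus $\varphi_h$ is correct for all sufficiently large $|x|$, and the finitely many small lengths can be hard-wired into $\varphi_h$ (or avoided by arranging $h(0)$ to be large); by Theorem~\ref{thm:BIS} this is precisely~(iii). For~(iv) let $\varphi(x)$ be the $L^\rel_\arit$-formula saying ``$x+1$ equals the length of the maximal block of leading ones of the binary expansion of the universe size''; since $n<|\bin(h(n))|\le h(n)$, this names an element of $\str S(1^{h(n)})$, and that element is exactly $n$.

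The one genuinely delicate point is the claim just used: that first-order logic over the finite arithmetical structure $\str S(1^m)$ can parse the binary expansion of the \emph{length} $m$ and perform the tableau verification. This is routine descriptive complexity --- essentially the Cook--Levin tableau construction phrased in $\FO(\BIT)$, as in~\cite{BIS,immer} --- but it carries the familiar bookkeeping of moving between ``the number $m$'' and ``the structure on $[m]$'' and of checking that every index one quantifies over stays below $m$. Granting that, monotonicity, the growth bound~(i), the time bound~(ii) and the definability in~(iv) all follow immediately from the shape of $\bin(h(n))$.
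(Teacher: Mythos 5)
Your construction is essentially the paper's own: there too $h(n)$ is defined so that its binary expansion is (a pairing of) $1^n$ with a full computation history of a machine derived from $f$ run on squared inputs, the image of $h$ is then $\AC^0$-, hence first-order-, definable from the tally input via the $\BIT$ formula and an interpretation, and $n$ is read off from the prefix, so parts (i)--(iv) go through as you argue. The one step that is not justified as stated is your claim that a machine $A$ computing the majorant $F$ automatically has strictly increasing running time: computing an increasing function does not force monotone runtime, and without monotonicity of $|H_n|$ the total length of $\bin(h(n))$, and hence $h$ itself, need not be increasing, which the lemma requires (and which its application in the bi-immunity argument uses). This is a one-line patch: let $A$ on input $N$ first run the machine for $f$ on every $i\le N$ (exactly how the paper arranges monotonicity, running $\mathbb M_f$ on $1^{i^2}$ for all $i\le n$), or insert a unary clock of length $N$; with that amendment your argument is complete.
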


\proof Given a deterministic Turing machine $\mathbb M$ and $x\in \{0,1\}^*$
we let $y_{\mathbb M, x}\in\{0,1\}^*$ encode the computation of $\mathbb M$
on $x$. This encoding can be chosen so that:
\begin{enumerate}\itemsep=0pt
\item[(a)] $x\mapsto y_{\mathbb M, x}$ is computable in time $|y_{\mathbb
    M, x}|^{O(1)}$.

\item[(b)] $\big\{\angle{x, y_{\mathbb M, x}}\bigmid x\in
    \{0,1\}^*\big\}\in \AC^0$.
\end{enumerate}
Now let $\mathbb M_f$ be a Turing machine that computes $1^n\mapsto
1^{f(n)}$. Let $\mathbb M$ be the machine that on input~$1^n$ runs $\mathbb
M_f$ on $1^{i^2}$ for every $i\le n$. Define the \emph{increasing} function
$h:\mathbb N\to \mathbb N$ by
\begin{equation}\label{eq:hdefinition}
h(n)= \num\big(\angle{1^n, y_{\mathbb M, 1^n}}\big)
\end{equation}
Clearly, the string $y_{\mathbb M_f, 1^{n^2}}$ encoding the computation of
$\mathbb M_f$ on input $1^{n^2}$ has length at least $f(n^2)$. Similarly,
$|y_{\mathbb M, 1^n}|\ge f(n^2)$. Thus $h(n)\ge f(n^2)$ for every $n\in
\mathbb N$, i.e., (i) holds.

(ii) holds by (a). To show (iii), Theorem~\ref{thm:BIS} and (b) imply that
there is an $L^\rel_\arit$-sentence $\varphi$ that holds precisely in the
string structures of the form $\str S\big(\bin(h(n))\big)$ for $n\in\N$.
Using $\BIT$, there is an interpretation $\inter I$ such that $\str
S(1^m)^{\inter I}\cong\str S(\bin(m))$ for every $m> 2$, so
$\varphi_h:=\varphi^{\inter I}$ holds precisely in the string structures of
the form $\str S(1^{h(n)})$ for $n\in\N$ (we have $h(n)>2$ for all $n\in\N$).

Trivially, $n$ is definable in $\str S(\bin(h(n)))$, so (iv) follows using
the interpretation $I$ above. \proofend

Theorem~\ref{thm:immu} is an easy consequence of the following stronger
result, and we view it as good evidence for the truth of
Conjecture~\ref{conj:phaltAC}.

\begin{theo}\label{thm:universal}
Assume $\phalt\in \paraAC$. Then there is an infinite tally problem $X$ such
that for every $Q\in\NP$ we have $Q\cap X\in \AC^0$.
\end{theo}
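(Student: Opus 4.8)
The plan is to use the hypothesis $\phalt\in\paraAC$ together with the $\paraAC$-completeness of $\phalt_=$ among almost tally problems (Lemma~\ref{lem:phaltalmtally}), but the key realization is that a \emph{single} machine suffices to absorb \emph{all} of $\NP$. Fix an efficient enumeration $\mathbb N_0,\mathbb N_1,\dots$ of all nondeterministic Turing machines, and fix a universal nondeterministic machine $\mathbb U$ which, on input $\langle 1^i,1^t,x\rangle$, simulates $\mathbb N_i$ on $x$ for $t$ steps and accepts iff $\mathbb N_i$ does. Since $\phalt\in\paraAC$ implies $\phalt_=\in\paraAC$ (by Theorem~\ref{thm:phaltequal} one checks the equivalence goes through; more directly one can work with a variant of $\mathbb U$ that pads to exactly a prescribed running time as in the construction of $\mathbb M^*$ on page~\pageref{page:Mstar}), and $\mathbb U$ is a \emph{fixed} machine, the classical slice
\[
X_0:=\big\{\langle 1^i,1^t,1^n\rangle \bigmid \mathbb U \text{ accepts } \langle 1^i,1^t\rangle \text{ in exactly } n \text{ steps, padded}\big\}
\]
is in $\AC^0$ — exactly the phenomenon isolated in Remark~\ref{rem:xac0}. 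The idea is that $X_0$, suitably re-encoded as a tally set, is the desired universal $\AC^0$-easy set $X$.

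The key steps, in order, are as follows. First, observe that for $Q\in\NP$ there is an index $i$ and a polynomial bound so that $x\in Q$ iff $\mathbb N_i$ accepts $x$ within $|x|^{c_i}+c_i$ steps; by the standard "at most $\le$ exactly" trick (Theorem~\ref{thm:phaltequal}(ii), or padding with dummy steps) this becomes: $x\in Q$ iff $\mathbb U$ accepts the empty input in exactly $g_i(x)$ steps, where $g_i$ is a fixed, honestly-growing, $\AC^0$-invertible function of $x$ that also encodes $i$. Second, using the descriptive machinery of Section~\ref{sec:prelim} (Theorem~\ref{thm:BIS}, Lemma~\ref{lem:karyinterpretation}, the interpretations of Example~\ref{exa:internum}), transport the $\AC^0$ upper bound for the $\phalt_=$-slice of $\mathbb U$ back along the $\AC^0$-invertible encoding $x\mapsto g_i(x)$: this shows $\mathbb U$-slice membership is $\AC^0$-decidable as a predicate of the tally string $1^{g_i(x)}$, and pulling back through $\num$/$\bin$ we get that, for each fixed $i$, the set $\{1^{\num(x)}\mid x\in Q\}$ is in $\AC^0$. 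Third — and this is the crux — define a single infinite tally set $X$ whose elements are the strings $1^{\num(\langle 1^i,1^{g_i(x)}\rangle)}$ (or any fixed pairing encoding both $i$ and the padded runtime as one tally length), and note that $X\in\AC^0$ because it is exactly the pullback of $X_0$ along an $\AC^0$-invertible tallying map, using Theorem~\ref{thm:BIS} plus the interpretation lemmas once more. Then for $Q\in\NP$ with index $i$, the set $Q\cap X$ — restricted to the "$i$-th layer" of $X$ — is $\AC^0$-decidable: membership in $X$ is $\AC^0$, and on $X$ the correct $i$ is read off in $\AC^0$ from the input, after which membership in $Q$ is equivalent to $\mathbb U$ accepting in exactly the encoded runtime, which is $X_0\in\AC^0$. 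Finally, $X$ is infinite because $g_i$ is unbounded for, say, $i$ fixed to a trivial machine.

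I expect the main obstacle to be the bookkeeping in the third step: arranging one tally set $X$ that simultaneously (a) is itself in $\AC^0$, (b) makes the index $i$ and the runtime recoverable in $\AC^0$ from any element of $X$, and (c) is hit, for every $\NP$-set $Q$, by infinitely many inputs $x\in Q$ so that $Q\cap X$ is both infinite-friendly and $\AC^0$. The honest, $\AC^0$-invertible encodings are exactly what Lemma~\ref{lem:karyinterpretation} and Example~\ref{exa:internum} provide, so the real work is choosing the pairing so that the interpretation witnessing $X_0\in\AC^0$ (from Remark~\ref{rem:xac0}) composes cleanly with the pullback interpretation along $x\mapsto 1^{\num(\cdots)}$ — a transitivity-of-interpretations argument (Lemma~\ref{lem:interpretation}(ii)) applied in the style of Lemma~\ref{lem:transitive}. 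Once the encoding is pinned down, each of (a)--(c) is a routine application of Theorem~\ref{thm:BIS}. Theorem~\ref{thm:immu} then follows: an $\AC^0$-bi-immune $Q\in\NP$ would have $Q\cap X$ an infinite $\AC^0$ set (since $X$ is infinite and on the $Q$-layer $Q\cap X$ is co-infinite in $X$ only if $Q$ misses infinitely many $x$, in which case one instead uses that $\overline{Q}\cap X\in\AC^0$ is infinite), contradicting bi-immunity; hence $\phalt\notin\paraAC$.
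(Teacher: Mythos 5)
There is a genuine gap, and it sits right at the start of your plan: you write that ``$\phalt\in\paraAC$ implies $\phalt_=\in\paraAC$ (by Theorem~\ref{thm:phaltequal} one checks the equivalence goes through)''. Theorem~\ref{thm:phaltequal}~(ii) only gives the \emph{other} direction, $\phalt_=\in\paraAC\Rightarrow\phalt\in\paraAC$, and the converse is precisely what is not available: by Theorem~\ref{thm:phaltequal}~(i) (and Remark~\ref{rem:xac0}), concluding that the fixed-machine ``exactly $n$ steps'' slice $X_0$ of your universal machine $\mathbb U$ is in $\AC^0$ is equivalent to $\NE\subseteq\LINH$, which cannot be extracted from the hypothesis $\phalt\in\paraAC$. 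Indeed your intermediate claim that ``for each fixed $i$, the set $\{1^{\num(x)}\mid x\in Q\}$ is in $\AC^0$'' says $\un(Q)\in\AC^0$ for every $Q\in\NP$, i.e.\ $\NE\subseteq\LINH$ by Proposition~\ref{prop:ag}; under that assumption the theorem is trivial, so your argument is really running on the unavailable hypothesis $\phalt_=\in\paraAC$. Nor can the fixed universal machine be salvaged using the ``at most $n$ steps'' problem you are actually given: for any \emph{fixed} machine, acceptance within $n$ steps is monotone in $n$, so the corresponding slice of $\phalt$ is a final segment of tally strings and carries no information about $Q$ (this is exactly why $\phalt$ is trivially in nonuniform $\paraAC$ and in $\XAC2$). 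Encoding the input of $Q$ into the exact running time of one fixed machine is therefore the wrong mechanism here.

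The paper's proof avoids this by never fixing the machine. From $\phalt\in\paraAC$ it takes, via Corollary~\ref{cor:evtfo}, a single sentence $\varphi$ that decides $\phalt$ on all instances $\angle{1^n,\mathbb M}$ with $n\ge f(|\mathbb M|)$, and then queries it on machines $\mathbb M_{Q,h,m}$ that \emph{vary with $m$}: each hard-codes $m$, deterministically computes $1^{h(m)}$, and runs $\mathbb M_Q$ on it, so that acceptance \emph{within} the generous budget $n(m)=h(m)^c+h(m)^d$ already captures $1^{h(m)}\in Q$ --- no ``exactly'' is needed. The other ingredient you are missing is the threshold bookkeeping: $X=\{1^{h(m)}\mid m\ge2\}$ must be the range of a function $h$ growing fast enough that $n(m)\ge f(|\mathbb M_{Q,h,m}|)$ for all but finitely many $m$ (Lemma~\ref{lem:h}~(i), $h(n)\ge f(n^2)$), while still being $\AC^0$-recognizable with $m$ recoverable by a first-order formula (Lemma~\ref{lem:h}~(iii),(iv)), which is what lets $\varphi_h\wedge\varphi^{\inter I}$ decide $Q\cap X$ via Theorem~\ref{thm:BIS}. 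If you rework your argument, replace the universal machine by such input-dependent machines of slowly growing size and make the sparseness of $X$ relative to the eventual-definability threshold explicit; the interpretation/composition machinery you cite (Lemma~\ref{lem:interpretation}, Example~\ref{exa:internum}) is then used essentially as you intended.
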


\noindent \textit{Proof of Theorem~\ref{thm:immu} from
Theorem~\ref{thm:universal}:} Assume $\phalt\in \paraAC$ and let $Q\in  \NP$.
Let $X$ be as stated in Theorem~\ref{thm:universal}. Then either $Q\cap X$ or
$(\{0,1\}^*\setminus Q)\cap X$ is infinite. By Theorem~\ref{thm:universal}
they are both in $\AC^0$. Hence $Q$ is not $\AC^0$-bi-immune. \proofend

\noindent \textit{Proof of Theorem~\ref{thm:universal}:}
By~Corollary~\ref{cor:evtfo} there is a computable increasing
function $f:\mathbb N\to \mathbb N$ and an
$L^\rel_\arit$-sentence $\varphi$ such that for every $\angle{1^n, \mathbb
M}$ with $n \ge f(|\mathbb M|)$:
\begin{eqnarray} \label{eq:phaltvarphi}
\str S\big(\angle{1^n, \mathbb M}\big)\models \varphi
&  \iff &\text{$\mathbb M$ accepts the empty input tape in at most $n$ steps}.
\end{eqnarray}
Now let $h:\mathbb N\to \mathbb N$ be the increasing function as stated in
Lemma~\ref{lem:h}. In particular, there is a deterministic Turing machine
$\mathbb M_h$ and a constant $c\ge 1$ such that on input $1^m$ the machine
$\mathbb M_h$ outputs the string $1^{h(m)}$ in time $h(m)^c$. The desired
set $I$ is defined by
\[
X:= \big\{1^{h(m)} \mid m\ge 2\big\}.
\]
By Lemma~\ref{lem:h}~(iii) the sentence $\varphi_h$ witnesses $X\in\AC^0$
according to Theorem~\ref{thm:BIS}.

Now let $Q\subseteq \{0,1\}^*$ be a problem in $\NP$. In particular, there is
a nondeterministic Turing machine $\mathbb M_Q$ and a constant $d\ge 1$ such
that $\mathbb M_Q$ accepts $x$ in time $|x|^d$.

Define the nondeterministic Turing machine $\mathbb M_{Q,h,m}$ to run
$\mathbb M_h$ on $1^m$ to produce output $1^{h(m)}$ and then run $\mathbb
M_Q$ on $1^{h(m)}$. This machine runs in time
\[
n(m):= h(m)^c + h(m)^{d}.
\]
Choose a constant $e\in\N$ such that $m\ge |\mathbb M_h|+ |\mathbb
M_Q|+e$ implies $m^2\ge |\mathbb M_{Q,h,m}|$. Then
\begin{align*}
n(m) \ge h(m)\ge f(m^2)
  \ge f(|\mathbb M_{Q,h,m}|).
\end{align*}
Hence, by \eqref{eq:phaltvarphi}, for $m\ge |\mathbb M_h|+ |\mathbb
M_Q|+e$:
\begin{eqnarray}\notag
 1^{h(m)}\in Q & \iff &
 \text{$\mathbb M_{Q,h,m}$ accepts the empty input in at most $n(m)$ steps} \\\label{eq:Mm}
 & \iff &
\str S(\angle{1^{n(m)},\mathbb M_{Q,h,m}})\models\varphi.
\end{eqnarray}

Lemma~\ref{lem:h}~(iv) implies that there is an interpretation $\inter I$
such that for every $m\in \mathbb N$
\[
\str S(1^{h(m)})^{\inter I}
 = \str S\big(\angle{1^{n(m)}, \mathbb M_{Q,h,m}}\big).
\]

By Theorem~\ref{thm:BIS} it suffices to show that for every $x\in \{0,1\}^*$
with $|x|\ge h(|\mathbb M_h|+ |\mathbb M_Q|+ e)$:
\begin{eqnarray*}
x\in Q\cap X
 & \iff &
\str S(x)\models \varphi_h\wedge \varphi^{\inter I}.
\end{eqnarray*}

Assume $x\in Q\cap X$. Then $x=1^{h(m)}$ for some $m\ge 2$ and $\str
S(x)\models\varphi_h$. Since $|x|=h(m)\ge h(|\mathbb M_h|+ |\mathbb M_Q|+e)$
and $h$ is increasing, we have $m\ge |\mathbb M_h|+ |\mathbb M_Q|+e$. Thus
$x=1^{h(m)}\in Q$ implies $\str S(\angle{1^{n(m)}, \mathbb M_{Q,h,m}})\models
\varphi$ by~\eqref{eq:Mm}, and $\str S(1^{h(m)})\models \varphi^{\inter I}$
follows.

Conversely, assume $S(x)\models\varphi_h\wedge\varphi^{\inter I}$. By  $\str
S(x)\models\varphi_h$, we have $x\in X$, so $x=1^{h(m)}$ for some $m\ge 2$.
By $\str S(1^{h(m)})\models\varphi^{\inter I}$ we have $\str
S(\angle{1^{n(m)}, \mathbb M_{Q,h,m}})\models\varphi$. This implies
$x=1^{h(m)}\in Q$ by~\eqref{eq:Mm} because, as above, $m\ge |\mathbb M_h|+
|\mathbb M_Q|+e$. \proofend

\section{Problem comparison}\label{sec:concl}
\subsection{The role of uniformity}
Our proof of the lower bound $\ptruth\notin \paraAC$
(Theorem~\ref{thm:truthAC}) makes crucial use of the uniformity condition in
the definition of $\paraAC$. To shed some light on this dependence, we relax
the uniformity condition as follows.

\begin{defn}\label{df:xac} Let $(Q,\kappa)$ be a parameterized problem  and $d,k\in\N$. The {\em $k$-th slice of $(Q,\kappa)$} is the classical problem $\{x\in Q\mid\kappa(x)=k\}$.
The class $\XAC{}$ contains $(Q,\kappa)$ if and only if $\AC^0$ contains every slice of $(Q,\kappa)$.
The class $\XAC{d}$ contains $(Q,\kappa)$ if and only if $\CAC d$ contains every slice of $(Q,\kappa)$; here,
 $\CAC d$ denotes the class of problems decided by
dlogtime uniform circuit families of polynomial size and depth $d$.
\end{defn}

Clearly,
\begin{equation}\label{eq:xac}
\textstyle
\paraAC\subseteq \bigcup_{d\in\N}\XAC d\subseteq\XAC{}
\end{equation}
and $\XAC 1\not\subseteq \paraAC$ as witnessed, e.g., by an undecidable
problem with parameterization~$\num$.
The class $\XAC{}$ is important in our context because it is a natural upper bound on $\ptruth$:

\begin{prop}\label{prop:truthxac} $\ptruth\in\XAC{}$.
\end{prop}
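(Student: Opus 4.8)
The plan is to show that for each fixed parameter value $k$, the $k$-th slice of $\ptruth$ is in $\AC^0$; by Definition~\ref{df:xac} this gives $\ptruth\in\XAC{}$. The $k$-th slice consists of all instances $\angle{1^n,\varphi(x)}$ where $\varphi$ is a $\Delta_0$-formula with $|\varphi|=k$ (plus, for the encoding to make sense, the "garbage" instances that the parameterization sends to $k$, which we can handle trivially since they form an $\AC^0$ set on which membership is a fixed answer). Since $k$ is fixed, there are only finitely many $\Delta_0$-formulas $\varphi(x)$ with $|\varphi|=k$, say $\varphi_1,\dots,\varphi_t$. So deciding the slice reduces to: given $1^n$ and an index $i\le t$, decide whether $\N\models\varphi_i(n)$. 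It therefore suffices to show that for each \emph{fixed} $\Delta_0$-formula $\varphi(x)$, the tally problem $\{1^n\mid \N\models\varphi(n)\}$ is in $\AC^0$, and then take the (finite) disjunction over the finitely many formulas in the slice, reading off $i$ from the second component of the pairing with a dlogtime machine.

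The core claim — for fixed $\Delta_0$ $\varphi(x)$, the set $\{1^n\mid\N\models\varphi(n)\}$ is in $\AC^0$ — I would prove by combining Lemma~\ref{lem:fcttorel} with Theorem~\ref{thm:BIS}, exactly as in the proof of Lemma~\ref{lem:fcttorel} itself. By Lemma~\ref{lem:fcttorel} there is an $L^\rel_\arit$-sentence $\psi$ (computable from $\varphi$, hence fixed once $\varphi$ is fixed) such that for all $n\ge 2$, $\N\models\varphi(n)$ iff $n\models\psi$, where $n$ is the standard finite $L^\rel_\arit$-structure on $[n]$. Now $n\models\psi$ means $\N\models\psi^{<n}$, i.e.\ the bounded sentence $\psi^{<n}$ is evaluated in the standard model. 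Equivalently, writing things in terms of the string structure: the string $1^n$ has string structure $\str S(1^n)$ which is just the structure $n$ expanded by the unary predicate $\ONE^{1^n}=[n]$ (all bits are $1$), and whether $n\models\psi$ is expressible by an $L^\rel_\arit\cup\{\ONE\}$-sentence $\psi'$ evaluated in $\str S(1^n)$ — simply take $\psi'$ to be $\psi$ (the $\ONE$ predicate is not needed, but we are allowed to ignore it). Hence by Theorem~\ref{thm:BIS}, $\{1^n\mid n\ge 2,\ n\models\psi\}\in\AC^0$. The finitely many instances with $n<2$ and the garbage instances are handled by hard-wiring finitely many answers, which keeps us in $\AC^0$.

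Assembling: the $k$-th slice is $\bigcup_{i:\,|\varphi_i|=k,\ \N\models\varphi_i(\cdot)\text{ on the relevant }n}\{\angle{1^n,\varphi_i}\}$, a finite union of sets each of which is an $\AC^0$ set (the inner tally set from the core claim, intersected with "second coordinate equals the fixed string $\varphi_i$", which is $\AC^0$-checkable — indeed dlogtime-checkable — since $\varphi_i$ is a fixed string and the pairing $\eqref{eq:pairing}$ is dlogtime-decodable), together with a trivial handling of garbage inputs. A finite union of $\AC^0$ sets is in $\AC^0$ (taking the disjunction of the corresponding circuit families; uniformity is preserved since $t$ and all the $\varphi_i$ are constants depending only on $k$). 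Therefore every slice of $\ptruth$ is in $\AC^0$, so $\ptruth\in\XAC{}$.

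The main obstacle, such as it is, is purely bookkeeping: one must be careful that the "slice" really is a classical problem over \emph{all} of $\{0,1\}^*$, so the $\AC^0$ circuit family has to do the right thing on strings that are not of the form $\angle{1^n,\varphi}$ at all (these get parameter $0$ by convention, so for $k\neq 0$ they are simply not in the slice and the circuit rejects, which is a dlogtime-uniform syntactic check on the input shape; for $k=0$ one must also incorporate these), and on strings $\angle{1^n,\varphi}$ where $\varphi$ is a $\Delta_0$-formula of size other than $k$ (also not in the slice). All of this is $\AC^0$ (even dlogtime) to detect, so there is no real difficulty — the proposition is essentially the observation that fixing the parameter fixes the formula, after which Lemma~\ref{lem:fcttorel} plus Theorem~\ref{thm:BIS} finish it. The contrast with Theorem~\ref{thm:truthAC}, which says $\ptruth\notin\paraAC$, is that there the \emph{same} circuit family must work uniformly across all parameter values via a \emph{single} sentence $\textit{sat}$, and that is what the undefinability-of-truth argument rules out; slicewise, no such uniformity is demanded and the problem becomes trivially easy.
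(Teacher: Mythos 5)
Your proof is correct and has the same skeleton as the paper's: both reduce the statement to the core claim that for each \emph{fixed} $\Delta_0$-formula $\varphi(x)$ the tally problem $\{1^n\mid \N\models\varphi(n)\}$ is in $\AC^0$ (the paper states this reduction in one line and leaves all of the slice/pairing bookkeeping you spell out implicit). The difference is how the core claim is justified: the paper notes that $Q:=\{x\in\{0,1\}^*\mid\N\models\varphi(\num(x))\}$ lies in $\LINH$ and invokes Proposition~\ref{prop:ag}~(iii) to get $\un(Q)\in\AC^0$, whereas you go through Lemma~\ref{lem:fcttorel} to obtain an $L^\rel_\arit$-sentence $\psi$ with $\N\models\varphi(n)\iff n\models\psi$ and then apply Theorem~\ref{thm:BIS}. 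These routes are close cousins (both rest on the $\AC^0$/$\FO$-with-arithmetic/$\LINH$ correspondence); yours stays entirely inside the descriptive machinery already used in Section~4, while the paper's is shorter because Proposition~\ref{prop:ag} absorbs the unary/binary translation. Two small points of care in your write-up, neither a real gap: applying Theorem~\ref{thm:BIS} to $\psi$ alone gives the $\AC^0$ set of \emph{all} strings whose length $n$ satisfies $n\models\psi$, not just the tally ones, so one should conjoin, say, $\forall y\,\ONE(y)$ (or rely on the shape checks you perform anyway); and when assembling the slice circuit, the inner tally circuit works on length $n$ while the instance $\angle{1^n,\varphi_i}$ has length about $2n+2|\varphi_i|+2$, so ``taking the disjunction'' really means evaluating $\psi_i$ relativized to the definable initial segment $[n]$ of $\str S(\angle{1^n,\varphi_i})$ (or instantiating the inner circuit at the length $n$ computed from the input length), which is routine and no more than the paper itself leaves unsaid.
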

\proof It suffices to show that for every $\Delta_0$-formula $\varphi(x)$
the problem $\{1^n\mid\N\models\varphi(n)\}$
belongs to~$\AC^0$. But this problem is $\un(Q)$ for $Q:=\{x\in\{0,1\}^*\mid \N\models\varphi(\num(x))\}$. Clearly  $Q\in\LINH$, so
$\un(Q)\in\AC^0$ follows from Proposition~\ref{prop:ag}.
\proofend

We show that it is likely difficult to improve
Theorem~\ref{thm:truthAC}
 to $\ptruth
\not\in\bigcup_{d\in\N}\XAC d$. This somewhat artificial class also exhibits
the different behaviors of the parameterized problems $\phalt$,
$\phalt_=$, and $\ptruth$.

\begin{theo}\label{thm:xac}\
\begin{enumerate}\itemsep=0pt
\item[(i)] $\phalt\in \XAC 2$.

\item[(ii)] $\phalt_=\in \XAC d$ for some $d\in\N$ if and only if
    $\NE\subseteq \LINH$.

\item[(iii)] $\ptruth\in \XAC d$ for some $d\in\N$ if and only if $\LINH$
    collapses.
\end{enumerate}
\end{theo}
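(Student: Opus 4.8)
The plan is to handle the three parts separately, reusing the machinery of Sections~\ref{sec:halt} and~\ref{sec:truth}.

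For (i), the point is that the nonuniform $\AC^0$ circuit family $F_{n,k}$ from the introduction becomes a genuine (dlogtime uniform) $\AC^0$ family once $k$ is fixed: for fixed $k$, the finite list $\mathbb M_{k,0},\ldots,\mathbb M_{k,\ell_k-1}$ together with the (possibly infinite) values $n_{k,i}$ is a finite amount of data hard-wired into the circuit, and the resulting family $(F_{n,k})_{n\in\N}$ has depth $2$ and polynomial size, with a trivial dlogtime direct-connection language since the circuit structure does not even depend on $n$ beyond the number of input gates. Hence every slice of $\phalt$ is in $\CAC 2$, i.e.\ $\phalt\in\XAC 2$.

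For (ii), one direction is immediate: if $\NE\subseteq\LINH$, then $\phalt_=\in\paraAC$ by Theorem~\ref{thm:phaltequal}~(i), and $\paraAC\subseteq\bigcup_d\XAC d$ by~\eqref{eq:xac}. For the converse, suppose $\phalt_=\in\XAC d$. Inspecting the proof of Theorem~\ref{thm:phaltequal}~(i) (see Remark~\ref{rem:xac0}), the left-to-right direction there only used an $\AC^0$ circuit family for the single slice of $\phalt_=$ corresponding to the fixed machine $\mathbb M^*$; that slice is a slice of $\phalt_=$, hence lies in $\CAC d\subseteq\AC^0$. The rest of that argument — the interpretation $\inter I$ with $\str S(1^n)^{\inter I}\cong\str S(1^{n^c})$ and the appeal to Theorem~\ref{thm:BIS} — goes through verbatim, giving $\un(Q)\in\AC^0$ for an arbitrary $Q\in\NE$, hence $\NE\subseteq\LINH$ by Proposition~\ref{prop:ag}~(iii).

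For (iii), "$\LINH$ collapses" means $\LINH=\Sigma_j^{\mathrm{lin}}$ for some finite level $j$, equivalently (via Proposition~\ref{prop:ag}~(iii) and the descriptive characterization of the linear-time hierarchy levels by the quantifier-prefix hierarchy over $\str S(x)$) there is a fixed quantifier-depth bound for defining $\LINH$-problems. For the forward direction, if $\ptruth\in\XAC d$ then every slice $\{1^n\mid\N\models\varphi(n)\}$ is in $\CAC d$, hence by the Furst–Saxe–Sipser-style descriptive characterization of bounded-depth dlogtime-uniform $\AC^0$ (Theorem~\ref{thm:BIS} refined to depth $d$) is defined by an $L^\rel_\arit\cup\{\ONE\}$-sentence of quantifier depth $O(d)$ over $\str S(1^n)\cong n$; running the truth-undefinability argument of Theorem~\ref{thm:truthAC} but now \emph{inside} the standard model with this uniform quantifier-depth bound produces a partial truth definition for all $\Delta_0$-sentences of bounded complexity, which by a counting/hierarchy argument forces $\LINH$ to collapse. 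For the converse, if $\LINH$ collapses then in particular $\ptruth\in\paraAC$: the eventually-definable circuit family for $\ptruth$ from Proposition~\ref{prop:paraAC}~(iii) can be taken of bounded depth because the defining sentence has bounded quantifier depth, so $\ptruth\in\XAC d$ for a suitable $d$; alternatively, collapse of $\LINH$ makes $\ptruth$ a tally-style $\LINH$-problem of bounded descriptive complexity directly.

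The main obstacle I expect is part (iii), and specifically pinning down the exact sense of "$\LINH$ collapses" that makes both implications clean — the forward direction needs the collapse to be read off from a uniform bound on the quantifier depth (equivalently, number of alternations) required to define slices of $\ptruth$, and one must check that the diagonalization of Theorem~\ref{thm:truthAC}, which was carried out in a nonstandard elementary extension $M$, can be replayed with the bounded-depth sentence \emph{true}$(x)$ so as to yield a genuine finitary truth predicate for $\Delta_0$-sentences up to a fixed complexity; the conflict with Tarski undefinability then has to be converted into the statement that no such fixed level suffices unless the hierarchy it sits in is already finite — i.e.\ unless $\LINH$ collapses. Getting the bookkeeping of "complexity of $\varphi$" versus "depth $d$" to line up is the delicate point; everything else is a routine adaptation of arguments already in the paper.
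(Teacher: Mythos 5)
Your parts (i) and (ii) are correct and are essentially the paper's own argument: (i) is the hard-wired, depth-$2$ family $F_{n,k}$ from the introduction, and (ii) combines Theorem~\ref{thm:phaltequal}~(i) with \eqref{eq:xac} and the observation of Remark~\ref{rem:xac0} that the left-to-right direction of that proof only needs circuits for the fixed machine $\mathbb M^*$ (hard-wiring $\mathbb M^*$ into the slice is routine). The genuine problems are in (iii), in both directions. For ``$\LINH$ collapses $\Rightarrow$ $\ptruth\in\XAC d$'' your main route asserts that the collapse gives $\ptruth\in\paraAC$; this is false, and in fact contradicts the paper's own unconditional Theorem~\ref{thm:truthAC} -- Proposition~\ref{prop:paraAC}~(iii) supplies no eventually definable family for $\ptruth$ under any hypothesis. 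The missing ingredient is the result of Paris and Dimitracopoulos \cite{paris}: if $\LINH$ collapses, there is a \emph{single} formula $\lambda(x,y)$ such that for every $\Delta_0$-formula $\varphi(x)$ there are $c_\varphi,d_\varphi,e_\varphi$ with $\N\models\varphi(n)$ iff $n^{d_\varphi}\models\lambda(n,e_\varphi)$ for $n\ge c_\varphi$. This yields, for each slice, circuits whose depth is bounded by the alternation rank of $\lambda$ (independent of $\varphi$) but whose size $n^{f_\varphi}$ depends on $\varphi$ -- which is exactly why one lands in $\XAC d$ and not in $\paraAC$. Your fallback phrase that the collapse makes $\ptruth$ ``of bounded descriptive complexity directly'' skips precisely this uniform-truth-definition step.

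For the direction ``$\ptruth\in\XAC d\Rightarrow\LINH$ collapses'', the detour through the Tarski-undefinability argument of Theorem~\ref{thm:truthAC} cannot work and is not needed: in $\XAC d$ every slice has its own sentence/circuit family (only the depth is uniform, the size exponent $f_\varphi$ is not), so there is no single truth-defining formula to diagonalize against, and no contradiction should arise -- the intended conclusion is a collapse, not absurdity. Your sketch never supplies the step that converts bounded-depth decidability of every slice into a fixed-level upper bound for all of $\LINH$; a vague ``counting/hierarchy argument'' does not do it. The paper's argument is direct: given $Q\in\LINH$, take a $\Delta_0$-formula $\varphi$ with $x\in Q$ iff $\N\models\varphi(\num(x))$ (\cite[Ch.~V, Lemma~2.13]{hp}), fix that $\varphi$ as a slice of $\ptruth$, and simulate the resulting dlogtime-uniform, polynomial-size, depth-$d$ circuit $C_{\num(x)}$ on input $1^{\num(x)}$ by an alternating machine running in time $O(|x|)$ with $d$ alternations, guessing a path through the circuit gate by gate (gate names have $O(|x|)$ bits, and dlogtime uniformity lets one verify wires in time $O(|x|)$). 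This places every $\LINH$-problem at a fixed level, i.e., collapses $\LINH$. If you prefer your descriptive route via a depth-refined Theorem~\ref{thm:BIS}, you would still have to prove and use the correspondence between bounded quantifier alternation over the structures $n$ and fixed levels of $\LINH$ -- which is exactly the content of the machine simulation you omitted.
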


\proof  (i) has been observed in the introduction and (ii) follows from
Remark~\ref{rem:xac0} and \eqref{eq:xac}. To see (iii), assume $\LINH$ collapses. Paris and
Dimitracopolous~\cite[Proof of Proposition 4]{paris} showed that this implies
the following. There is an $L^\rel_\arit$-formula $\lambda(x,y)$ such that
for every $\Delta_0$-formula $\varphi(x)$ there are
$c_\varphi,d_\varphi,e_\varphi\in \N$ such that for all $n\ge c_\varphi$
\begin{eqnarray*}
\N\models \varphi(n)
 & \iff & n^{d_\varphi}\models\lambda(n,e_\varphi)
\end{eqnarray*}
For each fixed $\varphi$ there is an $\AC^0$-family that given $1^n$ decides
whether $n$ satisfies the the r.h.s.. The size of this family is bounded
$n^{f_\varphi}$ for some $f_\varphi\in\N$ depending on $\varphi$, but the
depth of this family  is determined by the quantifier alternation rank of
$\lambda$ and, in particular, does not depend on $\varphi$. This implies
$\ptruth\in \XAC d$ for some $d\in\N$.

Conversely, assume $\ptruth\in \XAC d$ and let $Q\in\LINH$. It is well
known (see e.g.\ \cite[Ch.V, Lemma 2.13]{hp}) that
there is a $\Delta_0$-formula that is satisfied by $\num(x)$ if and only if
$x\in Q$. Fixing this formula in the input to $\ptruth$, the assumption
implies that there is a dlogtime uniform circuit family $(C_{n})_n$ of
polynomial size and depth $d$ such that for all  $x\in\{0,1\}^*$:
\begin{eqnarray*}
x\in Q & \iff & C_{\num(x)}(1^{\num(x)})=1.
\end{eqnarray*}

It suffices to show that, given $x$, the r.h.s. can be checked by an
alternating machine in linear time with $d$ alternations. This is
straightforward by guessing a path through $C_{\num(x)}$. E.g., if the output
gate is a $\vee$-gate, the machine existentially guesses an input gate $g_1$
to it, and if it is a $\wedge$-gate it universally guesses $g_1$. Depending
on the type of $g_1$ it either existentially or universally guesses an input
gate $g_2$ to $g_1$, and so on. When reaching (with $g_{d-1}$ or earlier) an
input gate or a negation thereof, the machine checks  it is satisfied by the
corresponding bit of $x$. Each guess requires~$O(|x|)$ bits. Checking that
e.g.~$g_2$ is an input to $g_1$ can be done in time logarithmic in the size
of~$C_{\num(x)}$, that is, in time~$O(|x|)$. We omit further
details.\proofend

\subsection{Reducibilities}
In this section we draw some corollaries  concerning how our problems
$\phalt,\phalt_=$ and $\ptruth$ compare with respect to our notion of
reducibility. Saying that a (parameterized) problem is {\em reducible} to
another means that there is an eventually definable reduction. Two problems
are {\em equivalent} if they are reducible to one another.

The picture is as follows: an arrow indicates reducibility, $\equiv$ means
equivalence.
\[
\begin{array}{ccc}
&\pspec&\\
\hspace*{7ex}\nearrow&&\hspace{-11ex}\nwarrow\\
\phalt_=&\not\equiv&\ptruth\\
\uparrow&&\\
\phalt&&
\end{array}\]
In particular, we show unconditionally that $\phalt_=$ and $\ptruth$ are  not
equivalent and both are reducible to yet another almost tally problem of
central importance to mathematical logic, namely the following parameterized
version of the spectrum problem:
\npprob{$\pspec$}{$n\in \mathbb N$ in unary and a first-order sentence $\varphi$}
{$|\varphi|$}{Does $\varphi$ have a model of size $n$?}
Recall that having a model of size $n$ means that $n$ belongs to the spectrum
of $\varphi$.

\medskip
We start comparing $\phalt$ and $\phalt_=$. Clearly, $\phalt$ is reducible to
$\phalt_=$. Concerning the converse we note that Theorem~\ref{thm:xac}~(i),
(ii) implies:

\begin{cor}
If $\phalt_=$ is reducible to $\phalt$, then $\NE\subseteq\LINH$.
\end{cor}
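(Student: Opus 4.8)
The plan is to chain together three facts already established in the excerpt. First, by Theorem~\ref{thm:xac}~(i) we have $\phalt\in\XAC 2$, hence $\phalt\in\XAC d$ for the fixed constant $d=2$. Second, if there is an eventually definable reduction from $\phalt_=$ to $\phalt$, then this reducibility should preserve membership in $\bigcup_{d}\XAC d$ in a way analogous to how Lemma~\ref{lem:ACclosed} shows it preserves $\paraAC$: an eventually definable reduction only changes the string structure by a fixed interpretation (on inputs whose length exceeds a computable function of the parameter), and composing a bounded-depth circuit family with a fixed first-order interpretation raises the depth by at most a constant depending only on the interpretation and $\varphi$, while the ``eventually'' part only affects finitely many short inputs per slice, which can be hard-wired into each slice circuit without changing the depth bound. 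So from $\phalt\in\XAC 2$ we would conclude $\phalt_=\in\XAC{d'}$ for some constant $d'$. Third, Theorem~\ref{thm:xac}~(ii) says $\phalt_=\in\XAC{d'}$ for some $d'\in\N$ is equivalent to $\NE\subseteq\LINH$, which is exactly the desired conclusion.

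Concretely I would argue as follows. Suppose $r$ is an eventually definable reduction from $\phalt_=$ to $\phalt$, witnessed by an interpretation $\inter I$ and a computable $h$ so that $\str S(x)^{\inter I}\cong\str S(r(x))$ whenever $|x|\ge h(\kappa(x))$. Fix a parameter value $k$ and consider the $k$-th slice of $\phalt_=$. For inputs $x$ in this slice with $|x|\ge h(k)$, membership of $x$ in $\phalt_=$ equals membership of $r(x)$ in $\phalt$, which by Theorem~\ref{thm:xac}~(i) and Theorem~\ref{thm:BIS} is decided by checking $\str S(r(x))\models\psi$ for a fixed $L^\rel_\arit\cup\{\ONE\}$-sentence $\psi$ of some fixed quantifier-alternation structure; equivalently, $\str S(x)\models\psi^{\inter I}$ by Lemma~\ref{lem:interpretation}~(i). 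The sentence $\psi^{\inter I}$ is fixed (independent of $k$), so by Theorem~\ref{thm:BIS} the set $\{x\mid\str S(x)\models\psi^{\inter I}\}$ is in $\AC^0$, and tracking depths, it lies in $\CAC{d'}$ for a constant $d'$ determined by the alternation rank of $\psi^{\inter I}$. The finitely many inputs $x$ in the slice with $|x|<h(k)$ are a finite set and can be decided by a depth-$2$ (indeed depth-$1$ after padding) sub-circuit, taking an $\vee$ over them; combining with an $\vee$ at the top keeps us in $\CAC{\max\{d',2\}+O(1)}$. Hence every slice of $\phalt_=$ is in $\CAC{d''}$ for a single constant $d''$, i.e.\ $\phalt_=\in\XAC{d''}$.

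Then Theorem~\ref{thm:xac}~(ii) immediately yields $\NE\subseteq\LINH$, completing the proof.

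The main obstacle is the bookkeeping in the middle paragraph: one must be careful that the depth of the circuit family for a slice of $\phalt_=$ is bounded by a constant \emph{uniform in $k$}. This is where it matters that (a) the interpretation $\inter I$ and the sentence $\psi$ witnessing $\phalt\in\XAC 2$ are both fixed once and for all, so $\psi^{\inter I}$ has bounded alternation rank, and (b) patching the finitely many short inputs per slice does not blow up the depth. A secondary subtlety is the honesty requirement built into the definition of eventually definable reduction, which guarantees $|r(x)|\ge|x|^{\Omega(1)}$ and hence that polynomial size in $|r(x)|$ translates to polynomial size in $|x|$; without honesty one could not conclude the slice circuits have polynomial size. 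Both points are handled exactly as in the proofs of Lemma~\ref{lem:transitive} and Lemma~\ref{lem:ACclosed}, so the argument is routine once set up carefully.
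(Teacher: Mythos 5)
Your overall route is the one the paper intends: the paper derives this corollary directly from Theorem~\ref{thm:xac}~(i),(ii), via the observation that an eventually definable reduction turns slicewise bounded-depth circuits for $\phalt$ into slicewise bounded-depth circuits for $\phalt_=$. However, there is a genuine gap in your middle step. You claim that, by Theorem~\ref{thm:xac}~(i) and Theorem~\ref{thm:BIS}, membership of $r(x)$ in $\phalt$ is decided by checking $\str S(r(x))\models\psi$ for a \emph{single} sentence $\psi$, ``independent of $k$''. Theorem~\ref{thm:xac}~(i) gives no such sentence: $\phalt\in\XAC 2$ only says that \emph{each slice} $\{\angle{1^n,\mathbb M}\in\phalt\mid |\mathbb M|=j\}$ separately has a dlogtime uniform depth-$2$ family (equivalently, by Theorem~\ref{thm:BIS}, its own sentence $\psi_j$); these families are not uniform in $j$, since the thresholds $n_{j,i}$ from the introduction are not computable. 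A single sentence deciding all of $\phalt$ on string structures would mean the underlying classical problem of $\phalt$ is in $\AC^0$, hence $\phalt\in\paraAC$ --- the negation of Conjecture~\ref{conj:phaltAC}, which is certainly not available. Consequently your depth bookkeeping, which rests on ``the alternation rank of $\psi^{\inter I}$'' for this one $\psi$, does not go through as written.

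The repair is the missing idea: since $r$ is a parameterized reduction, $\kappa'(r(x))\le f(k)$ on the $k$-th slice of $\phalt_=$, so only the finitely many slices $j\le f(k)$ of $\phalt$ are relevant. Decide $r(x)\in\phalt$ by a disjunction over $j\le f(k)$ of ``the machine part of $r(x)$ has size $j$, and the slice-$j$ circuit accepts $r(x)$'', with the bits of $r(x)$ obtained from $\str S(x)$ via the fixed interpretation $\inter I$. The resulting circuit for the $k$-th slice of $\phalt_=$ depends on $k$, but its depth is bounded by a constant independent of $k$ precisely because \emph{all} slices of $\phalt$ have depth $2$ --- this is where the explicit ``$2$'' in Theorem~\ref{thm:xac}~(i) is used; alternatively one can settle for $\phalt_=\in\XAC{}$ and invoke Remark~\ref{rem:xac0} instead of Theorem~\ref{thm:xac}~(ii). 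A secondary inaccuracy: honesty of $r$ is not what makes the slice circuits polynomial size; for that one needs $|r(x)|\le|x|^{O(1)}$, which follows from the fixed width of $\inter I$, whereas honesty gives the opposite inequality and is only needed when composing ``eventually'' conditions as in Lemma~\ref{lem:transitive}.
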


Adapting a mode of speech from~\cite{cfslice}, call an almost tally problem
$(Q,\kappa)$ {\em slicewise monotone} if $(1^n,x)\in Q$ implies $(1^m,x)\in
Q$ for all $x\in\{0,1\}^*$ and all $n,m\in\N$ with $n<m$.
One can show that $\phalt$ is the hardest such problem in $\para\NP$. This is
an easy modification of the proof Lemma~\ref{lem:phaltalmtally} and
strengthens \cite[Proposition~11]{cfslice}:

\begin{cor} Every almost tally problem in $\para\NP$ that is slicewise monotone is reducible to~$\phalt$.
\end{cor}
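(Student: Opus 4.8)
The plan is to adapt the proof of Lemma~\ref{lem:phaltalmtally} almost verbatim, the only difference being that the machine $\mathbb{M}_x$ we build now needs to be \emph{deterministic} on each nondeterministic branch to make its running time depend monotonically on the guessed length, or rather, that we must collapse the equality-test into an at-most test. Concretely, let $(Q,\kappa)\in\para\NP$ be almost tally and slicewise monotone, and as before reparameterize so that $\kappa(\angle{1^n,x})=|x|$ and fix a nondeterministic machine $\mathbb M$ accepting $Q$ in time $f(k)\cdot n^c$ on input $\angle{1^n,x}$, where $k=|x|$.

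First I would, for each $x$ with $k:=|x|$, define a nondeterministic machine $\mathbb N_x$ which on the empty input nondeterministically writes $\angle{1^m,x}$ for some $m\in\N$ (in exactly $2m+2k+4$ steps, as in the original proof), then simulates $\mathbb M$ on $\angle{1^m,x}$, rejecting if $\mathbb M$ does not halt or rejects, and \emph{accepting immediately} once $\mathbb M$ accepts --- without any dummy-step padding. Then for $m>f(k)$ the machine $\mathbb N_x$ has an accepting run of length at most $g(m,k):=2m+2k+4+f(k)\cdot m^c$, and moreover, because $Q$ is slicewise monotone, $\angle{1^n,x}\in Q$ iff there is some $m\le n$ with $\angle{1^m,x}\in Q$ iff $\mathbb N_x$ accepts the empty input in at most $g(n,k)$ steps, provided $n>f(k)$. (Here one needs that the bound $g(m,k)$ is nondecreasing in $m$, which it is, and that an accepting run for parameter $m$ yields one of length $\le g(n,k)$ for every $n\ge m$ with $n>f(k)$; padding $\mathbb N_x$ with a few harmless moves if necessary ensures the short runs for small $m$ do not spuriously certify membership.) Thus $\angle{1^n,x}\in Q\iff \angle{1^{g(n,k)},\mathbb N_x}\in\phalt$ whenever $n>f(k)$.

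Next I would check that the reduction $\angle{1^n,x}\mapsto\angle{1^{g(n,k)},\mathbb N_x}$ (for $n>f(k)$, with other inputs sent to a fixed non-instance) is a parameterized, honest, and $\kappa$-eventually definable reduction, exactly as in the proof of Lemma~\ref{lem:phaltalmtally}: it is honest since $g(n,k)\ge n$; it is parameterized since $|\mathbb N_x|$ is bounded by a computable function of $k$; and eventual definability follows by choosing $h$ and $d$ so that $n\ge h(k)$ forces $n^d>g(n,k),\num(x),\num(\mathbb N_x)$, observing via Lemma~\ref{lem:karyinterpretation} that $\str S(\angle{1^n,x})$ interprets $(n^d,k,\num(x))$, and then using that $g(n,k)$ is clearly definable there while $\num(\mathbb N_x)$ is definable there via Remark~\ref{rem:fxy} applied to the computable map $\num(x)\mapsto\num(\mathbb N_x)$ together with a fast-enough $h$ bounding the existential witnesses. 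Hence the reduction is an eventually definable reduction, which is what ``reducible to $\phalt$'' means (Definition~\ref{defn:evtdef}).

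The only genuinely non-routine point --- the ``hard part'' --- is getting the reduction to $\phalt$ (the \emph{at-most} version) rather than to $\phalt_=$: this is exactly where slicewise monotonicity is used, and one must be slightly careful that padding $\mathbb N_x$ so that the short accepting runs for small guesses $m$ don't falsely certify $\angle{1^n,x}\in Q$ for $n\le f(k)$ (where the correspondence can fail) is harmless, since eventual definability only constrains behaviour for $n\ge h(k)>f(k)$. Everything else is a word-for-word transcription of the proof of Lemma~\ref{lem:phaltalmtally}, so I would keep the write-up correspondingly brief, pointing the reader to that proof for the interpretation-theoretic details.
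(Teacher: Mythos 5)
There is a genuine gap, and it is exactly at the point where you deviate from the machine of Lemma~\ref{lem:phaltalmtally}: you drop the dummy-step padding and let $\mathbb N_x$ accept \emph{immediately} when $\mathbb M$ accepts. The backward direction of your chain ``$\mathbb N_x$ accepts the empty input in at most $g(n,k)$ steps $\Rightarrow$ $\angle{1^m,x}\in Q$ for some $m\le n$'' then fails. An accepting run of $\mathbb N_x$ only certifies $\angle{1^m,x}\in Q$ for whatever $m$ was guessed on that run, and nothing forces $m\le n$: the bound $f(k)\cdot m^c$ is merely an \emph{upper} bound on $\mathbb M$'s running time, so a branch that guesses some $m$ with $n<m\lesssim n+f(k)n^c/2$ and on which $\mathbb M$ happens to accept $\angle{1^m,x}$ quickly can have total length $2m+2k+4+o(f(k)n^c)\le g(n,k)$. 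Since $Q$ is monotone \emph{upward} in $n$, membership of $\angle{1^m,x}$ for $m>n$ says nothing about $\angle{1^n,x}$, so such a run is a spurious certificate. Your parenthetical remarks about padding address the wrong side: short runs for \emph{small} guesses $m\le n$ are harmless precisely by slicewise monotonicity; the danger is \emph{large} guesses $m>n$ with fast accepting runs, and ``a few harmless moves'' does not exclude them.

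The intended easy modification keeps the machine $\mathbb M_x$ of Lemma~\ref{lem:phaltalmtally} \emph{unchanged}, including the padding that makes every accepting run with guess $m>f(k)$ take exactly $g(m,k)+1$ steps, and only replaces the target problem: map $\angle{1^n,x}$ (for $n>f(k)$) to $\angle{1^{g(n,k)+1},\mathbb M_x}$ as an instance of $\phalt$ rather than $\phalt_=$. Then in the backward direction any accepting run within $g(n,k)+1$ steps either has guess $m>f(k)$, in which case its length $g(m,k)+1\le g(n,k)+1$ and monotonicity of $g(\cdot,k)$ force $m\le n$, or has guess $m\le f(k)<n$; in both cases $\angle{1^m,x}\in Q$ with $m\le n$, and slicewise monotonicity yields $\angle{1^n,x}\in Q$. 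With that correction the rest of your write-up (honesty, the parameter bound, and eventual definability via Lemma~\ref{lem:karyinterpretation} and Remark~\ref{rem:fxy}) goes through verbatim as in the paper.
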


We turn to  $\phalt_=$ and $\ptruth$.

\begin{cor} \
\begin{enumerate}\itemsep=0pt
\item[(i)] If $\ptruth$ is reducible to $\phalt_=$, then $\NE\not\subseteq\LINH$.
\item[(ii)]  If $\phalt_=$ is reducible to $\ptruth$, then $\NE\subseteq\LINH$.
\item[(iii)] $\ptruth$ and $\phalt_=$ are not equivalent.
\end{enumerate}
\end{cor}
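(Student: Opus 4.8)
The plan is to derive all three parts from two closure properties of the reducibility notion of this section together with facts already established. The relevant facts are: this reducibility preserves membership in $\paraAC$ (Lemma~\ref{lem:ACclosed}~(ii)); $\ptruth\notin\paraAC$ (Theorem~\ref{thm:truthAC}) whereas $\ptruth\in\XAC{}$ (Proposition~\ref{prop:truthxac}); and, for $\phalt_=$, the three statements $\phalt_=\in\paraAC$, $\phalt_=\in\XAC{}$ and $\NE\subseteq\LINH$ are equivalent (Theorem~\ref{thm:phaltequal}~(i) together with Remark~\ref{rem:xac0}). Part~(iii) will fall out of (i) and (ii).

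For (i), I would argue contrapositively. Suppose $\NE\subseteq\LINH$; then $\phalt_=\in\paraAC$ by Theorem~\ref{thm:phaltequal}~(i). If in addition $\ptruth$ were reducible to $\phalt_=$, then Lemma~\ref{lem:ACclosed}~(ii) would give $\ptruth\in\paraAC$, contradicting Theorem~\ref{thm:truthAC}. Hence under $\NE\subseteq\LINH$ there is no eventually definable reduction from $\ptruth$ to $\phalt_=$; equivalently, the existence of such a reduction forces $\NE\not\subseteq\LINH$. This is routine given the earlier results; no new idea is needed.

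For (ii) the $\paraAC$-route is unavailable, since $\ptruth\notin\paraAC$ carries no information about $\phalt_=$; instead I would run the analogous argument with $\XAC{}$ in place of $\paraAC$. The missing ingredient is that \emph{eventually definable reductions preserve $\XAC{}$}: if $(Q,\kappa)$ is reducible to $(Q',\kappa')$ and $(Q',\kappa')\in\XAC{}$, then $(Q,\kappa)\in\XAC{}$. I would prove this by fixing $k\in\N$ and examining the $k$-th slice $Q_k=\{x\in Q\mid\kappa(x)=k\}$. Let $\inter I$, $h$ witness that the reduction $r$ is $\kappa$-eventually definable and let $f$ witness $\kappa'\circ r\le f\circ\kappa$. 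For each $j\le f(k)$ the $j$-th slice of $Q'$ is in $\AC^0$, so by Theorem~\ref{thm:BIS} it is defined on inputs of length $\ge 2$ by an $L^\rel_\arit\cup\{\ONE\}$-sentence $\psi_j$. Then, for all $x$ whose length exceeds a ($k$-dependent) constant,
\[ x\in Q \iff r(x)\in Q' \iff \str S(r(x))\models\textstyle\bigvee_{j\le f(k)}\psi_j \iff \str S(x)\models\bigl(\textstyle\bigvee_{j\le f(k)}\psi_j\bigr)^{\inter I}, \]
where the second equivalence uses $\kappa'(r(x))\le f(k)$ (so only those finitely many slices of $Q'$ are relevant) and honesty of $r$ (so $|r(x)|\ge 2$), and the third uses $\str S(x)^{\inter I}\cong\str S(r(x))$ with Lemma~\ref{lem:interpretation}~(i). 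Thus $Q_k$ agrees with an $\AC^0$-definable set on all but finitely many input lengths, hence lies in $\AC^0$ by closure of $\AC^0$ under finite modification. With this in hand, (ii) is immediate: if $\phalt_=$ reduces to $\ptruth\in\XAC{}$ then $\phalt_=\in\XAC{}$, hence $\NE\subseteq\LINH$ by Remark~\ref{rem:xac0}.

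Finally, (iii) follows at once: if $\ptruth$ and $\phalt_=$ were equivalent, then $\ptruth$ reduces to $\phalt_=$ and $\phalt_=$ reduces to $\ptruth$, so (i) yields $\NE\not\subseteq\LINH$ while (ii) yields $\NE\subseteq\LINH$ --- a contradiction; hence they are inequivalent, unconditionally. The only step calling for any care is the $\XAC{}$-preservation claim, and the single point to notice there is that the bound $\kappa'(r(x))\le f(k)$ makes the disjunction over slices of $Q'$ finite; everything else is bookkeeping with results already in the excerpt, the decisive contrast being that $\paraAC$ and $\XAC{}$ separate $\ptruth$ but coincide (with $\NE\subseteq\LINH$) for $\phalt_=$.
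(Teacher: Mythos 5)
Your proof is correct, and it takes a partly different route from the paper's. For (i) the paper argues directly: since an eventually definable reduction is in particular a parameterized reduction, $\ptruth$ inherits $\para\NP$-membership from $\phalt_=$, and Theorem~\ref{thm:truthNP} then gives $\NE\not\subseteq\LINH$. You argue contrapositively inside $\paraAC$: under $\NE\subseteq\LINH$, Theorem~\ref{thm:phaltequal}~(i) puts $\phalt_=$ in $\paraAC$, so Lemma~\ref{lem:ACclosed}~(ii) would put $\ptruth$ in $\paraAC$, contradicting Theorem~\ref{thm:truthAC}. Both routes rest on the same pillars, but yours only needs the closure of $\paraAC$ under eventually definable reductions, whereas the paper's one-liner additionally uses that the reduction can be evaluated within $\para\NP$ resources. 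For (ii) your route is the paper's, except that the paper simply asserts $\phalt_=\in\XAC{}$ from Proposition~\ref{prop:truthxac} and closes with Remark~\ref{rem:xac0}, leaving implicit precisely the preservation lemma you prove; your argument for it is sound --- the bound $\kappa'(r(x))\le f(k)$ makes the disjunction over slices of $Q'$ finite, honesty and the interpretation $\inter I$ handle all sufficiently long inputs, and the finitely many remaining strings are patched by hand. One small point you should add: the formula $\bigl(\bigvee_{j\le f(k)}\psi_j\bigr)^{\inter I}$ may also accept strings whose parameter is not $k$, so to place the $k$-th slice of $Q$ in $\AC^0$ you must additionally check ``$\kappa(x)=k$'' in $\AC^0$; this is harmless for the corollary because the parameterization of $\phalt_=$ is $\AC^0$-computable (the standing assumption of Proposition~\ref{prop:paraAC} and Corollary~\ref{cor:evtfo}), but the general preservation claim needs that hypothesis. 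Part (iii) is derived exactly as in the paper.
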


\proof  (iii) follows from (i) and (ii).
For (i), assume $\ptruth$ is reducible to $\phalt_=$. Then $\ptruth\in\para\NP$ and
$\NE\not\subseteq\LINH$ follows by Theorem~\ref{thm:truthNP}.

For (ii), assume $\phalt_=$ is reducible to $\ptruth$. Then $\phalt_=\in\XAC{}$  by Proposition~\ref{prop:truthxac} and hence $\NE\subseteq\LINH$ by Remark~\ref{rem:xac0}.
\proofend

Finally, we turn to $\pspec$:

\begin{prop} Both $\phalt$ and $\ptruth$ are reducible to $\pspec$.
\end{prop}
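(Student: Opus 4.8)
The plan is to give, for each of $\phalt$ and $\ptruth$, a direct eventually definable reduction to $\pspec$. In both cases correctness comes from a classical ``coding by structures'' fact, and eventual definability is checked by the pattern already used for Example~\ref{exa:internum} and in the proof of Lemma~\ref{lem:phaltalmtally}: from the string structure $\str S(\langle 1^n,x\rangle)$ one first recovers $n$ and $\num(x)$, passes to the standard structure $n^d$ for a suitable constant $d$ via Lemma~\ref{lem:karyinterpretation}, defines the (computably obtained) numerical code of the target instance using \MRDP\ in the relational form of Remark~\ref{rem:fxy} and Corollary~\ref{cor:fxy}, and finally reassembles the target string structure with the help of $\BIT$ --- all this for $n$ above a computable bound in the parameter, chosen large enough that the relevant numbers lie in $[n^d]$ and the bounded \MRDP\ witnesses are correct.

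For $\phalt$: given a nondeterministic Turing machine $\mathbb M$, I would compute a first-order sentence $\varphi_{\mathbb M}$ over a finite language --- a binary symbol for a linear order, plus finitely many relation symbols (their number depending on $\mathbb M$) coding state, tape contents and head position at each ``time'' --- forcing its models to look like a run of $\mathbb M$ indexed by a discrete linear order with least and greatest element: the time-$0$ configuration is initial on the empty input; consecutive times are linked by a transition of $\mathbb M$, except that once $\mathbb M$ has halted the configuration is repeated unchanged; and at the greatest time $\mathbb M$ is in an accepting state. Then for every $m\ge 2$ the sentence $\varphi_{\mathbb M}$ has a model of size $m$ iff $\mathbb M$ accepts the empty input in at most $m-1$ steps, and $|\varphi_{\mathbb M}|\le f(|\mathbb M|)$ for some computable $f$. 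The map $\langle 1^n,\mathbb M\rangle\mapsto\langle 1^{n+1},\varphi_{\mathbb M}\rangle$ (defined somehow for $n<2$) is an honest parameterized reduction from $\phalt$ to $\pspec$, and since $\num(\varphi_{\mathbb M})$ is a computable function of $\num(\mathbb M)$ it is eventually definable exactly as in the proof of Lemma~\ref{lem:phaltalmtally}.

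For $\ptruth$: by Lemma~\ref{lem:fcttorel} a $\Delta_0$-formula $\varphi(x)$ has a relational companion sentence $\psi_\varphi$ with $\N\models\varphi(n)\iff n\models\psi_\varphi$ for all $n\ge 2$, where $n$ denotes the standard finite $L^\rel_\arit$-structure on $[n]$. I would then use the folklore fact that there is a fixed $L^\rel_\arit$-sentence $\sigma_{\mathrm{std}}$ such that $m\models\sigma_{\mathrm{std}}$ for every $m\ge 2$, while every model of $\sigma_{\mathrm{std}}$ of size $m\ge 2$ is isomorphic to $m$: one takes $\sigma_{\mathrm{std}}$ to say that $<$ is a discrete linear order with least element $0$, that $+(0,1,1)$ holds and $+(\cdot,1,\cdot)$ is the $<$-successor relation, that $+$ is functional with $+(x,0,z)\leftrightarrow z{=}x$ and, for $y\ne 0$, $+(x,y,z)\leftrightarrow\exists y'z'\big(+(y',1,y)\wedge +(x,y',z')\wedge +(z',1,z)\big)$, and similarly that $\times$ is functional with $\times(x,0,z)\leftrightarrow z{=}0$ and $\times(x,y,z)\leftrightarrow\exists y'z'\big(+(y',1,y)\wedge\times(x,y',z')\wedge +(z',x,z)\big)$ for $y\ne 0$; a finite induction along the (standard) order shows these biconditionals determine $+^A$ and $\times^A$ uniquely, leaving no spurious tuples. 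The reduction maps $\langle 1^n,\varphi\rangle$ to $\langle 1^n,\sigma_{\mathrm{std}}\wedge\psi_\varphi\rangle$ for $n\ge 2$: if $\N\models\varphi(n)$ then $n$ itself is a model of $\sigma_{\mathrm{std}}\wedge\psi_\varphi$ of size $n$, and conversely any size-$n$ model of $\sigma_{\mathrm{std}}\wedge\psi_\varphi$ is isomorphic to $n$ and hence satisfies $\psi_\varphi$, so $\N\models\varphi(n)$. Honesty ($\psi_\varphi$ is at least as long as $\varphi$ and the target formula contains it), the parameter bound $|\sigma_{\mathrm{std}}\wedge\psi_\varphi|\le f(|\varphi|)$, and eventual definability are checked exactly as before, $\num(\sigma_{\mathrm{std}}\wedge\psi_\varphi)$ being a computable function of $\num(\varphi)$; here there is no blow-up issue since the target differs from the source essentially by inserting the fixed string $\sigma_{\mathrm{std}}$ and rewriting $\varphi$ into $\psi_\varphi$.

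The bulk of the argument is routine bookkeeping of the eventual-definability pattern above. The one place that needs genuine (though classical) care is verifying that $\sigma_{\mathrm{std}}$ really pins the standard finite structures down up to isomorphism --- in particular that the recursive biconditionals admit no extra tuples even at the ``overflow'' boundary --- and, for $\phalt$, that the ``stutter after halting'' clause makes ``has a model of size $m$'' equivalent to halting \emph{within} $m-1$ steps rather than in \emph{exactly} that many.
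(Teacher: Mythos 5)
Your proposal is correct and follows essentially the same route as the paper: for $\ptruth$ the paper likewise passes through Lemma~\ref{lem:fcttorel} and conjoins the relational companion sentence with a sentence axiomatizing the standard finite $L^\rel_\arit$-structures (which you spell out in detail), and for $\phalt$ both arguments encode runs of $\mathbb M$ as finite models of a computed sentence $\varphi_{\mathbb M}$. The only cosmetic difference is that the paper's $\varphi_{\mathbb M}$ captures acceptance in \emph{exactly} $n$ steps and then uses the already-noted reduction $\phalt\to\phalt_=$, whereas you handle ``at most'' directly with the stuttering clause; both are fine.
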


\proof It is straightforward to compute from a nondeterministic Turing machine $\mathbb M$ a first-order sentence $\varphi_{\mathbb M}$ that has a model of size $n$ if and only if $\mathbb M$ accepts the empty input in exactly $n$ steps.

Concerning $\ptruth$, by Lemma~\ref{lem:fcttorel}, it suffices to show that $\pMCFOA$ is
reducible to $\pspec$: map an instance $(1^n,\varphi)$ of $\pMCFOA$ to $(1^n,\varphi\wedge\psi)$ where $\psi$ is
an $L^\rel_\arit$-sentence whose finite models are exactly those isomorphic to some standard finite $L^\rel_\arit$-structure.
\proofend

Observe $\pspec$ can be solved in nondeterministic time $n^{f(k)}$ for some computable $f:\N\to\N$ where $k:=|\varphi|$ is the parameter. Can the parameter be moved out of the exponent? We find in worthwhile to explicitly point out the following direct corollary of the previous proposition and Theorem~\ref{thm:truthNP}:

\begin{cor} If $\pspec\in\para\NP$, then $\NE\not\subseteq\LINH$.
\end{cor}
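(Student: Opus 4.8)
The plan is to read the corollary off by composing the preceding Proposition with Theorem~\ref{thm:truthNP}. First I would recall that the Proposition supplies an eventually definable (in particular polynomial-time) parameterized reduction from $\ptruth$ to $\pspec$. Unwinding its proof, this reduction is entirely concrete: via Lemma~\ref{lem:fcttorel} one passes from $\ptruth$ to $\pMCFOA$ by a map $\varphi(x)\mapsto\psi$ that is computable in time depending only on $|\varphi|$ and leaves the unary part $1^n$ untouched, and then one sends a $\pMCFOA$-instance $\angle{1^n,\psi}$ to the $\pspec$-instance $\angle{1^n,\psi\wedge\chi}$ for a single fixed $L^\rel_\arit$-sentence $\chi$ whose finite models are exactly the standard finite $L^\rel_\arit$-structures. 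Thus the unary component of the instance is preserved and its binary component grows only by a computable function of the parameter $k:=|\varphi|$.

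Next I would observe that $\para\NP$ is closed under such reductions, which is immediate from the definitions. Concretely, if $\pspec$ is decided in nondeterministic time $g(|\varphi'|)\cdot|x|^{O(1)}$ (with $\varphi'$ the sentence part of the instance $x$), then on an input $\angle{1^n,\varphi}$ of $\ptruth$ with $k:=|\varphi|$ one first computes the associated $\pspec$-instance $\angle{1^n,\psi\wedge\chi}$ in time bounded by a function of $k$ alone; its sentence part has size at most some computable $f(k)$, and its total length is at most $n+f(k)$. Running the $\pspec$-algorithm then takes nondeterministic time $g(f(k))\cdot(n+f(k))^{O(1)}=f'(k)\cdot n^{O(1)}$. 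Hence $\pspec\in\para\NP$ implies $\ptruth\in\para\NP$, and Theorem~\ref{thm:truthNP} yields $\NE\not\subseteq\LINH$.

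There is essentially no obstacle here; the only point worth a line of care is verifying that the reduction underlying the Proposition has the modest resource bounds needed to transport $\para\NP$-membership (polynomial time, with the parameter blown up only by a computable function), and this is immediate from its explicit description above. One could alternatively dispense with the bookkeeping altogether by invoking the general fact that membership in $\para\NP$ is preserved under parameterized reductions.
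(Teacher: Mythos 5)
Your argument is correct and is exactly the paper's intended route: the paper states this as a direct corollary of the proposition that $\ptruth$ is reducible to $\pspec$ together with Theorem~\ref{thm:truthNP}, i.e.\ precisely the composition you carry out. Your extra bookkeeping (the reduction preserves the unary part, is computable in time depending polynomially on $n$ and computably on the parameter, and blows up the parameter only computably) is just the verification that makes the transfer of $\para\NP$-membership legitimate, which the paper leaves implicit.
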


\bibliographystyle{plain}
\bibliography{MRDP}

\end{document}